\theoremstyle{plain}
\newtheorem{theorem}{Theorem}[section]
\newtheorem{lemma}[theorem]{Lemma}
\newtheorem{cor}[theorem]{Corollary}
\theoremstyle{definition}
\newtheorem{remark}[theorem]{Remark}
\numberwithin{equation}{section}
\newcommand{\volg}{1+ u^{2}| \nabla f |_{g}^{2}} 
\newcommand{\volbarg}{1-u^{2}| \overline{\nabla} f |_{\overline{g}}^{2}}  
\newcommand{\barna}{\overline{\nabla}} 
\newcommand{\tilna}{\widetilde{\nabla}} 
\newcommand{\bg}{\overline{g}}  
\newcommand{\tg}{\widetilde{g}} 
\newcommand{\by}{\overline{Y}}  
\newcommand{\byp}{Y^{\phi}}  
\newcommand{\bk}{\overline{k}} 
\newcommand{\be}{\overline{E}}
\newcommand{\bm}{\overline{B}}
\newcommand{\bF}{\overline{F}}
\begin{document}

\title[Deformations of Charged Axially Symmetric Initial Data] {Deformations of Charged Axially Symmetric Initial Data and the Mass-Angular Momentum-Charge Inequality}

\author[Cha]{Ye Sle Cha}
\address{Department of Mathematics\\
Stony Brook University\\
Stony Brook, NY 11794, USA}
\curraddr{Mathematical Institute\\
University of Oxford\\
Oxford, OX2 6GG, UK}
\email{ycha@math.sunysb.edu, cha@maths.ox.ac.uk}

\author[Khuri]{Marcus A. Khuri}
\address{Department of Mathematics\\
Stony Brook University\\
Stony Brook, NY 11794, USA}
\email{khuri@math.sunysb.edu}

\thanks{The second author acknowledges the support of
NSF Grants DMS-1007156 and DMS-1308753. This paper is also based upon work supported by NSF under
Grant No. 0932078 000, while the authors were in residence at
the Mathematical Sciences Research Institute in Berkeley, California, during
the fall of 2013.}

\begin{abstract}
We show how to reduce the general formulation of the mass-angular momentum-charge inequality, for axisymmetric initial data
of the Einstein-Maxwell equations, to the known maximal case whenever a geometrically motivated system of equations admits a solution. It is also shown that the same reduction argument applies to the basic inequality yielding a lower bound for the area of black holes in terms of mass, angular momentum, and charge. This extends previous work by the authors \cite{ChaKhuri}, in which the role of charge was omitted. Lastly, we improve upon the hypotheses required for the mass-angular momentum-charge inequality in the maximal case.

\end{abstract}
\maketitle

\section{Introduction}
\label{sec1} \setcounter{equation}{0}
\setcounter{section}{1}

Let $(M, g, k, E, B)$ be an initial data set for the Einstein-Maxwell equations. This consists of a 3-manifold $M$, Riemannian metric $g$, symmetric 2-tensor $k$ representing the
extrinsic curvature (second fundamental form) of the embedding into spacetime, and vector fields $E$ and $B$ representing the electric and magnetic fields, all of which satisfy the constraint equations
\begin{align}\label{1}
\begin{split}
16\pi\mu_{EM} &= R+(Tr_{g}k)^{2}-|k|_{g}^{2}-2(|E|_{g}^{2}+|B|_{g}^{2}),\\
8\pi J_{EM} &= div_{g}(k-(Tr_{g}k)g)+2E\times B,\\
div_{g} E & =div_{g} B=0.
\end{split}
\end{align}
Here $\mu_{EM}$ and $J_{EM}$ are the energy and momentum densities of the matter fields after the contributions from the electromagnetic field have been removed, $R$ is the scalar curvature of $g$, and $(E\times B)_{i}=\epsilon_{ijl}E^{j}B^{l}$ is the cross product with $\epsilon$ the volume form of $g$. Note that the last equation of \eqref{1} indicates the absence of charged matter.
The following inequality will be referred to as the charged dominant energy condition
\begin{equation}\label{2}
\mu_{EM}\geq|J_{EM}|_{g}.
\end{equation}

Suppose that $M$ has at least two ends, with one designated end being asymptotically flat, and the remainder being either asymptotically flat or asymptotically cylindrical. Recall that a domain $M_{\text{end}}\subset M$ is an
asymptotically flat end if it is diffeomorphic to $\mathbb{R}^{3}\setminus\text{Ball}$, and in the coordinates given by the asymptotic
diffeomorphism the following fall-off conditions hold
\begin{equation}\label{3}
g_{ij}=\delta_{ij}+o_{l}(r^{-\frac{1}{2}}),\text{ }\text{ }\text{ }\text{ }\partial g_{ij}\in L^{2}(M_{\text{end}}),\text{
}\text{ }\text{ }
\text{ }k_{ij}=O_{l-1}(r^{-\lambda}),\text{
}\text{ }\text{ }
\text{ }\lambda>\frac{5}{2},
\end{equation}
\begin{equation}\label{4}
E^{i}=O_{l-1}(r^{-2}),\text{ }\text{ }\text{ }\text{ }\text{ }B^{i}=O_{l-1}(r^{-2}),
\end{equation}
for some $l\geq 5$\footnote{The notation $f=o_{l}(r^{-a})$ asserts that $\lim_{r\rightarrow\infty}r^{a+n}\partial^{n}f=0$
for all $n\leq l$, and
$f=O_{l}(r^{-a})$ asserts that $r^{a+n}|\partial^{n}f|\leq C$ for all $n\leq l$. The assumption $l\geq 5$ is needed for
the results in \cite{Chrusciel}.}. In the context of the mass-angular momentum-charge inequality, these asymptotics may be weakened, see for example \cite{SchoenZhou}. The asymptotics for cylindrical ends is
most easily described in Brill coordinates, to be given in the next section.

We say that the initial data are axially symmetric if the group of isometries of the Riemannian manifold $(M,g)$ has a subgroup
isomorphic to $U(1)$, and that the remaining quantities defining the initial data are invariant under the $U(1)$ action. In particular, if $\eta$ denotes the Killing field associated with this
symmetry, then
\begin{equation}\label{5}
\mathfrak{L}_{\eta}g=\mathfrak{L}_{\eta}k=\mathfrak{L}_{\eta}E=\mathfrak{L}_{\eta}B=0,
\end{equation}
where $\mathfrak{L}_{\eta}$ denotes Lie differentiation.
If $M$ is simply connected and the data are axially symmetric, it is shown in \cite{Chrusciel} that the analysis reduces to the study
of manifolds of the form
$M\simeq\mathbb{R}^{3}\setminus\sum_{n=1}^{N}i_{n}$, where $i_{n}$ are points in $\mathbb{R}^{3}$ and represent asymptotic ends
(in total there are $N+1$ ends). Moreover there exists a global (cylindrical) Brill coordinate system on $M$, where the points $i_{n}$ all lie on the $z$-axis. Each point represents a black hole, and has the geometry of an asymptotically flat or cylindrical end.
The fall-off conditions in the designated asymptotically flat end guarantee that the ADM mass, ADM angular momentum, and total charges are well-defined by the following limits
\begin{equation}\label{6}
m=\frac{1}{16\pi}\int_{S_{\infty}}(g_{ij,i}-g_{ii,j})\nu^{j},
\end{equation}
\begin{equation}\label{7}
\mathcal{J}=\frac{1}{8\pi}\int_{S_{\infty}}(k_{ij}-(Tr_{g} k)g_{ij})\nu^{i}\eta^{j},
\end{equation}
\begin{equation}\label{8}
q_{e} = \frac1{4\pi} \int_{S_\infty} E_i \nu^i\, , \qquad
q_{b} = \frac1{4\pi} \int_{S_\infty} B_i \nu^i\, ,
\end{equation}
where $S_{\infty}$ indicates the limit as $r\rightarrow\infty$ of integrals over coordinate spheres $S_{r}$, with unit outer normal $\nu$. Here $q_{e}$ and $q_{b}$ denote the total electric and magnetic charge, respectively,
and we denote the square of the total charge by $q^{2}=q_{e}^{2}+q_{b}^{2}$. Note that \eqref{3} implies that the ADM linear momentum vanishes.

In the presence of an electromagnetic field, angular momentum is conserved \cite{Dain}, \cite{DainKhuriWeinsteinYamada} if
\begin{equation}\label{9}
J_{EM}^{i}\eta_{i}=0.
\end{equation}
It is also well-known that angular momentum is conserved if $J^{i}\eta_{i}=0$, where $J=J_{EM}-2E\times B$ is the full momentum density. However, this condition imposes constraints on the angular momentum density of the electromagnetic field, which suggests that it is an undesirable condition in the current setting.
Moreover, when $M$ is simply connected, \eqref{9} is a necessary and sufficient condition \cite{DainKhuriWeinsteinYamada} for the existence of a charged twist potential $\omega$:
\begin{equation}\label{10}
\epsilon_{ijl}(\pi^{ja}+2\gamma^{ja})\eta^{l}\eta_{a}dx^{i}=d\omega
\end{equation}
where
\begin{equation}\label{11}
\pi_{ja}=k_{ja}-(Tr_{g}k)g_{ja}, \qquad  \gamma_{ja}=\epsilon_{cba}E^{c}\epsilon_{j}^{\text{ }\!\text{ }pb}\vec{A}_{p},
\end{equation}
and $\vec{A}$ is a vector potential such that $B=\nabla\times\vec{A}$. However, note that
the topology of $M$ does not allow for a globally defined (smooth) vector potential. The typical construction which avoids this difficulty involves
removing a `Dirac string' associated with each point $i_n$. That is, removing from $M$ either the portion of the $z$-axis below or above $i_{n}$, to obtain
a ($U(1)$ invariant) potential $\vec{A}_{\pm}^{n}$, defined on the complement of the respective Dirac string. We then define
\begin{equation}\label{12}
\vec{A}=\frac{1}{2N}\sum_{n=1}^{N}(\vec{A}_{+}^{n}+\vec{A}_{-}^{n})\text{ }\text{ }\text{ }\text{ on }\text{ }\text{ }\text{ }\mathbb{R}^{3}\setminus\{z-\text{axis}\}.
\end{equation}
Although $\vec{A}$ is discontinuous on the $z$-axis, quantities of the form $\vec{A}\cdot\eta$ remain well behaved since $\eta$ vanishes on the $z$-axis. Nevertheless, the awkward definition makes working with $\vec{A}$ somewhat cumbersome. Fortunately, there is an alternate expression for the twist potential which avoids the use of $\vec{A}$ \cite{ChruscielCosta}, \cite{Costa}, but in general requires reference to an axisymmetric spacetime as opposed to the initial data alone. This is justified, since in electrovacuum the existence of an axisymmetric evolution of the initial data follows from its smoothness \cite{Choquet-Bruhat}, \cite{Chrusciel0}. For our purposes, however, reference to the spacetime can be avoided since the alternate expression may be computed solely from the initial data (see Lemma \ref{lemma5}, and Appendix B).

Let $I_{n}$ denote the
interval of the $z$-axis between $i_{n+1}$ and $i_{n}$, where $i_{0}=-\infty$ and $i_{N+1}=\infty$.
Then a basic formula (Appendix B) yields the angular momentum for each black hole
\begin{equation}\label{13}
\mathcal{J}_{n}=\frac{1}{4}(\omega|_{I_{n}}-\omega|_{I_{n-1}}),
\end{equation}
and conservation of angular momentum shows that the total angular momentum is given by
\begin{equation}\label{14}
\mathcal{J}=\sum_{n=1}^{N}\mathcal{J}_{n}.
\end{equation}

The same heuristic physical argument \cite{ChaKhuri}, \cite{Dain} motivating the mass-angular momentum inequality, leads to the conjectured mass-angular momentum-charge inequality
\begin{equation}\label{15}
m^2\geq\frac{q^{2}+\sqrt{q^{4}+4\mathcal{J}^{2}}}{2},
\end{equation}
whenever there is conservation of angular momentum and charge. In the current setting, such conservation is valid due to the assumptions of axisymmetry, \eqref{9}, and absence of charged matter \eqref{1}. This heuristic derivation is based on the standard picture of gravitational collapse \cite{Choquet-Bruhat}, \cite{ChruscielGallowayPollack}. Thus a counterexample to \eqref{15} would pose serious issues for this model, whereas a verification of \eqref{15} would only lend it, and in particular weak cosmic censorship, more credence. Inequality \eqref{15} has been established by Chru\'{s}ciel and Costa \cite{ChruscielCosta}, \cite{Costa} for electrovacuum initial data under the assumption of simple connectivity, axisymmetry, and maximality ($Tr_{g}k=0$). Under the same hypotheses, Schoen and Zhou \cite{SchoenZhou} relaxed the asymptotics and gave a simplified proof which produced a more detailed lower bound. Their argument also proved the expected rigidity statement, namely, that equality occurs if and only if the initial data arise as the $t=0$ slice of the extreme Kerr-Newman spacetime.

The purpose of this paper is to generalize the results of \cite{ChaKhuri} to include charge, and so the focus here is on the general case without the maximal hypothesis. We also weaken the electrovacuum assumption, and replace it with the charged dominant energy condition \eqref{2}, and \eqref{9}. The presumption of simple connectivity and axisymmetry will be maintained; it should be noted that such results are false without the assumption of axial symmetry \cite{HuangSchoenWang}. In the main result, we exhibit a reduction argument by which the general case is reduced to the maximal case, assuming that a canonical
system of elliptic PDEs possesses a solution. The procedure is modeled on previous reduction arguments that have been applied to other geometric inequalities such as the positive mass theorem and the Penrose inequality \cite{BrayKhuri1}, \cite{BrayKhuri2}, \cite{DisconziKhuri}, \cite{Jang}, \cite{Khuri}, \cite{Khuri1}, \cite{KhuriWeinstein}, \cite{SchoenYau}. Moreover, the primary equation is related to the Jang-type equations that appear in each of
these procedures. The end result yields a natural deformation of the initial data, in which the geometry relevant to the mass-angular momentum-charge inequality is preserved, while achieving the maximal condition.

This paper is organized as follows. In the next section we describe the deformation of $(M,g,k)$ in detail,
while in Section \ref{sec3} the deformation of $(E,B)$ will be constructed. In Section \ref{sec4}
the reduction argument is established and the case of equality is treated. Section \ref{sec5} contains two applications of the deformation procedure, one to the basic inequality yielding a lower bound for the area of black holes in terms of mass, angular momentum, and charge, and another to a weakening of the hypotheses typically assumed in the maximal case of the mass-angular momentum-charge inequality. Lastly three appendices are included to record several important but lengthy calculations.

\section{Deformation of the Metric and Second Fundamental Form}
\label{sec2} \setcounter{equation}{0}
\setcounter{section}{2}

Simple connectivity and axial symmetry imply \cite{Chrusciel} that $M\cong\mathbb{R}^{3}\setminus\sum_{n=1}^{N}i_{n}$, and that there exists a global (cylindrical) Brill coordinate system $(\rho,\phi,z)$ on $M$, where the points $i_{n}$ all lie on the $z$-axis,
and in which the Killing field is given by $\eta=\partial_{\phi}$. In these coordinates the metric takes a simple form
\begin{equation}\label{16}
g=e^{-2U+2\alpha}(d\rho^{2}+dz^{2})+\rho^{2}e^{-2U}(d\phi+A_{\rho}d\rho+A_{z}dz)^{2},
\end{equation}
where $\rho e^{-U}(d\phi+A_{\rho}d\rho+A_{z}dz)$ is the dual 1-form to $|\eta|^{-1}\eta$ and all coefficient functions are independent of $\phi$. Let $M_{end}^{0}$ denote the end associated with limit $r=\sqrt{\rho^{2}+z^{2}}\rightarrow\infty$.
In order to achieve the asymptotically flat fall-off conditions \eqref{3}, it will be assumed that
\begin{equation}\label{17}
U=o_{l-3}(r^{-\frac{1}{2}}),\text{ }\text{ }\text{ }\text{ }\alpha=o_{l-4}(r^{-\frac{1}{2}}),\text{ }\text{ }\text{ }\text{ }A_{\rho},A_{z}=o_{l-3}(r^{-\frac{3}{2}}).
\end{equation}
The remaining ends associated with the points $i_{n}$ will be denoted by $M_{end}^{n}$, and are associated with the limit $r_{n}\rightarrow 0$, where $r_{n}$ is the Euclidean distance to $i_{n}$. The asymptotics for
asymptotically flat and cylindrical ends are given, respectively, by
\begin{equation}\label{18}
U=2\log r_{n}+o_{l-4}(r_{n}^{\frac{1}{2}}),\text{ }\text{ }\text{ }\text{ }\alpha=o_{l-4}(r_{n}^{\frac{1}{2}}),\text{ }\text{ }\text{ }\text{ }A_{\rho},A_{z}=o_{l-3}(r_{n}^{\frac{3}{2}}),
\end{equation}
\begin{equation}\label{19}
U=\log r_{n}+o_{l-4}(r_{n}^{\frac{1}{2}}),\text{ }\text{ }\text{ }\text{ }\alpha=o_{l-4}(r_{n}^{\frac{1}{2}}),\text{ }\text{ }\text{ }\text{ }A_{\rho},A_{z}=o_{l-3}(r_{n}^{\frac{3}{2}}).
\end{equation}
Based on \eqref{3}, \eqref{4}, the corresponding asymptotics for the electromagnetic field and second fundamental form in Brill coordinates are
\begin{equation}\label{18.1}
E_{i}=O_{l}(1),\text{ }\text{ }\text{ }\text{ }E_{\phi}=O_{l}(r_{n}),\text{ }\text{ }\text{ }\text{ }
B_{i}=O_{l}(1),\text{ }\text{ }\text{ }\text{ }B_{\phi}=O_{l}(r_{n}),
\text{ }\text{ }\text{ }i=\rho,z,
\end{equation}
\begin{equation}\label{18.2}
E_{i}=O_{l}(r_{n}^{-1}),\text{ }\text{ }\text{ }\text{ }E_{\phi}=O_{l}(1),\text{ }\text{ }\text{ }\text{ }
B_{i}=O_{l}(r_{n}^{-1}),\text{ }\text{ }\text{ }\text{ }B_{\phi}=O_{l}(1),
\text{ }\text{ }\text{ }i=\rho,z,
\end{equation}
and
\begin{equation}\label{19.1}
k_{\rho\rho}, k_{\rho z}, k_{zz}=O_{l}(r_{n}^{\lambda-4}),\text{ }\text{ }\text{ }\text{ }k_{\rho\phi}, k_{z\phi}=O_{l}(r_{n}^{\lambda-3}),\text{ }\text{ }\text{ }\text{ }k_{\phi\phi}=O_{l}(r_{n}^{\lambda-2}),
\end{equation}
\begin{equation}\label{19.2}
k_{\rho\rho}, k_{\rho z}, k_{zz}=O_{l}(r_{n}^{\lambda-5}),\text{ }\text{ }\text{ }\text{ }k_{\rho\phi}, k_{z\phi}=O_{l}(r_{n}^{\lambda-4}),\text{ }\text{ }\text{ }\text{ }k_{\phi\phi}=O_{l}(r_{n}^{\lambda-3}).
\end{equation}
It will also be assumed that the charged dominant energy condition and the following equation, which is equivalent to \eqref{9}, are satisfied
\begin{equation}\label{20}
div_{g} k(\eta)+2E\times B(\eta)=0.
\end{equation}

We seek a deformation of the initial data $(M,g,k,E,B)\rightarrow(\overline{M},\overline{g},\overline{k},\overline{E},\overline{B})$ such that the manifolds are diffeomorphic $M\cong\overline{M}$, the geometry of the ends is preserved, and
\begin{equation}\label{21}
\overline{m}=m,\text{ }\text{ }\text{ }\text{ }\overline{\mathcal{J}}=\mathcal{J},\text{ }\text{ }\text{ }\text{ }Tr_{\overline{g}}\overline{k}=0,\text{ }\text{ }\text{ }\text{ }
\overline{R}\geq|\overline{k}|_{\overline{g}}^{2}
+2(|\overline{E}|_{\overline{g}}^{2}+|\overline{B}|^{2}_{\overline{g}})\text{ }\text{ weakly,}
\end{equation}
\begin{equation}\label{22}
div_{\overline{g}}\overline{E}=div_{\overline{g}}\overline{B}=0,\text{ }\text{ }\text{ }\text{ }\text{ }\overline{J}_{EM}(\eta)=0,\text{ }\text{ }\text{ }\text{ }\text{ }\overline{q}_{e}=q_{e},\text{ }\text{ }\text{ }\text{ }\text{ }\overline{q}_{b}=q_{b},
\end{equation}
where $\overline{m}$, $\overline{\mathcal{J}}$, $\overline{J}_{EM}$, $\overline{q}_{e}$, $\overline{q}_{b}$, and $\overline{R}$ are the mass, angular momentum, momentum density minus the electromagnetic contribution, charges, and scalar curvature of the new data. The inequality in \eqref{21} is said to hold `weakly' if it is valid when
integrated against an appropriate test function. The validity of this inequality plays a central role in the proof of the mass-angular momentum-charge inequality in the maximal case, and it is precisely the lack of this inequality
in the non-maximal case which prevents the proof from generalizing. Thus, one of the central goals of the deformation is to obtain such a lower bound for the scalar curvature, while preserving all other aspects of the geometry.

The deformation of the metric and second fundamental form is based on that of \cite{ChaKhuri}, and only requires minor modifications. We include the relevant details here, but refer to \cite{ChaKhuri} for the necessary proofs.
Using intuition from previous work \cite{BrayKhuri1}, \cite{BrayKhuri2}, \cite{SchoenYau} it is natural to search for a deformation in the form of a graph inside a stationary 4-manifold
\begin{equation}\label{23}
\overline{M}=\{t=f(x)\}\subset(M\times\mathbb{R}, g+2Y_{i}dx^{i}dt+\varphi dt^{2}),
\end{equation}
where the 1-form $Y=Y_{i}dx^{i}$ and functions $\varphi$ and $f$ are defined on $M$ and satisfy
\begin{equation}\label{24}
\mathfrak{L}_{\eta}f=\mathfrak{L}_{\eta}\varphi=\mathfrak{L}_{\eta}Y=0.
\end{equation}
Set
\begin{equation}\label{25}
\overline{g}_{ij}=g_{ij}+f_{i}Y_{j}+f_{j}Y_{i}+\varphi f_{i}f_{j},\text{ }\text{ }\text{ }\text{ }
\overline{k}_{ij}=\frac{1}{2u}\left(\overline{\nabla}_{i}Y_{j}+\overline{\nabla}_{j}Y_{i}\right),
\end{equation}
where $f_{i}=\partial_{i}f$, $\overline{\nabla}$ is the Levi-Civita connection with respect to $\overline{g}$, and
\begin{equation}\label{26}
u^{2}=\varphi+|Y|_{\overline{g}}^{2}.
\end{equation}
In the `Riemannian' setting \eqref{23}, $\overline{g}$ arises as the induced metric on the graph $\overline{M}$, and in the `Lorentzian' setting
\begin{equation}\label{27}
M=\{t=f(x)\}\subset(\overline{M}\times\mathbb{R}, \overline{g}-2Y_{i}dx^{i}dt-\varphi dt^{2}),
\end{equation}
the deformed data arise as the induced metric and second fundamental form of the $t=0$ slice. Observe that
\begin{equation}\label{28}
\partial_{t}=un-\overline{Y},
\end{equation}
where $n$ is the unit normal to the $t=0$ slice and $\overline{Y}$ is the vector field dual to $Y$ with respect to $\overline{g}$. It follows that $(u,-\overline{Y})$ are the lapse and shift of this stationary spacetime.

Considering the structure of the Kerr-Newman spacetime, we are led to make the following simplifying
assumption that $\overline{Y}$ has only one component
\begin{equation}\label{29}
\overline{Y}^{i}\partial_{i}:=\overline{g}^{ij}Y_{j}\partial_{i}=Y^{\phi}\partial_{\phi}.
\end{equation}
This basic hypothesis ensures that (see \cite{ChaKhuri}) $\overline{g}$ is a Riemannian metric, $Tr_{\overline{g}}\overline{k}=0$, $\varphi=u^{2}-g_{\phi\phi}(Y^{\phi})^{2}$, and if
$\{\overline{e}_{i}\}_{i=1}^{3}$ is an orthonormal frame for $\overline{g}$ with $\overline{e}_{3}=|\eta|^{-1}\eta$, then
\begin{equation}\label{29.1}
\overline{k}(\overline{e}_{i},\overline{e}_{j})=\overline{k}(\overline{e}_{3},\overline{e}_{3})=0,\text{ }\text{ }\text{ }\text{ }
\overline{k}(\overline{e}_{i},\overline{e}_{3})=\frac{|\eta|}{2u}\overline{e}_{i}(Y^{\phi}),
\text{ }\text{ }\text{ }\text{ }i,j\neq 3.
\end{equation}
The deformed data set is then maximal, satisfying one requirement of \eqref{21}. Moreover, this shows that $\varphi$ is determined by the functions
$u$ and $Y^{\phi}$. Hence, the three functions $(u,Y^{\phi},f)$ completely determine the new metric and second fundamental form, and will now be chosen to satisfy the remaining statements in \eqref{21}.
In order that the techniques from the maximal case remain applicable, the existence of a charged twist potential for $(\overline{M},\overline{g},\overline{k},\overline{E},\overline{B})$
is needed. Thus we require
\begin{equation}\label{30}
div_{\overline{g}}\overline{k}(\eta)+2\overline{E}\times\overline{B}(\eta)=0.
\end{equation}
This turns out to be a linear elliptic equation for $Y^{\phi}$ (if $u$ is independent of $Y^{\phi}$); this follows from Appendix B in \cite{ChaKhuri} and Section \ref{sec3} below. From \cite{ChaKhuri} we also deduce that the function $Y^{\phi}$ is uniquely determined
among bounded solutions of \eqref{30}, if the $r^{-3}$-fall-off rate is prescribed at $M_{end}^{0}$. Hence, we will choose the following boundary condition
\begin{equation}\label{31}
Y^{\phi}=-\frac{2\mathcal{J}}{r^{3}}+o_{2}(r^{-\frac{7}{2}})\text{ }\text{ }\text{ as }\text{ }\text{ }r\rightarrow\infty.
\end{equation}
As in Lemma 2.2 of \cite{ChaKhuri}, this guarantees that $\overline{\mathcal{J}}=\mathcal{J}$ if $\overline{g}$ is asymptotically flat and $u\rightarrow 1$ as $r\rightarrow\infty$.

Now let us show how to choose $f$. As with previous deformations associated with the positive mass theorem and Penrose inequality, $f$ is chosen to impart positivity properties to the
scalar curvature. In this regard, it is useful to calculate the scalar curvature for an arbitrary $f$.

\begin{theorem}\label{thm1}
Let $\widetilde{E}$ and $\widetilde{B}$ be arbitrary vector fields on $M$, such that $\widetilde{E}\times\widetilde{B}(\eta)=E\times B(\eta)$, and with the associated matter and momentum densities $\widetilde{\mu}_{EM}$ and $\widetilde{J}_{EM}$. Suppose that \eqref{5}, \eqref{20}, \eqref{24}, \eqref{29}, and \eqref{30} are satisfied, then the scalar curvature of $\overline{g}$ is given by
\begin{align}\label{32}
\begin{split}
\overline{R}-|\overline{k}|_{\overline{g}}^{2}-2(|\overline{E}|_{\overline{g}}^{2}+|\overline{B}|_{\overline{g}}^{2})
=& 16\pi(\widetilde{\mu}_{EM}-\widetilde{J}_{EM}(v))+|k-\pi|_{g}^{2}+2u^{-1}div_{\overline{g}}(uQ)\\
&+(Tr_{g}\pi)^{2}-(Tr_{g}k)^{2}+2v(Tr_{g}\pi-Tr_{g}k)\\
&+2(|\widetilde{E}|_{g}^{2}+|\widetilde{B}|_{g}^{2}+2\widetilde{E}\times \widetilde{B}(v)-|\overline{E}|_{\overline{g}}^{2}-|\overline{B}|_{\overline{g}}^{2})\\
&+4u^{-1}Y^{\phi}\left(\frac{\widetilde{E}\times \widetilde{B}(\eta)}{\sqrt{1+u^{2}|\nabla f|_{g}^{2}}}-\overline{E}\times
\overline{B}(\eta)\right),
\end{split}
\end{align}
where
\begin{equation}\label{33}
\pi_{ij}=\frac{u\nabla_{ij}f+u_{i}f_{j}+u_{j}f_{i}+\frac{1}{2u}(g_{i\phi}Y^{\phi}_{,j}+g_{j\phi}Y^{\phi}_{,i})}{\sqrt{1+u^{2}|\nabla f|_{g}^{2}}}
\end{equation}
is the second fundamental form of the graph $M$ in the Lorentzian setting,
\begin{equation}\label{34}
v^{i}=\frac{uf^{i}}{\sqrt{1+u^{2}|\nabla f|_{g}^{2}}},\text{ }\text{ }\text{ }\text{ }w^{i}=\frac{uf^{i}+u^{-1}\overline{Y}^{i}}{\sqrt{1+u^{2}|\nabla f|_{g}^{2}}},
\end{equation}
\begin{equation}\label{35}
Q_{i}=\overline{Y}^{j}\overline{\nabla}_{ij}f-u\overline{g}^{jl}f_{l}\overline{k}_{ij}+w^{j}(k-\pi)_{ij}+uf_{i}w^{l}w^{j}(k-\pi)_{lj}\sqrt{1+u^{2}|\nabla f|_{g}^{2}},
\end{equation}
and $f_{i}=\partial_{i}f$, $f^{i}=g^{ij}f_{j}$. Furthermore, if $Y\equiv 0$ then the same conclusion holds without any of the listed hypotheses.
\end{theorem}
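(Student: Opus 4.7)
The plan is to start from the uncharged analogue of \eqref{32} established as Theorem 2.1 of \cite{ChaKhuri}. That identity reads
$\overline{R}-|\overline{k}|_{\bg}^{2}=16\pi(\mu-J(v))+|k-\pi|_{g}^{2}+2u^{-1}\bdiv(uQ)+(\text{Tr}_g\pi)^2-(\text{Tr}_g k)^2+2v(\text{Tr}_g\pi-\text{Tr}_g k)$, where $\mu, J$ are the ``raw'' densities $16\pi\mu=R+(\text{Tr}_gk)^2-|k|_g^2$ and $8\pi J=\text{div}_g(k-(\text{Tr}_gk)g)$. It is established by interpreting $\overline{M}$ as the $t=0$ slice of the Lorentzian stationary spacetime \eqref{27} with lapse $u$ and shift $-\overline{Y}$, while $M$ sits in the same spacetime as the $t=f$ graph with induced data $(g,\pi)$. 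Applying the Hamiltonian constraint on each slice relates its intrinsic scalar curvature to the ambient Einstein tensor evaluated on the normal; subtracting the two expressions eliminates the ambient 4-dimensional geometry, while cross-terms between $k$ and $\pi$ reorganize into the total divergence $\bdiv(uQ)$ by a Schoen--Yau type identity. That whole computation involves only $g,k,u,Y^\phi$ and $f$ and is insensitive to the electromagnetic fields, so I would import the uncharged identity verbatim.

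The charged corrections are then inserted by hand. Using $16\pi\mu=16\pi\widetilde{\mu}_{EM}+2(|\widetilde{E}|_g^2+|\widetilde{B}|_g^2)$, together with the corresponding relation between $J(v)$ and $\widetilde{J}_{EM}(v)$ (linear in $\widetilde{E}\times\widetilde{B}(v)$) coming from the charged momentum constraint, one rewrites
$16\pi(\mu-J(v))=16\pi(\widetilde{\mu}_{EM}-\widetilde{J}_{EM}(v))+2(|\widetilde{E}|_g^2+|\widetilde{B}|_g^2)+4\widetilde{E}\times\widetilde{B}(v)$.
Moving $-2(|\overline{E}|_{\bg}^2+|\overline{B}|_{\bg}^2)$ from the left-hand side then accounts for the entire fifth line of \eqref{32}.

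The last line of \eqref{32} originates from contracting the momentum constraint against $\eta$. The deformed data is required to satisfy \eqref{30}, while the original data satisfies \eqref{20}; under the hypothesis $\widetilde{E}\times\widetilde{B}(\eta)=E\times B(\eta)$, the two identities combine to produce a discrepancy proportional to $Y^\phi\bigl(\widetilde{E}\times\widetilde{B}(\eta)/\sqrt{1+u^2|\nabla f|_g^2}-\overline{E}\times\overline{B}(\eta)\bigr)$, where the Jacobian converts the $\eta$-momentum density from the $t=f$ slice to the $t=0$ slice, and the weight $Y^\phi/u$ is exactly the $\phi$-component of the shift visible in \eqref{28}. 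The axisymmetry hypotheses \eqref{5}, \eqref{24}, \eqref{29} guarantee that the $\eta$-contraction commutes with the deformation, so this discrepancy can be computed pointwise and contributes precisely the $4u^{-1}Y^\phi(\cdots)$ term.

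The main obstacle will be bookkeeping: verifying that the electromagnetic corrections enter only through the Hamiltonian and momentum densities and do not intrude into $|k-\pi|_g^2$ or $Q$, and keeping volume forms and the raising/lowering of indices straight when passing between $g$ and $\bg$ (especially for the cross products $\widetilde{E}\times\widetilde{B}$ versus $\overline{E}\times\overline{B}$, whose volume forms differ). The ``furthermore'' statement is then immediate: when $Y\equiv 0$ one has $\varphi=u^2$, $\overline{g}=g+u^2df\otimes df$, $\overline{k}\equiv 0$, and $Y^\phi\equiv 0$, so the last line of \eqref{32} vanishes identically and the remaining formula reduces to a pointwise scalar-curvature identity for a Riemannian graph over $(M,g)$; its derivation uses only the Gauss equation for the graph and the charged Hamiltonian constraint, and therefore requires none of the listed axisymmetry or twist-potential assumptions.
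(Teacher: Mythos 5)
Your overall approach matches the paper's: import the uncharged scalar-curvature identity from \cite{ChaKhuri} and then substitute in the electromagnetic decompositions of the full matter and momentum densities. Your second paragraph correctly reproduces the substitutions $16\pi\mu = 16\pi\widetilde{\mu}_{EM}+2(|\widetilde{E}|_g^2+|\widetilde{B}|_g^2)$ and $8\pi J = 8\pi\widetilde{J}_{EM}-2\widetilde{E}\times\widetilde{B}$, which account for the fifth line of \eqref{32}, and your treatment of the $Y\equiv 0$ case is right.

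There is, however, a real imprecision in your first paragraph that your third paragraph only half-corrects. You claim the uncharged identity can be ``imported verbatim'' as
$\overline{R}-|\overline{k}|_{\bg}^{2}=16\pi(\mu-J(v))+|k-\pi|_{g}^{2}+2u^{-1}\bdiv(uQ)+(\text{Tr}_g\pi)^2-(\text{Tr}_g k)^2+2v(\text{Tr}_g\pi-\text{Tr}_g k)$,
but this equality is \emph{not} valid here: in \cite{ChaKhuri} its derivation passes through the intermediate formula
\begin{align*}
\overline{R}-|\overline{k}|_{\overline{g}}^{2}=& \ 16\pi(\mu-J(v))+|k-\pi|_{g}^{2}+\tfrac{2}{u}\,div_{\overline{g}}(uQ)
+(Tr_{g}\pi)^{2}-(Tr_{g}k)^{2}+2v(Tr_{g}\pi-Tr_{g}k)\\
&-\tfrac{2}{u}\Bigl(\tfrac{div_{g}k(Y^{\phi}\eta)}{\sqrt{1+u^{2}|\nabla f|_{g}^{2}}}-div_{\bg}\overline{k}(Y^{\phi}\eta)\Bigr),
\end{align*}
and the final line was then dropped only because the uncharged hypotheses force $div_g k(\eta)=div_{\bg}\overline{k}(\eta)=0$. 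In the charged setting \eqref{20} and \eqref{30} give $div_g k(\eta)=-2E\times B(\eta)=-2\widetilde{E}\times\widetilde{B}(\eta)\neq 0$ and $div_{\bg}\overline{k}(\eta)=-2\overline{E}\times\overline{B}(\eta)\neq 0$, so that term survives and becomes exactly the last line of \eqref{32} after substitution. Your third paragraph does gesture toward this (``originates from contracting the momentum constraint against $\eta$''), but it treats the $4u^{-1}Y^{\phi}(\cdots)$ term as something to be added by hand from external considerations about slice Jacobians, rather than recognizing that it is already present in the intermediate identity and merely needs to be rewritten via \eqref{20}, \eqref{30}. The paper's route -- citing the Appendix-A formula of \cite{ChaKhuri} which retains the divergence-difference term, then substituting $16\pi\mu=16\pi\widetilde{\mu}_{EM}+2(|\widetilde{E}|_g^2+|\widetilde{B}|_g^2)$, $8\pi J=8\pi\widetilde{J}_{EM}-2\widetilde{E}\times\widetilde{B}$, $div_g k(\eta)=-2\widetilde{E}\times\widetilde{B}(\eta)$, $div_{\bg}\overline{k}(\eta)=-2\overline{E}\times\overline{B}(\eta)$ -- is logically cleaner; you should start from that intermediate formula rather than from the theorem statement of \cite{ChaKhuri}.
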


\begin{proof}
In Appendix A of \cite{ChaKhuri} it is shown that
\begin{align}\label{35.1}
\begin{split}
\overline{R}-|\overline{k}|_{\overline{g}}^{2}=& 16\pi(\mu-J(v))+|k-\pi|_{g}^{2}+\frac{2}{u}div_{\overline{g}}(uQ)\\
&+(Tr_{g}\pi)^{2}-(Tr_{g}k)^{2}+2v(Tr_{g}\pi-Tr_{g}k)\\
&-\frac{2}{u}\left(\frac{div_{g}k(\byp \eta)}{\sqrt{1+u^{2}|\nabla f|_{g}^{2}}} - div_{\bg}\bk(\byp \eta)\right),
\end{split}
\end{align}
where $\mu$ and $J$ are the full matter and momentum densities. Since
\begin{equation}\label{35.2}
16\pi\mu=16\pi\widetilde{\mu}_{EM}+2(|\widetilde{E}|_{g}^{2}+|\widetilde{B}|_{g}^{2}),\text{ }\text{ }\text{ }\text{ }\text{ }8\pi J=8\pi\widetilde{J}_{EM}-2\widetilde{E}\times\widetilde{B},
\end{equation}
and
\begin{equation}\label{35.3}
div_{g}k(\eta)=-2E\times B(\eta)=-2\widetilde{E}\times\widetilde{B}(\eta),\text{ }\text{ }\text{ }\text{ }\text{ }
div_{\bg}\bk(\eta)=-2\overline{E}\times\overline{B}(\eta),
\end{equation}
the desired result follows.
\end{proof}

In the next section, the deformation of the electromagnetic field will be described and will consist of two steps, namely $(E,B)\rightarrow(\widetilde{E},\widetilde{B})$ and $(\widetilde{E},\widetilde{B})\rightarrow(\overline{E},\overline{B})$. Moreover, the deformation will be chosen so that the charged dominant energy condition with respect to $(\widetilde{E},\widetilde{B})$ follows from \eqref{2}, and
\begin{equation}\label{36}
|\overline{E}|_{\overline{g}}^{2}+|\overline{B}|_{\overline{g}}^{2}
=|\widetilde{E}|_{g}^{2}+|\widetilde{B}|_{g}^{2}+2\widetilde{E}\times \widetilde{B}(v),
\end{equation}
\begin{equation}\label{37}
\overline{E}\times\overline{B}(\eta)=\frac{\widetilde{E}\times \widetilde{B}(\eta)}{\sqrt{1+u^{2}|\nabla f|_{g}^{2}}}.
\end{equation}
Thus, this theorem makes it clear that in order to obtain the inequality in \eqref{21} at least weakly, $f$ should be chosen to solve the equation
\begin{equation}\label{38}
Tr_{g}(\pi-k)=0.
\end{equation}
It follows that
\begin{equation}\label{39}
\overline{R}-|\overline{k}|_{\overline{g}}^{2}-2(|\overline{E}|_{\overline{g}}^{2}+|\overline{B}|_{\overline{g}}^{2})= 16\pi(\widetilde{\mu}_{EM}-\widetilde{J}_{EM}(v))+|k-\pi|_{g}^{2}+2u^{-1}div_{\overline{g}}(uQ),
\end{equation}
which yields the desired weak inequality after multiplying by $u$ and applying the divergence theorem; it is assumed that appropriate asymptotic conditions are imposed (see below) in order
to ensure that the boundary integrals vanish in each of the ends. Equation \eqref{37} is the same equation which played a central role in \cite{ChaKhuri}, and is similar to previous Jang-type equations utilized in connection with the positive mass theorem \cite{SchoenYau} and the Penrose inequality \cite{BrayKhuri2}.
As discussed in \cite{ChaKhuri}, one primary difference however, is that solutions of \eqref{38} do not blow-up, while solutions of \eqref{39}
typically blow-up at apparent horizons or can be prescribed to blow-up at these surfaces \cite{HanKhuri}. Lastly, in order to ensure that $\overline{m}=m$, we will impose the following asymptotics
\begin{equation}\label{40}
|f|+r|\nabla f|_{g}+r^{2}|\nabla^{2} f|_{g}\leq cr^{-\varepsilon}\text{ }\text{ }\text{ in }\text{ }\text{ }M_{end}^{0},
\end{equation}
for some $0<\varepsilon<1$. It is shown in \cite{ChaKhuri} that a bounded solution exists (given $u$) by prescribing the following asymptotics at the remaining ends
\begin{equation}\label{41}
r_{n}^{-1}|\nabla f|_{g}+r_{n}^{-2}|\nabla^{2} f|_{g}\leq c\text{ }\text{ }\text{ in asymptotically flat }\text{ }\text{ }M_{end}^{n},
\end{equation}
\begin{equation}\label{42}
|\nabla f|_{g}+|\nabla^{2} f|_{g}\leq cr_{n}^{\frac{1}{2}}\text{ }\text{ }\text{ in asymptotically cylindrical }\text{ }\text{ }M_{end}^{n}.
\end{equation}

We have now shown how to choose $f$ and $Y$, in order to produce a deformation of the metric and second fundamental form which satisfies \eqref{21}, assuming that the deformation of the electromagnetic field (given in the next section) is chosen appropriately. It will then remain to choose $u$, in such a way as to facilitate a proof of the mass-angular momentum-charge inequality. This shall be accomplished in Section \ref{sec4}.

\section{Deformation of the Electric and Magnetic Fields}
\label{sec3} \setcounter{equation}{0}
\setcounter{section}{3}

The deformation of the Maxwell field will be chosen to satisfy \eqref{22}, \eqref{36}, and \eqref{37}, and will consist of two steps. Motivation for the definition comes from the case of equality in \eqref{15}. In this case the Lorentzian manifold \eqref{27} should be the extreme Kerr-Newman spacetime, and the graph $M=\{t=f(x)\}$ should give the desired embedding of the initial data. Thus $(\overline{E},\overline{B})$ must be the Maxwell field of the extreme Kerr-Newman solution induced on the $t=0$ slice. This field satisfies
\begin{equation}\label{43}
\overline{E}(\eta)=\overline{B}(\eta)=0.
\end{equation}
It turns out that condition \eqref{43} is also highly important for establishing \eqref{22}, \eqref{36}, and \eqref{37}. On the other hand $(E,B)$ should arise as the induced Maxwell field on the graph from the field strength tensor associated with $(\overline{E},\overline{B})$ (see below). This yields a relation in which $(\overline{E},\overline{B})$ is expressed in terms of $(E,B)$. Unfortunately however, the relation is not consistent with \eqref{43} unless $(E,B)$ satisfies a certain condition \eqref{47} a priori (the condition \eqref{47} is of course valid in the extreme Kerr-Newman setting). This suggests that an initial deformation $(E,B)\rightarrow(\widetilde{E},\widetilde{B})$ should be performed in order to achieve \eqref{47}, before expressing $(\overline{E},\overline{B})$ in terms of $(\widetilde{E},\widetilde{B})$ through the field strength relation.

We now describe the initial deformation $(E,B)\rightarrow(\widetilde{E},\widetilde{B})$. Let
\begin{equation}\label{44}
e_{1}=e^{U-\alpha}(\partial_{\rho}-A_{\rho}\partial_{\phi}),
\text{ }\text{ }\text{ }\text{ } e_{2}=e^{U-\alpha}(\partial_{z}-A_{z}\partial_{\phi}),
\text{ }\text{ }\text{ }\text{ } e_{3}=\frac{1}{\sqrt{g_{\phi\phi}}}\partial_{\phi},
\end{equation}
be an orthonormal frame for $(M,g)$. Then set
\begin{equation}\label{45}
\widetilde{E}(e_{i})=E(e_{i}), \text{ }\text{ }\text{ }\text{ }\text{ }  \widetilde{B}(e_{i})= B(e_{i}) \text{ }\text{ }\text{ }\text{ for }\text{ }\text{ }\text{ } i= 1, 2,
\end{equation}
\begin{equation}\label{46}
\widetilde{E}(e_{3})=v \times B(e_{3}),\text{ }\text{ }\text{ }\text{ }\text{ }
\widetilde{B}(e_{3})=-v \times E(e_{3}),
\end{equation}
where $v$ is given in \eqref{34}. Notice that $(\widetilde{E},\widetilde{B})$ agrees with $(E,B)$ except in the $\phi$-component. In particular, since $v(e_{3})=0$ the expressions $v \times B(e_{3})$ and $v \times E(e_{3})$ are independent of $E(e_{3})$ and $B(e_{3})$, and hence
\begin{equation}\label{47}
\widetilde{E}(e_{3})=v \times \widetilde{B}(e_{3}),\text{ }\text{ }\text{ }\text{ }\text{ }
\widetilde{B}(e_{3})=-v \times \widetilde{E}(e_{3}).
\end{equation}
As will be shown in Lemma \ref{lemma2} below, this condition is precisely what is needed to ensure that \eqref{43} is valid. But first let us record the properties of this initial deformation.

\begin{lemma}\label{lemma1}
If \eqref{4}, \eqref{5}, and \eqref{24} hold, then
\begin{equation}\label{48}
\widetilde{q}_{e}=q_{e},\text{ }\text{ }\text{ }\text{ }\text{ }\text{ }\widetilde{q}_{b}=q_{b},
\end{equation}
\begin{equation}\label{49}
div_{g} \widetilde{E} = div_{g} E , \text{ }\text{ }\text{ }\text{ }\text{ }\text{ }
div_{g}\widetilde{B} =div_{g}B,
\end{equation}
\begin{equation}\label{50}
\widetilde{J}_{EM}(\eta) = J_{EM}(\eta),\text{ }\text{ }\text{ }\text{ }\text{ }\text{ }
\widetilde{E}\times\widetilde{B}(\eta)=E\times B(\eta),
\end{equation}
\begin{equation}\label{51}
8 \pi(\widetilde{\mu}_{EM}-\widetilde{J}_{EM}(v)) - 8 \pi(\mu_{EM}-J_{EM}(v))
=\left(E(e_{3})- \widetilde{E}(e_{3})\right)^{2}  + \left(B(e_{3})- \widetilde{B}(e_{3}) \right)^{2}.
\end{equation}
\end{lemma}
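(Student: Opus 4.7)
The central observation is that the deformation only alters the $e_3$-component of each field, so $\widetilde{E} - E = (\widetilde{E}(e_3) - E(e_3))\,e_3$ and $\widetilde{B} - B = (\widetilde{B}(e_3) - B(e_3))\,e_3$, both of which are purely azimuthal since $e_3 = g_{\phi\phi}^{-1/2}\,\partial_\phi$. Both $\widetilde{E}$ and $\widetilde{B}$ remain axisymmetric, because $v$ inherits axisymmetry from $f$ via \eqref{24} while $E, B$ are axisymmetric by \eqref{5}. From this \eqref{48} is immediate: $\partial_\phi$ is tangent to each coordinate sphere $S_r$, so its pairing with the unit outer normal vanishes, and the azimuthal difference contributes nothing to the flux integrals at infinity. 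The identity \eqref{49} follows likewise from $\text{div}_g E = \frac{1}{\sqrt{|g|}}\partial_i(\sqrt{|g|}\,E^i)$: the only coordinate component in which $\widetilde{E}$ and $E$ differ is the $\phi$-component, and axisymmetry of both $\sqrt{|g|}$ and the difference kills the single $\partial_\phi$-term that could contribute.

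For \eqref{50} the plan is to expand the cross product in the orthonormal frame. Since $\eta = \sqrt{g_{\phi\phi}}\,e_3$, one has $E\times B(\eta) = \sqrt{g_{\phi\phi}}\,[E(e_1)B(e_2) - E(e_2)B(e_1)]$, which is manifestly unchanged under the deformation because $\widetilde{E},\widetilde{B}$ agree with $E,B$ in the $e_1,e_2$ slots. The equality $\widetilde{J}_{EM}(\eta) = J_{EM}(\eta)$ then follows by subtracting the momentum constraint in \eqref{1} for $(E,B)$ from the same constraint for $(\widetilde{E},\widetilde{B})$, which gives $8\pi(\widetilde{J}_{EM} - J_{EM}) = 2(\widetilde{E}\times\widetilde{B} - E\times B)$; contracting with $\eta$ and invoking the previous identity yields zero.

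The only substantive computation is \eqref{51}. The Hamiltonian constraint applied to both fields produces
\[
8\pi(\widetilde{\mu}_{EM} - \mu_{EM}) = E(e_3)^2 + B(e_3)^2 - \widetilde{E}(e_3)^2 - \widetilde{B}(e_3)^2,
\]
while the momentum constraint gives $8\pi(\widetilde{J}_{EM}(v) - J_{EM}(v)) = 2[(\widetilde{E}\times\widetilde{B})(v) - (E\times B)(v)]$. A crucial simplification is $v(e_3) = u f_\phi / (\sqrt{g_{\phi\phi}}\sqrt{1+u^2|\nabla f|_g^2}) = 0$, since $f$ is axisymmetric and hence $f_\phi = 0$; consequently only the $e_1, e_2$ components of $v$ enter the frame expansion of the cross products. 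The hard step is the ensuing algebraic identity: substituting the defining relations $\widetilde{E}(e_3) = v(e_1)B(e_2) - v(e_2)B(e_1)$ and $\widetilde{B}(e_3) = v(e_2)E(e_1) - v(e_1)E(e_2)$ into the bilinear expansion should collapse the difference to
\[
(\widetilde{E}\times\widetilde{B})(v) - (E\times B)(v) = \widetilde{E}(e_3)(E(e_3) - \widetilde{E}(e_3)) + \widetilde{B}(e_3)(B(e_3) - \widetilde{B}(e_3)).
\]
Granting this, assembling the two constraint differences and completing squares reproduces \eqref{51} exactly. Everything else in the lemma is an immediate consequence of the deformation being purely in the $\partial_\phi$ direction; the only real work is verifying the above frame identity, which reduces to a direct bilinear calculation using $v(e_3) = 0$.
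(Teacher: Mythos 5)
Your proof is correct and follows essentially the same route as the paper's. The charge identity rests on the normal direction at infinity being orthogonal to the purely azimuthal deformation, the divergence identity on axisymmetry, and \eqref{50}--\eqref{51} on the constraint equations together with $v(e_3)=0$ and agreement of $(\widetilde{E},\widetilde{B})$ with $(E,B)$ in the $e_1,e_2$ components; the frame identity you isolate, $\widetilde{E}\times\widetilde{B}(v) - E\times B(v) = \widetilde{E}(e_3)\bigl(E(e_3)-\widetilde{E}(e_3)\bigr) + \widetilde{B}(e_3)\bigl(B(e_3)-\widetilde{B}(e_3)\bigr)$, is exactly what the paper's display \eqref{55} establishes by direct expansion, just packaged a little more transparently.
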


\begin{proof}
Recall the expression for $q_{e}$ in \eqref{8}. The normal derivative is given by $\nu=|\partial_{r}|^{-1}
\partial_{r}$, where $\partial_{r}=\sin\theta\partial_{\rho}+\cos\theta\partial_{z}$ arises from the change of cylindrical to spherical coordinates $\rho=r\sin\theta$, $z=r\cos\theta$. In particular, $\nu$ does not have a $\phi$-component so that $\widetilde{E}\cdot\nu=E\cdot\nu$ and $\widetilde{B}\cdot\nu=B\cdot\nu$. It follows that $\widetilde{q}_{e}=q_{e}$ and $\widetilde{q}_{b}=q_{b}$.

Next observe that \eqref{5} implies
\begin{align}\label{52}
\begin{split}
div_{g}E&=\sum_{i=1}^{3}g(\nabla_{e_{i}}E,e_{i})\\
&= \sum_{i=1}^{3}e_{i}\left[E(e_{i})\right]-g(E,\nabla_{e_{i}}e_{i})\\
&=\sum_{i=1}^{2}e_{i}\left[E(e_{i})\right]-\sum_{i=1}^{3}\sum_{j=1}^{2}g(e_{j},\nabla_{e_{i}}e_{i})E(e_{j}),
\end{split}
\end{align}
since
\begin{equation}\label{53}
g(e_{\phi},\nabla_{e_{i}}e_{i})=-|\eta|^{-1}g(\nabla_{e_{i}}\eta,e_{i})=0
\end{equation}
as $\eta$ is a Killing field. This shows that $div_{g}E$ may be computed independent of $E(e_{3})$, and the same holds for $div_{g}\widetilde{E}$. Identities \eqref{49} now follow, since $(E,B)$ and $(\widetilde{E},\widetilde{B})$ agree except in the $\phi$-component. This level of agreement also yields
\begin{equation}\label{54}
8\pi(J_{EM}(\eta) - \widetilde{J}_{EM}(\eta))
=2E\times B(\eta) - 2 \widetilde{E} \times \widetilde{B}(\eta) =0.
\end{equation}

Lastly, a direct computation using $v(e_{3})=0$ produces \eqref{51}. Namely
\begin{align}\label{55}
\begin{split}
&8 \pi(\widetilde{\mu}_{EM}-\widetilde{J}_{EM}(v)) - 8 \pi(\mu_{EM}-J_{EM}(v)) \\
=&  |E|_{g}^{2} + |B|_{g}^{2} + 2E \times B(v) - |\widetilde{E}|_{g}^{2} - |\widetilde{B}|_{g}^{2} - 2\widetilde{E} \times \widetilde{B}(v) \\
=& E(e_{3})^{2} + B(e_{3})^{2} - \widetilde{E}(e_{3})^{2} - \widetilde{B}(e_{3})^{2} \\
&+ 2[v(e_{1})(E(e_{2})B(e_{3})-E(e_{3})B(e_{2}))+v(e_{2})(E(e_{3})B(e_{1})- E(e_{1})B(e_{3}))] \\
&-2[v(e_{1})(\widetilde{E}(e_{2})\widetilde{B}(e_{3})-\widetilde{E}(e_{3})\widetilde{B}(e_{2})) +v(e_{2})(\widetilde{E}(e_{3})\widetilde{B}(e_{1})- \widetilde{E}(e_{1})\widetilde{B}(e_{3}))]\\
=&E(e_{3})^{2} + B(e_{3})^{2} - \widetilde{E}(e_{3})^{2} - \widetilde{B}(e_{3})^{2} \\
&+ 2[E(e_{3})(v(e_{2})B(e_{1})-v(e_{1})B(e_{2}))+B(e_{3})(v(e_{1}) E(e_{2})-v(e_{2})E(e_{1}))] \\
&-2[\widetilde{E}(e_{3})(v(e_{2})B(e_{1})-v(e_{1})B(e_{2}))+\widetilde{B}(e_{3})( v(e_{1}) E(e_{2})-v(e_{2})E(e_{1}))] \\
=&E(e_{3})^{2} + B(e_{3})^{2} - \widetilde{E}(e_{3})^{2} - \widetilde{B}(e_{3})^{2}
-2 \left(E(e_{3})\widetilde{E}(e_{3}) + B(e_{3})\widetilde{E}(e_{3})\right)
+2\left( \widetilde{E}(e_{3})^{2}+\widetilde{B}(e_{3})^{2}\right)   \\
=& \left(E(e_{3})- \widetilde{E}(e_{3})\right)^{2}  + \left(B(e_{3})- \widetilde{B}(e_{3}) \right)^{2}.
\end{split}
\end{align}
\end{proof}

This lemma shows that the initial deformation preserves all relevant quantities and the charged dominant energy condition. Moreover it satisfies the relation \eqref{47}, which will be shown to guarantee the important property \eqref{43}. In order to see this, we must now define the second step in the deformation $(\widetilde{E},\widetilde{B})\rightarrow(\overline{E},\overline{B})$.

Assuming that the functions $(u,Y^{\phi},f)$ are chosen to possess the appropriate asymptotics, the geometry of the ends will be preserved in the deformation. Since the deformed
data are also simply connected and axially symmetric, the results of \cite{Chrusciel} apply to yield a global Brill coordinate system $(\overline{\rho},\overline{z},\phi)$ such that
\begin{equation}\label{56}
\overline{g}=e^{-2\overline{U}+2\overline{\alpha}}(d\overline{\rho}^{2}+d\overline{z}^{2})
+\overline{\rho}^{2}e^{-2\overline{U}}(d\phi+A_{\overline{\rho}}d\overline{\rho}+A_{\overline{z}}d\overline{z})^{2}.
\end{equation}
Let
\begin{equation}\label{57}
\overline{e}_{1}=e^{\overline{U}-\overline{\alpha}}(\partial_{\overline{\rho}}-A_{\overline{\rho}}\partial_{\phi}),
\text{ }\text{ }\text{ }\text{ } \overline{e}_{2}=e^{\overline{U}-\overline{\alpha}}(\partial_{\overline{z}}-A_{\overline{z}}\partial_{\phi}),
\text{ }\text{ }\text{ }\text{ } \overline{e}_{3}=\frac{1}{\sqrt{\overline{g}_{\phi\phi}}}\partial_{\phi}
=\frac{1}{\sqrt{g_{\phi\phi}}}\partial_{\phi}=e_{3},
\end{equation}
be the corresponding orthonormal frame, where $\overline{g}_{\phi\phi}=g_{\phi\phi}$ follows from \eqref{24}, and let
\begin{equation}\label{57.1}
\overline{\theta}^{1}=e^{-\overline{U}+\overline{\alpha}}d\overline{\rho},
\text{ }\text{ }\text{ }\text{ }
\overline{\theta}^{2}=e^{-\overline{U}+\overline{\alpha}}d\overline{z},
\text{ }\text{ }\text{ }\text{ }
\overline{\theta}^{3}=\overline{\rho}e^{-\overline{U}}(d\phi+A_{\overline{\rho}}d\overline{\rho}+A_{\overline{z}}d\overline{z})
\end{equation}
be the dual co-frame. Consider the field strength tensor
\begin{equation}\label{58}
\overline{F}=\frac{1}{2}\overline{F}_{ab}dx^{a} \wedge dx^{b}
\end{equation}
on the spacetime $(\overline{M}\times\mathbb{R}, \overline{g}-2Y_{i}dx^{i}dt-\varphi dt^{2})$, such that
\begin{equation} \label{59}
\overline{F}(\overline{e}_{i},n)= \overline{E}_{i}, \text{ }\text{ }\text{ }\text{ }\text{ }
\overline{F}(\overline{e}_{i},\overline{e}_{j})= \overline{\epsilon}_{ijl}\overline{B}^{l}, \text{ }\text{ }\text{ }\text{ }\text{ } i,j,l = 1,2,3,
\end{equation}
with $\overline{\epsilon}$ the volume form of $\overline{g}$, and
$n=u^{-1}(\partial_{t}+\overline{Y})$ the unit normal of the $t=0$ slice.
Here indices are raised and lowered with respect to $\overline{g}$, and
\begin{equation}\label{60}
\overline{E}=\sum_{i=1}^{3}\overline{E}^{i}\overline{e}_{i}, \text{ }\text{ }\text{ }\text{ }\text{ }
\overline{B}=\sum_{i=1}^{3}\overline{B}^{i}\overline{e}_{i}.
\end{equation}
Thus $(\overline{E},\overline{B})$ arise as the induced electric and magnetic field on the $t=0$ slice, and with respect to the basis $\{n,\overline{e}_{i}\}$ we have
\begin{equation}\label{61}
\overline{F}=\left(\bF_{ij}\right) =\left( \begin{array}{cccc} 0 & -\overline{E}_{1} & -\overline{E}_{2} & -\overline{E}_{3} \\
\overline{E}_{1} & 0 & \overline{B}_{3} & -\overline{B}_{2} \\
\overline{E}_{2} & -\overline{B}_{3} & 0 & \overline{B}_{1} \\
\overline{E}_{3} &  \overline{B}_{2} & -\overline{B}_{1} & 0
\end{array} \right).
\end{equation}

The field strength $\overline{F}$ will be defined from the given fields $(\widetilde{E},\widetilde{B})$ in the following way. Let
\begin{equation}\label{62}
N= \frac{u\barna f + n}{\sqrt{\volbarg}}, \text{ }\text{ }\text{ }\text{ }\text{ }
X_{i}=\overline{e}_{i}+\overline{e}_{i}(f)\partial_{t}, \text{ }\text{ }\text{ }\text{ }i=1,2,3,
\end{equation}
be the unit normal and a basis of tangent vectors for the graph $(M=\{t=f(x)\},g,\pi)$ sitting inside the spacetime, where $\overline{\nabla}^{i}f=\overline{g}^{ij}f_{j}$. Then set
\begin{equation}\label{63}
\overline{F}(X_{i}, N) = \widetilde{E}_{i}, \text{ }\text{ }\text{ }\text{ }\text{ }
\overline{F} (X_{i}, X_{j}) = \epsilon_{ijl} \widetilde{B}^{l},
\text{ }\text{ }\text{ }\text{ }\text{ } i,j,l = 1,2,3.
\end{equation}
Here indices are raised and lowered with respect to $g$, and
\begin{equation}\label{64}
\widetilde{E}=\sum_{i=1}^{3}\widetilde{E}^{i}\overline{e}_{i}, \text{ }\text{ }\text{ }\text{ }\text{ }
\widetilde{B}=\sum_{i=1}^{3}\widetilde{B}^{i}\overline{e}_{i}.
\end{equation}
Thus $(\widetilde{E},\widetilde{B})$ arise as the induced electric and magnetic field on the graph $(M=\{t=f(x)\},g,\pi)$. The equations \eqref{59} and \eqref{63} yield relations between $(\overline{E},\overline{B})$ and $(\widetilde{E},\widetilde{B})$, which we now describe.

\begin{lemma} \label{lemma2}
For $i=1,2$,
\begin{align}\label{65}
\begin{split}
\be^{i}
&= \frac{\widetilde{E}^{i}}{\sqrt{\volg}} - \sqrt{\volg}\left(v \times \left(\widetilde{B} + v \times \widetilde{E}\right) \right)^{i},\\
\bm^{i}
&= \frac{\widetilde{B}^{i}}{\sqrt{\volg}} + \sqrt{\volg}\left(v \times \left( \widetilde{E} - v \times \widetilde{B}\right) \right)^{i},  \\
\be_{3}&= \sqrt{\volg} \left(\widetilde{E}_{3} - v \times \widetilde{B}(e_{3})\right), \\
\bm_{3}&= \sqrt{\volg} \left(\widetilde{B}_{3} + v \times \widetilde{E}(e_{3})\right).
\end{split}
\end{align}
\end{lemma}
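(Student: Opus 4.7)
The lemma compares two decompositions of the same 2-form $\overline{F}$ on spacetime. Relative to the basis $\{n,\overline{e}_1,\overline{e}_2,\overline{e}_3\}$ adapted to the $t=0$ slice, its electric and magnetic parts are $(\overline{E},\overline{B})$ by \eqref{59}; relative to the basis $\{N,X_1,X_2,X_3\}$ adapted to the graph $M=\{t=f(x)\}$, they are $(\widetilde{E},\widetilde{B})$ by \eqref{63}. Since $N$ is a Lorentz boost of $n$ along the spatial direction $u\overline{\nabla}f$, equations \eqref{65} are, conceptually, the inverse boost transformation of the electromagnetic field between these two timelike observers. The plan is therefore to carry out the direct calculation: expand both defining relations in the common basis $\{n,\overline{e}_j\}$ and solve the resulting linear system.

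First I write $X_i$ and $N$ in the basis $\{n,\overline{e}_j\}$. Using $\partial_t=un-\overline{Y}$ from \eqref{28}, the relation $\overline{Y}=|\eta|Y^\phi\overline{e}_3$ (since $\partial_\phi=|\eta|\overline{e}_3$, and $\overline{g}_{\phi\phi}=g_{\phi\phi}$ by \eqref{24}), together with the axisymmetry $\overline{e}_3(f)=0$, one obtains
\[
X_i=\overline{e}_i+u\overline{e}_i(f)\,n-|\eta|Y^\phi\overline{e}_i(f)\,\overline{e}_3 \quad (i=1,2),\qquad X_3=\overline{e}_3,
\]
and $N=\frac{1}{\sqrt{\volbarg}}\bigl(n+u\overline{e}_1(f)\,\overline{e}_1+u\overline{e}_2(f)\,\overline{e}_2\bigr)$. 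A useful identity is $\sqrt{\volg}\sqrt{\volbarg}=1$, which follows from the graph construction together with the Lorentzian signature; this lets me replace $1/\sqrt{\volbarg}$ by $\sqrt{\volg}$ and identifies the boost Lorentz factor as $\gamma=\sqrt{\volg}$, so that $u\overline{e}_j(f)$ will play the role of $\sqrt{\volg}\,v^j$.

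Next I substitute these expansions into \eqref{63} and evaluate every $\overline{F}$-bracket on basis pairs using \eqref{59}, namely $\overline{F}(\overline{e}_i,n)=\overline{E}_i$ and $\overline{F}(\overline{e}_i,\overline{e}_j)=\overline{\epsilon}_{ijl}\overline{B}^l$. The case $i=3$ is simplest since $X_3=\overline{e}_3$: the two equations $\widetilde{E}_3=\overline{F}(\overline{e}_3,N)$ and $\widetilde{B}^3=\overline{F}(X_1,X_2)$ each reduce to a handful of terms, and after eliminating the $\overline{E}_j,\overline{B}_j$ with $j=1,2$ via the boost relations in those directions, they yield the third and fourth equations of \eqref{65}. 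For $i=1,2$, the four equations $\widetilde{E}_i=\overline{F}(X_i,N)$ and $\widetilde{B}_i=\tfrac{1}{2}\epsilon_{ijk}\overline{F}(X_j,X_k)$ couple $(\overline{E}_1,\overline{E}_2,\overline{B}_1,\overline{B}_2)$ through a $4\times 4$ linear system. Inverting this system, or equivalently applying the inverse Lorentz boost of the electromagnetic field, produces the iterated expressions $v\times(\widetilde{B}+v\times\widetilde{E})$ and $v\times(\widetilde{E}-v\times\widetilde{B})$ that appear in the first two equations of \eqref{65}.

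The principal obstacle is the bookkeeping of cross-terms generated by the $-|\eta|Y^\phi\overline{e}_i(f)\,\overline{e}_3$ components of $X_i$, which produce products of $Y^\phi$ with $\overline{E}_3$ and $\overline{B}_3$ that must be shown to cancel pairwise or recombine cleanly into the asserted form. A secondary subtlety is the reconciliation of the frames $\{e_i\}$ and $\{\overline{e}_i\}$, orthonormal for $g$ and $\overline{g}$ respectively, along with the ambiguity in the components of $\widetilde{E},\widetilde{B}$ used in \eqref{64} versus \eqref{45}--\eqref{46}; axisymmetry ensures $e_3=\overline{e}_3$, and the identity $\sqrt{\volg}\sqrt{\volbarg}=1$ supplies the correct normalization, so that the final expressions depend on $v$ and $e_3$ exactly as stated in \eqref{65}.
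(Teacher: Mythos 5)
Your proposal is correct and follows essentially the same route as the paper: expand both defining relations \eqref{59} and \eqref{63} in the common basis $\{n,\overline{e}_j\}$ (which in the paper appears via the explicit metric relations $g_{ij}=\overline{g}_{ij}-f_iY_j-f_jY_i-\varphi f_if_j$ and $g^{ij}=\overline{g}^{ij}-u^{-2}\overline{Y}^i\overline{Y}^j+w^iw^j$ together with $\epsilon=\sqrt{1-u^2|\overline{\nabla}f|^2_{\overline{g}}}\,\overline{\epsilon}$), then solve the resulting coupled system, solving for $\overline{E}_3,\overline{B}_3$ first and back-substituting into the in-plane components with the aid of $(v\times Z)^1=u\overline{e}_2(f)Z_3$, $(v\times Z)^2=-u\overline{e}_1(f)Z_3$. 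One small clarification worth noting: once $\overline{E}_3,\overline{B}_3$ are known, the remaining equations do not form a genuine $4\times 4$ system that needs inversion — they decouple, each giving $\overline{E}^j$ or $\overline{B}^j$ directly — but this does not affect the correctness of the plan.
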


\begin{proof}
Let $g_{ij}=g(\overline{e}_{i},\overline{e}_{j})$, with $g^{ij}$ components of the inverse matrix, and recall \cite{ChaKhuri} that
\begin{equation} \label{66}
g_{ij}= \bg_{ij} - f_{i}Y_{j} - f_{j}Y_{i} - \varphi f_{i}f_{j},
\text{ }\text{ }\text{ }\text{ }\text{ }g^{ij}= \bg^{ij} - u^{-2}\by^{i}\by^{j} + w^{i}w^{j},
\end{equation}
where
\begin{equation} \label{67}
w^{i}= \frac{u\overline{g}^{ij}f_{j} + u^{-1}\by^{i}}{\sqrt{\volbarg}},
\text{ }\text{ }\text{ }\text{ }\text{ }u^{2} = \varphi + | \by |_{\bg}^{2}.
\end{equation}
Then a direct calculation yields
\begin{align}\label{68}
\begin{split}
\widetilde{E}^{1} =&  g^{1i} \bF(X_{i}, N) \\
=& \left( \delta^{1i} + w(\overline{\theta}^{1}) w(\overline{\theta}^{i})  \right)
\bF \left( \overline{e}_{i} + \overline{e}_{i}(f)(u n - \by), \frac{u \barna f + n}{\sqrt{\volbarg}}   \right) \\
=& \frac{\left( \delta^{1i} + w(\overline{\theta}^{1}) w(\overline{\theta}^{i})  \right)}{\sqrt{\volbarg}}
\left[ u \overline{e}_{j}(f) \bF_{i}^{\text{ }\text{ }\!j}
+ \be_{i}
- \overline{e}_{i}(f) \left(   u^{2}\be(\barna f)
+ u \overline{e}_{j}(f)\by(\overline{e}_{3}) \bF_{3}^{\text{ }\text{ }\!j}
+ \by(\overline{e}_{3}) \be_{3}
\right)\right]  \\
=& \frac{\be^{1} + u \overline{e}_{2}(f) \bm_{3}}{\sqrt{\volbarg}}
- \frac{\overline{e}_{1}(f)}{\sqrt{\volbarg}} \left(u^{2}\be(\barna f)
+  u \overline{e}_{j}(f)\by(\overline{e}_{3}) \bF_{3}^{\text{ }\text{ }\!j}
+  \by(\overline{e}_{3})\be_{3}\right)  \\
&+ \frac{u^{2} \overline{e}_{1}(f)}{(\volbarg)^{3/2}} \left((\volbarg) \be (\barna f)
 - u |\barna f|_{\bg}^{2} \overline{e}_{j}(f)\by(\overline{e}_{3}) \bF_{3}^{\text{ }\text{ }\!j}
-|\barna f|_{\bg}^{2}  \by(\overline{e}_{3})\be_{3}\right) \\
&+ \frac{\overline{e}_{1}(f) \by(\overline{e}_{3})}{(\volbarg)^{3/2}} \left(
u \overline{e}_{j}(f) \bF_{3}^{\text{ }\text{ }\!j}
+ \be_{3}
\right) \\
=& \frac{\be^{1}+u \overline{e}_{2}(f) \bm_{3}}{\sqrt{\volbarg}}.
\end{split}
\end{align}
Likewise,
\begin{equation}\label{69}
\widetilde{E}^{2} = \frac{\be^{2} - u \overline{e}_{1}(f) \bm_{3}}{\sqrt{\volbarg}}.
\end{equation}
A similar computation yields the remaining component of $\widetilde{E}$, namely
\begin{align} \label{70}
\begin{split}
\widetilde{E}^{3} =&  g^{3i} \bF(X_{i}, N) \\
=& \left( \delta^{3i} -\frac{|\by|_{\bg}^{2} \delta^{3i}}{u^{2}} + w(\overline{\theta}^{3})w(\overline{\theta}^{i})  \right) \bF \left(\overline{e}_{i} + \overline{e}_{i}(f)(u n - \by), \frac{u \barna f + n}{\sqrt{\volbarg}}\right) \\
=&\frac{\left( 1- u^{-2}|\by|_{\bg}^{2} \right)}{\sqrt{\volbarg}}
\left( u \overline{e}_{j}(f) \bF^{3j} + \be^{3} \right) \\
&+ \frac{\by(\overline{e}_{3})}{(\volbarg)^{3/2}}
\left((\volbarg)\be(\barna f)-|\barna f|_{\bg}^{2}
\left(u\overline{e}_{j}(f)\by(\overline{e}_{3}) \bF^{3j}
+ \by(\overline{e}_{3}) \be^{3}\right)\right) \\
&+ \frac{|\by|_{\bg}^{2}}{u^{2}(\volbarg)^{3/2}}
\left(u \overline{e}_{j}(f) \bF^{3j}+\be^{3}\right) \\
=& \frac{\be^{3}+u \left(\overline{e}_{1}(f) \bm_{2}- \overline{e}_{2}(f) \bm_{1}\right) + \by(\overline{e}_{3})\be(\barna f)}{\sqrt{\volbarg}}.
\end{split}
\end{align}

We will now compute each component of $\widetilde{B}$. Recall that
$\epsilon$ and $\overline{\epsilon}$ are the volume forms of $g$ and $\bg$ respectively, which satisfy the following relation (Lemma 2.1 of \cite{ChaKhuri})
\begin{equation} \label{71}
\epsilon = \sqrt{\volbarg} \text{ } \overline{\epsilon}.
\end{equation}
Therefore
\begin{equation} \label{72}
\widetilde{B}^{1} =  \frac{\bF(X_{2}, X_{3})}{\sqrt{\volbarg}}
= \frac{\bF \left(\overline{e}_{2} + \overline{e}_{2}(f)(u n - \by), \overline{e}_{3}   \right)}{\sqrt{\volbarg}}
=\frac{ \bm^{1} - u \overline{e}_{2}(f) \be_{3}}{\sqrt{\volbarg}},
\end{equation}
\begin{equation} \label{73}
\widetilde{B}^{2} =  \frac{\bF(X_{3}, X_{1})}{\sqrt{\volbarg}}
= \frac{\bF \left( \overline{e}_{3}, \overline{e}_{1} + \overline{e}_{1}(f)(u n - \by)\right)}{\sqrt{\volbarg}}
= \frac{\bm^{2} + u \overline{e}_{1}(f) \be_{3}}{\sqrt{\volbarg}},
\end{equation}
and
\begin{equation} \label{74}
\begin{split}
\widetilde{B}^{3} &=  \frac{\bF(X_{1}, X_{2})}{\sqrt{\volbarg}} \\
&= \frac{1}{\sqrt{\volbarg}}\bF \left( \overline{e}_{1} + \overline{e}_{1}(f)(u n - \by), \overline{e}_{2} + \overline{e}_{2}(f)(u n - \by)\right) \\
&=\frac{1}{\sqrt{\volbarg}}  \left( \bm^{3} + \by(e_{3}) \bm(\barna f) + u \left(e_{2}(f)\be_{1} - e_{1}(f) \be_{2} \right)  \right).
\end{split}
\end{equation}

We will now solve for $(\be,\bm)$ in terms of $(\widetilde{E},\widetilde{B})$. In order to do this, the following identities, derived from \eqref{68}-\eqref{70} and \eqref{72}-\eqref{74}, will be used:
\begin{equation}\label{75}
\widetilde{E}(\nabla f) = \widetilde{E}^{1} \overline{e}_{1}(f) + \widetilde{E}^{2}\overline{e}_{2}(f) = \frac{\be(\barna f)}{\sqrt{\volbarg}},
\end{equation}
\begin{equation}\label{76}
\widetilde{B} (\nabla f) = \widetilde{B}^{1} \overline{e}_{1}(f) + \widetilde{B}^{2}\overline{e}_{2}(f) = \frac{\bm(\barna f)}{\sqrt{\volbarg}},
\end{equation}
\begin{equation}\label{77}
\begin{split}
 (\nabla f \times \widetilde{E})(\overline{e}_{3})
&= \epsilon_{ij3}f^{i}\widetilde{E}^{j}   \\
&= \sqrt{\volbarg}\text{ }\overline{\epsilon}_{ij3}\left(\frac{\barna f+ |\barna f|_{\bg}^{2} \by}{\volbarg} \right)^{i}\widetilde{E}^{j} \\
&= \frac{(\barna f \times \overline{E})(\overline{e}_{3}) - u |\barna f|_{\bg}^{2} \overline{B}_{3}}{\volbarg},
\end{split}
\end{equation}
and similarly
\begin{equation}\label{78}
 (\nabla f \times \widetilde{B})(\overline{e}_{3})
= \epsilon_{ij3}f^{i}\widetilde{B}^{j}
= \frac{(\barna f \times \overline{B})(\overline{e}_{3}) + u |\barna f|_{\bg}^{2} \overline{E}_{3}}{\volbarg}.
\end{equation}
Consider the third component of $(\widetilde{E},\widetilde{B})$. By substituting the identities above and utilizing $\overline{e}_{3}=e_{3}$, the definition of $v$ \eqref{34}, as well as
\begin{equation}\label{79}
(1+u^{2}|\nabla f|_{g}^{2})(1-u^{2}|\overline{\nabla}f|_{\overline{g}}^{2})=1
\end{equation}
from Lemma 2.1 of \cite{ChaKhuri}, we obtain
\begin{align} \label{80}
\begin{split}
\widetilde{E}_{3}= \widetilde{E}^{i}g_{i3}&= \widetilde{E}^{i}\left(\delta_{i3} -\overline{e}_{i}(f)\by(\overline{e}_{3}) \right)\\
&=\widetilde{E}^{3}-\widetilde{E}(\nabla f)\overline{Y}(\overline{e}_{3})\\
&= \frac{\be^{3}+ u(\overline{e}_{1}(f)\bm_{2}-\overline{e}_{2}(f)\bm_{1})}{\sqrt{\volbarg}}\\
&=\frac{\be_{3}+ u\left[(\volbarg)\nabla f\times\widetilde{B}(\overline{e}_{3})
-u|\overline{\nabla}f|_{\overline{g}}^{2}\overline{E}_{3}\right]}{\sqrt{\volbarg}}\\
&= \sqrt{\volbarg}\text{ }\overline{E}_{3} + v \times \widetilde{B}(e_{3}),
\end{split}
\end{align}
and likewise
\begin{equation} \label{81}
\widetilde{B}_{3}=\sqrt{\volbarg}\text{ }\overline{B}_{3} - v \times \widetilde{E}(e_{3}).
\end{equation}
Therefore
\begin{equation} \label{82}
\be_{3}= \sqrt{\volg} \left(\widetilde{E}_{3} - v \times \widetilde{B}(e_{3})\right),
\end{equation}
\begin{equation}\label{83}
\bm_{3}= \sqrt{\volg} \left(\widetilde{B}_{3} + v \times \widetilde{E}(e_{3})\right).
\end{equation}

Next, notice that for any vector field $Z$ on $(M,g)$
\begin{align} \label{84}
\begin{split}
(v \times Z)^{1} &= \epsilon_{ijl}g^{1i}v^{j}Z^{l} \\
&=\overline{\epsilon}_{ijl} \left( \delta^{1i} + w(\overline{\theta}^{1})w(\overline{\theta}^{i}) \right)
\left( u\overline{\nabla}^{j}f+  u |\barna f |_{\bg}^{2}\by^{j} \right) Z^{l} \\
&= u \overline{e}_{2}(f) \left(Z^{3}- \by(\overline{e}_{3})Z(\nabla f) \right)\\
&= u \overline{e}_{2}(f)Z_{3},
\end{split}
\end{align}
and similarly
\begin{equation} \label{85}
(v \times Z)^{2} = -u\overline{e}_{1}(f)Z_{3}.
\end{equation}
Hence, we may solve for $\be^{1}$ from \eqref{68} by employing \eqref{83} and \eqref{84}
\begin{align} \label{86}
\begin{split}
\be^{1}&= \widetilde{E}^{1}\sqrt{\volbarg} - u\overline{e}_{2}(f)\overline{B}_{3}\\
&=\frac{\widetilde{E}^{1}}{\sqrt{\volg}} - \sqrt{\volg}\text{ }u\overline{e}_{2}(f)\left(\widetilde{B}_{3}
+v\times\widetilde{E}(e_{3})\right)\\
&=\frac{E^{1}}{\sqrt{\volg}} - \sqrt{\volg}  \left( v \times \left( \widetilde{B}+ v \times \widetilde{E})\right)\right)^{1}.
\end{split}
\end{align}
Analogous considerations lead to the remaining formulas of \eqref{65}.
\end{proof}

\begin{remark}\label{rem1}
Lemmas \ref{lemma1} and \ref{lemma2} imply that
\begin{equation}\label{87}
\overline{q}_{e} = \widetilde{q}_{e}=q_{e}, \text{ }\text{ }\text{ }\text{ }\text{ }\text{ }\text{ }\text{ } \overline{q}_{b}=\widetilde{q}_{b}=q_{b},
\end{equation}
when $(\byp,f)$ satisfy the asymptotic conditions described in Section \ref{sec2}, and $u$ remains bounded. Moreover Lemma \ref{lemma2}, \eqref{71}, \eqref{79}, and the anti-symmetry of cross products yield \eqref{37}.
\end{remark}

Together \eqref{47} and \eqref{65} show that the deformed electromagnetic field vanishes in the $\eta$-direction \eqref{43}. This condition is instrumental in establishing \eqref{36}, which is needed to impart the desired positivity property to the scalar curvature.

\begin{lemma}\label{lemma3}
If \eqref{43} holds, then
\begin{equation}\label{88}
|\overline{E}|_{\overline{g}}^{2}+|\overline{B}|_{\overline{g}}^{2}
=|\widetilde{E}|_{g}^{2}+|\widetilde{B}|_{g}^{2}+2\widetilde{E}\times \widetilde{B}(v).
\end{equation}
\end{lemma}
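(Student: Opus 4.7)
The proof is a direct computation using the explicit relations of Lemma \ref{lemma2} together with the consequence of hypothesis \eqref{43}.

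First, I would combine \eqref{43} with the last two formulas of \eqref{65} to deduce
\[
\widetilde{E}_{3}=v\times\widetilde{B}(e_{3}),\qquad \widetilde{B}_{3}=-v\times\widetilde{E}(e_{3}),
\]
which are exactly the relations \eqref{47}. These will be used repeatedly to eliminate third-component quantities on the right-hand side whenever they arise from cross products.

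Second, since $\{\overline{e}_{i}\}$ is orthonormal for $\overline{g}$ and \eqref{43} gives $\overline{E}_{3}=\overline{B}_{3}=0$, the left-hand side collapses to
\[
|\overline{E}|_{\overline{g}}^{2}+|\overline{B}|_{\overline{g}}^{2}=\sum_{i=1}^{2}\bigl((\overline{E}^{i})^{2}+(\overline{B}^{i})^{2}\bigr).
\]
I would substitute the first two formulas of \eqref{65}, expand the squares, and evaluate the cross-product terms using the identities $(v\times Z)^{1}=u\,\overline{e}_{2}(f)Z_{3}$ and $(v\times Z)^{2}=-u\,\overline{e}_{1}(f)Z_{3}$ established in \eqref{84}-\eqref{85}. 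After regrouping, the result is a polynomial in $\widetilde{E}^{i}, \widetilde{B}^{i}$ with coefficients depending on $u^{2}|\nabla f|_{g}^{2}$ and $\overline{e}_{i}(f)$; any residual occurrence of $\widetilde{E}_{3}$ or $\widetilde{B}_{3}$ is then removed via \eqref{47}.

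Third, I would expand the right-hand side using \eqref{66}, which in the frame $\{\overline{e}_i\}$ gives $g_{ij}=\delta_{ij}-\overline{e}_{i}(f)Y_{j}-\overline{e}_{j}(f)Y_{i}-\varphi \overline{e}_{i}(f)\overline{e}_{j}(f)$, to rewrite $|\widetilde{E}|_{g}^{2}=g_{ij}\widetilde{E}^{i}\widetilde{E}^{j}$ and similarly for $|\widetilde{B}|_{g}^{2}$; the term $2\widetilde{E}\times\widetilde{B}(v)$ is computed from \eqref{34}. Here I would again invoke \eqref{47} to simplify products involving the $e_{3}$-components. The algebraic identity $(1+u^{2}|\nabla f|_{g}^{2})(1-u^{2}|\overline{\nabla}f|_{\overline{g}}^{2})=1$ from Lemma 2.1 of \cite{ChaKhuri} will be needed to reconcile the $\sqrt{\volg}$ and $\sqrt{\volbarg}$ prefactors, together with $u^{2}=\varphi+|\overline{Y}|_{\overline{g}}^{2}$ to match the $\varphi$-dependent terms produced by the metric expansion.

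Matching both sides is then term-by-term verification. No step is conceptually subtle; the main obstacle is the bookkeeping, in particular keeping careful track of which cross products vanish because $v(e_{3})=0$ and which of the remaining terms collapse through the substitutions \eqref{47}. Because the deformation in Lemma \ref{lemma2} is engineered precisely so that \eqref{47} produces \eqref{43}, one expects the identity to fall out cleanly once all terms are organized by degree in $\overline{e}_{1}(f), \overline{e}_{2}(f)$.
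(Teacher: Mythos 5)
Your proposal is correct and is essentially the same style of argument as the paper's: a direct computation powered by Lemma \ref{lemma2}, the metric relation \eqref{66}, the cross-product identities \eqref{84}--\eqref{85}, the identity \eqref{79}, and the collapse coming from \eqref{43} (equivalently \eqref{47}). The only substantive difference is directional. The paper never touches the left-hand side: it expands $|\widetilde{E}|_{g}^{2}+|\widetilde{B}|_{g}^{2}+2\widetilde{E}\times\widetilde{B}(v)$ using \eqref{66}, substitutes the forward formulas \eqref{68}--\eqref{74} expressing $\widetilde{E}^{i},\widetilde{B}^{i}$ in terms of $\overline{E},\overline{B}$, regroups (equations \eqref{89}--\eqref{93}), and observes that every leftover term carries a factor of $\overline{E}_{3}$ or $\overline{B}_{3}$, which \eqref{43} annihilates. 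You instead propose to expand both sides into $(\widetilde{E},\widetilde{B})$ language: the left via the solved formulas \eqref{65}, the right via \eqref{66} and \eqref{34}, using \eqref{47} to eliminate $\widetilde{E}_{3},\widetilde{B}_{3}$. That works, and in fact is a bit leaner than you let on: once you impose \eqref{47}, the cross-product terms in the first two lines of \eqref{65} vanish identically (the quantity $\widetilde{B}+v\times\widetilde{E}$, respectively $\widetilde{E}-v\times\widetilde{B}$, has zero third component), so $\overline{E}^{i}=\widetilde{E}^{i}/\sqrt{1+u^{2}|\nabla f|_{g}^{2}}$ and $\overline{B}^{i}=\widetilde{B}^{i}/\sqrt{1+u^{2}|\nabla f|_{g}^{2}}$ for $i=1,2$, and the left side becomes a single clean term. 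The bookkeeping you anticipate then all lands on the right-hand side, which is where the paper's \eqref{89}--\eqref{92} do their work. Net: same ingredients, opposite direction of substitution, both valid; the paper's one-sided expansion is marginally more economical.
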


\begin{proof}
Using Lemma \ref{lemma2} and the notation contained in its proof, we find that
\begin{align} \label{89}
\begin{split}
|\widetilde{E}|_{g}^{2} =& \widetilde{E}^{i}\widetilde{E}^{j}g_{ij}     \\
=& \widetilde{E}^{i}\widetilde{E}^{j}\left(\delta_{ij}- \varphi \overline{e}_{i}(f) \overline{e}_{j}(f) - \by(\overline{e}_{i}) \overline{e}_{j}(f) - \by(\overline{e}_{j}) \overline{e}_{i}(f) \right) \\
=& (\widetilde{E}^{i})^{2}- \varphi (\widetilde{E}^{i}\overline{e}_{i}(f))^{2} - 2 \widetilde{E}^{3} \by(\overline{e}_{3})(\widetilde{E}^{i}\overline{e}_{i}(f))\\
=& \frac{1}{\volbarg}\left(\left( \be^{1} + u \overline{e}_{2}(f) \bm_{3} \right)^{2}
	+ \left( \be^{2} - u \overline{e}_{1}(f)\bm_{3}\right)^{2}\right) \\
&+ \frac{1}{\volbarg}\left(\be^{3} + u (\overline{e}_{1}(f)\bm_{2} - \overline{e}_{2}(f) \bm_{1}) + \by(\overline{e}_{3})\be(\barna f)\right)^{2}\\
&- \frac{\varphi\be(\barna f)^{2}}{\volbarg}
- \frac{2 \by(\overline{e}_{3}) \be(\barna f)}{\volbarg}\left(\be^{3} + u(\overline{e}_{1}(f)\bm_{2} - 				 \overline{e}_{2}(f) \bm_{1})+\by(\overline{e}_{3})\be(\barna f)\right)\\
=& \frac{1}{\volbarg}\left(\left( \be^{1} + u \overline{e}_{2}(f) \bm_{3} \right)^{2}
	+ \left(\be^{2} - u \overline{e}_{1}(f)\bm_{3} \right)^{2}\right) \\
	&+ \frac{1}{\volbarg}\left(\be^{3}
+ u(\overline{e}_{1}(f)\bm_{2} - \overline{e}_{2}(f) \bm_{1})\right)^{2}
- \frac{u^{2} \be(\barna f)^{2} }{\volbarg}. \\
\end{split}
\end{align}
Similarly
\begin{align} \label{90}
\begin{split}
|\widetilde{B}|_{g}^{2}
=& \frac{1}{\volbarg}\left(\left( \bm^{1} - u \overline{e}_{2}(f) \be_{3}\right)^{2}
	+ \left( \bm^{2} + u \overline{e}_{1}(f)\be_{3} \right)^{2}\right) \\
&+\frac{1}{\volbarg}\left(\bm^{3} + u(\overline{e}_{2}(f)\be_{1} - \overline{e}_{1}(f) \be_{2})  \right)^{2}- \frac{u^{2}\bm(\barna f)^{2} }{\volbarg}. \\
\end{split}
\end{align}
Next observe the identity
\begin{equation}\label{91}
v = \frac{u \nabla f}{\sqrt{\volg}} = \frac{u \barna f + u |\barna f|_{\bg}^{2} \by}{\sqrt{\volbarg}},
\end{equation}
so that
\begin{align} \label{92}
\begin{split}
&\widetilde{E} \times \widetilde{B} (v) \\
=& \epsilon_{ijl}\widetilde{E}^{i}\widetilde{B}^{j}v^{l} \\
=& \sqrt{\volbarg} \left(v^{1} \left(\widetilde{E}^{2}\widetilde{B}^{3}
-\widetilde{E}^{3}\widetilde{B}^{2}\right)
	+ v^{2} \left(\widetilde{E}^{3}\widetilde{B}^{1}-\widetilde{E}^{1}\widetilde{B}^{3}\right)
	+ v^{3} \left(\widetilde{E}^{1}\widetilde{B}^{2}-\widetilde{E}^{2}\widetilde{B}^{1}\right)\right) \\
=& \sqrt{\volbarg} \left(v^{3} \left(\widetilde{E}^{1}\widetilde{B}^{2}-\widetilde{E}^{2}\widetilde{B}^{1}\right)
	+ \widetilde{E}^{3} \left( v^{2}\widetilde{B}^{1} - v^{1}\widetilde{B}^{2}\right)
	+  \widetilde{B}^{3} \left( v^{1}\widetilde{E}^{2} - v^{2}\widetilde{E}^{1}\right)\right)\\
=& \frac{u |\barna f|^{2} \by(\overline{e}_{3})}{\volbarg}\left(\be^{1}\bm^{2}-\bm^{1}\be^{2}
	 + u \be(\barna f)\be_{3} + u \bm(\barna f) \bm_{3}\right) \\
&- \frac{\left( \be^{3}+ \by(\overline{e}_{3}) \be(\barna f)
    + u(\overline{e}_{1}(f)\bm_{2}-\overline{e}_{2}(f)\bm_{1})\right)}{\volbarg}
	\left(u(\overline{e}_{1}(f)\bm_{2}-\overline{e}_{2}(f)\bm_{1})
+ u^{2} |\barna f|_{\bg}^{2} \be_{3}\right) \\
&- \frac{\left(\bm^{3}+ \by(\overline{e}_{3}) \bm(\barna f)
	+ u(\overline{e}_{2}(f)\be_{1}-\overline{e}_{1}(f)\be_{2})\right)}{\volbarg}
	\left(u(\overline{e}_{2}(f)\be_{1}-\overline{e}_{1}(f)\be_{2})
+ u^{2} |\barna f|_{\bg}^{2}\bm_{3}\right) \\
=& - \frac{\left(\be^{3} +  u(\overline{e}_{1}(f)\bm_{2}
-\overline{e}_{2}(f)\bm_{1})\right)^{2}}{\volbarg}
	- \frac{\left(\bm^{3} + u(\overline{e}_{2}(f)\be_{1}
-\overline{e}_{1}(f)\be_{2})\right)^{2}}{\volbarg} \\
&+ \be^{3}\left(\be^{3} +  u(\overline{e}_{1}(f)\bm_{2}-\overline{e}_{2}(f)\bm_{1})\right)
	+ \bm^{3}\left(\bm^{3} + u(\overline{e}_{2}(f)\be_{1}-\overline{e}_{1}(f)\be_{2})\right).
\end{split}
\end{align}
Then combining \eqref{89}, \eqref{90}, and \eqref{92} produces
\begin{align} \label{93}
\begin{split}
& |\widetilde{E}|_{g}^{2} +  |\widetilde{B}|_{g}^{2} + 2 \widetilde{E} \times \widetilde{B}(v)  \\
=&\frac{1}{\volbarg}\left(\left( \be^{1} + u \overline{e}_{2}(f) \bm_{3} \right)^{2}
	+ \left( \be^{2} - u \overline{e}_{1}(f)\bm_{3} \right)^{2}\right) \\
& +\frac{1}{\volbarg}
\left(\left( \bm^{1} - u \overline{e}_{2}(f) \be_{3} \right)^{2}
	+ \left( \bm^{2} + u \overline{e}_{1}(f)\be_{3} \right)^{2}\right) \\
&- \frac{u^{2} \be(\barna f)^{2} }{\volbarg}
 	- \frac{u^{2}\bm(\barna f)^{2} }{\volbarg} \\
& - \frac{\left(\be^{3} +u(\overline{e}_{1}(f)\bm_{2}-\overline{e}_{2}(f)\bm_{1}) \right)^{2}}{\volbarg}
	- \frac{\left(\bm^{3}
+ u(\overline{e}_{2}(f)\be_{1}-\overline{e}_{1}(f)\be_{2}) \right)^{2}}{\volbarg}\\
& + 2\be^{3}\left(\be^{3} +  u(\overline{e}_{1}(f)\bm_{2}-\overline{e}_{2}(f)\bm_{1})\right)
	+ 2\bm^{3}\left(\bm^{3} + u(\overline{e}_{2}(f)\be_{1}-\overline{e}_{1}(f)\be_{2})\right)\\
=&\frac{1}{\volbarg}\left((\be^{1})^{2}+(\be^{2})^{2} +(\bm^{1})^{2}+(\bm^{2})^{2}\right)\\
&-\frac{u^{2}}{\volbarg}\left( \be(\barna f)^{2} + \bm(\barna f)^{2}
	 +(\overline{e}_{1}(f)\bm_{2}-\overline{e}_{2}(f)\bm_{1})^{2}
	+ (\overline{e}_{2}(f)\be_{1}-\overline{e}_{1}(f)\be_{2})^{2}\right) \\
&+ \be^{3}\left(\be^{3} +  2u(\overline{e}_{1}(f)\bm_{2}-\overline{e}_{2}(f)\bm_{1})\right)
	+ \bm^{3}\left(\bm^{3} + 2u(\overline{e}_{2}(f)\be_{1}-\overline{e}_{1}(f)\be_{2})\right) \\
=& (\be^{1})^{2}+(\be^{2})^{2} +(\bm^{1})^{2}+(\bm^{2})^{2} \\
&+ \be^{3}\left(\be^{3} +  2u(\overline{e}_{1}(f)\bm_{2}-\overline{e}_{2}(f)\bm_{1})\right)
	+ \bm^{3}\left(\bm^{3} + 2u(\overline{e}_{2}(f)\be_{1}-\overline{e}_{1}(f)\be_{2})\right).
\end{split}
\end{align}
The desired conclusion now follows from the assumption $\be(\eta)=\bm(\eta)=0$.
\end{proof}

We have now recorded enough properties of the deformed Maxwell field to obtain the final form of the scalar curvature identity.

\begin{cor}\label{cor1}
Under the hypotheses of Theorem \ref{thm1}, together with \eqref{38}, the scalar curvature of $\overline{g}$ is given by
\begin{align}\label{94}
\begin{split}
\overline{R}-|\overline{k}|_{\overline{g}}^{2}-2(|\overline{E}|_{\overline{g}}^{2}+|\overline{B}|_{\overline{g}}^{2})
=& 16\pi(\mu_{EM}-J_{EM}(v))+|k-\pi|_{g}^{2}+2u^{-1}div_{\overline{g}}(uQ)\\
&+2\left(E(e_{3})- \widetilde{E}(e_{3})\right)^{2}  + 2\left(B(e_{3})- \widetilde{B}(e_{3}) \right)^{2}.
\end{split}
\end{align}
\end{cor}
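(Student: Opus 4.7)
The plan is to treat the corollary as a pure substitution exercise into the identity \eqref{32} from Theorem \ref{thm1}, systematically using the results already established in this section to collapse each of the extra terms on the right-hand side.

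First, I would invoke the trace condition \eqref{38}, which forces $Tr_{g}\pi = Tr_{g}k$, so the entire line $(Tr_{g}\pi)^{2}-(Tr_{g}k)^{2}+2v(Tr_{g}\pi-Tr_{g}k)$ in \eqref{32} vanishes. Next, by construction of $(\widetilde E,\widetilde B)$ through \eqref{45}--\eqref{46} and the subsequent field-strength relation \eqref{63}, Lemma \ref{lemma2} together with \eqref{47} yields \eqref{43}, namely $\overline{E}(\eta)=\overline{B}(\eta)=0$. With \eqref{43} in hand, Lemma \ref{lemma3} applies and gives \eqref{36}, which precisely kills the line $2(|\widetilde{E}|_{g}^{2}+|\widetilde{B}|_{g}^{2}+2\widetilde{E}\times\widetilde{B}(v)-|\overline{E}|_{\overline{g}}^{2}-|\overline{B}|_{\overline{g}}^{2})$.

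Then I would apply \eqref{37}, which by Remark \ref{rem1} follows from Lemma \ref{lemma2}, \eqref{71}, \eqref{79}, and the anti-symmetry of the cross product; this immediately annihilates the final line
\[
4u^{-1}Y^{\phi}\left(\frac{\widetilde{E}\times \widetilde{B}(\eta)}{\sqrt{1+u^{2}|\nabla f|_{g}^{2}}}-\overline{E}\times\overline{B}(\eta)\right).
\]
At this point \eqref{32} reduces to
\[
\overline{R}-|\overline{k}|_{\overline{g}}^{2}-2(|\overline{E}|_{\overline{g}}^{2}+|\overline{B}|_{\overline{g}}^{2})
= 16\pi(\widetilde{\mu}_{EM}-\widetilde{J}_{EM}(v))+|k-\pi|_{g}^{2}+2u^{-1}div_{\overline{g}}(uQ).
\]
The last step is to replace the tilded matter/momentum densities by the original ones: multiplying \eqref{51} of Lemma \ref{lemma1} by $2$ gives
\[
16\pi(\widetilde{\mu}_{EM}-\widetilde{J}_{EM}(v)) = 16\pi(\mu_{EM}-J_{EM}(v))+2\left(E(e_{3})-\widetilde{E}(e_{3})\right)^{2}+2\left(B(e_{3})-\widetilde{B}(e_{3})\right)^{2},
\]
and substituting produces \eqref{94}.

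Since every ingredient has already been isolated in Theorem \ref{thm1}, Lemmas \ref{lemma1}--\ref{lemma3}, and Remark \ref{rem1}, there is essentially no obstacle: the proof is a bookkeeping step. The only point requiring a moment of care is verifying that \eqref{43} is indeed available, which is why the construction was arranged so that \eqref{47} holds a priori for $(\widetilde{E},\widetilde{B})$; this ensures the hypothesis of Lemma \ref{lemma3} is satisfied and the chain of substitutions closes cleanly.
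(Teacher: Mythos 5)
Your proposal is correct and is essentially the same argument as the paper's: the paper's proof simply cites \eqref{39} (which is \eqref{32} after imposing \eqref{38}, \eqref{36}, and \eqref{37}), Lemmas \ref{lemma1} and \ref{lemma3}, and Remark \ref{rem1}, and your write-up merely spells out each of those substitutions explicitly, with the only new content being the replacement $16\pi(\widetilde{\mu}_{EM}-\widetilde{J}_{EM}(v))=16\pi(\mu_{EM}-J_{EM}(v))+2(E(e_{3})-\widetilde{E}(e_{3}))^{2}+2(B(e_{3})-\widetilde{B}(e_{3}))^{2}$ from \eqref{51}, exactly as the paper intends.
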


\begin{proof}
This follows directly from \eqref{39}, Lemmas \ref{lemma1} and \ref{lemma3}, and Remark \ref{rem1}.
\end{proof}

Lastly, we show that the deformed Maxwell field preserves the divergence free condition.

\begin{lemma} \label{lemma4}
If \eqref{43} holds, then
\begin{equation}\label{95}
div_{\bg}\be = \frac{div_{g} \widetilde{E}}{\sqrt{\volg}}
= \frac{div_{g} E}{\sqrt{\volg}}, \text{ }\text{ }\text{ }\text{ }\text{ }\text{ }
div_{\bg}\bm = \frac{div_{g} \widetilde{B}}{\sqrt{\volg}}= \frac{div_{g} B}{\sqrt{\volg}}.
\end{equation}
\end{lemma}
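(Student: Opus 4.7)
The second equalities in each identity of \eqref{95} are direct consequences of Lemma \ref{lemma1} (specifically \eqref{49}). Thus the task reduces to proving $div_{\bg} \be = div_g \widetilde E / \sqrt{\volg}$ and $div_{\bg} \bm = div_g \widetilde B / \sqrt{\volg}$.

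My plan is to exploit the fact that $(\be, \bm)$ and $(\widetilde E, \widetilde B)$ arise from the same spacetime 2-form $\bF$ via pullback to the $t=0$ slice and the graph $\{t=f(x)\}$ respectively. Under the natural identification $M \cong \overline M$ by spatial coordinates, equations \eqref{59} and \eqref{63} give $\iota_0^* \bF = \iota_{\bm}\, dvol_{\bg}$ and $\iota_{\mathrm{graph}}^* \bF = \iota_{\widetilde B}\, dvol_g$, so that their exterior derivatives are $(div_{\bg} \bm)\, dvol_{\bg}$ and $(div_g \widetilde B)\, dvol_g$ respectively. Since $dvol_{\bg} = \sqrt{\volg}\, dvol_g$ by \eqref{71} and \eqref{79}, the desired identity for $\bm$ is equivalent to the 3-form equality $d(\iota_{\mathrm{graph}}^* \bF) = d(\iota_0^* \bF)$ on the common coordinate manifold.

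In coordinates $(x^i, t)$, write $\bF = \tfrac{1}{2} F_{ij} dx^i \wedge dx^j + F_{it} dx^i \wedge dt$. Stationarity gives $d\bF = \tfrac{1}{2}(\partial_k F_{ij}) dx^k \wedge dx^i \wedge dx^j + \omega_t \wedge dt$ where $\omega_t = (\partial_k F_{it}) dx^k \wedge dx^i$. Since $\iota_0^* dt = 0$ while $\iota_{\mathrm{graph}}^* dt = df$, the two pullbacks of $d\bF$ differ precisely by $\omega_t \wedge df$. I claim $\omega_t \wedge df = 0$ under axisymmetry and \eqref{43}. Indeed, axisymmetry gives $\partial_\phi F_{AB} = 0$, eliminating the $k = \phi$ contributions to $\omega_t$; and writing $\partial_t = u n - Y^\phi \partial_\phi$, assumption \eqref{43} forces $F_{\phi t} = \bF(\partial_\phi, \partial_t) = u \sqrt{\bg_{\phi\phi}}\, \be_3 = 0$, eliminating the $i = \phi$ contributions. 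Thus $\omega_t$ is a multiple of $d\overline\rho \wedge d\overline z$, and since $df = f_{\overline\rho}\, d\overline\rho + f_{\overline z}\, d\overline z$ lies in this same 2-dimensional subspace, $\omega_t \wedge df$ vanishes identically.

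The argument for $\be$ is analogous, using the spacetime Hodge dual $*_{\tg} \bF$ in place of $\bF$. Its pullbacks to the two slices give $\iota_{\be}\, dvol_{\bg}$ and $\iota_{\widetilde E}\, dvol_g$ (up to sign), and the corresponding obstruction term $\omega_t^E \wedge df$ involves the $(\phi, t)$-component of $*_{\tg}\bF$; by standard Lorentzian Hodge duality this component is proportional to $\bm_3$, which vanishes by \eqref{43}, and the identical dimensional collapse yields $\omega_t^E \wedge df = 0$. The main technical point will be carefully verifying this Hodge-star relation in the stationary (non-static) spacetime metric $\bg - 2Y_i dx^i dt - \varphi dt^2$, but the underlying geometric mechanism---axisymmetry together with \eqref{43} forcing all relevant forms into the 2-plane spanned by $d\overline\rho$ and $d\overline z$---applies uniformly to both the electric and magnetic cases.
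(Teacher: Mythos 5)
Your proposal is correct, and it takes a genuinely different route from both proofs in the paper. The paper's main-text proof introduces the $4$-current $\overline{J}_{b}=\tilna^{a}\overline{F}_{ab}$, uses $div_{\bg}\be=\overline{J}(n)$ and $div_{g}\widetilde{E}=\overline{J}(N)$, and reduces the lemma to showing $\overline{J}(u\barna f)=0$; that last step is carried out by a frame computation of $\overline{J}(\overline{e}_{1})$ and $\overline{J}(\overline{e}_{2})$ that runs through the Christoffel symbols of $\tg$ and the second fundamental form $\bk$, and the alternate proof in Appendix~A is even more directly computational. You bypass all of that by treating $\bF$ and $\ast_{\tg}\bF$ as $2$-forms and comparing pullbacks: since $\iota_{0}^{*}\bF=\iota_{\bm}dvol_{\bg}$ and $\iota_{\mathrm{graph}}^{*}\bF=\iota_{\widetilde{B}}dvol_{g}$, the divergence identity becomes the $3$-form equality $d(\iota_{0}^{*}\bF)=d(\iota_{\mathrm{graph}}^{*}\bF)$, and the two pullbacks of the stationary form $d\bF$ differ only by $\omega_{t}\wedge df$, which collapses to zero because axisymmetry and $F_{\phi t}\propto\be_{3}=0$ force $\omega_{t}$ and $df$ into the two-dimensional span of $d\overline{\rho},d\overline{z}$; the electric case is the same with $(\ast_{\tg}\bF)_{\phi t}\propto\bm_{3}=0$. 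Both arguments hinge on identical inputs --- axisymmetry and $\be_{3}=\bm_{3}=0$ --- and both obtain the conversion factor via \eqref{71} and \eqref{79}, but your forms-level argument isolates the geometric mechanism without any connection-coefficient bookkeeping, at the cost of needing the (routine, pointwise, orientation-consistent) Lorentzian Hodge-star identity relating the restriction of $\ast_{\tg}\bF$ to a spacelike slice with the interior product of the induced electric field against the slice volume form; the paper's calculation is heavier but also produces the explicit expression \eqref{104} for $\overline{J}(\overline{e}_{i})$ as a byproduct.
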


\begin{proof}
Note that the equalities relating the divergences of $(\widetilde{E},\widetilde{B})$ and $(E,B)$ were already given in Lemma \ref{lemma1}. Here we will compute the divergences of $(\overline{E},\overline{B})$ in terms of the divergences of $(\widetilde{E},\widetilde{B})$. Two different proofs will be given.
The first proof, presented below, is based on conceptual understanding, whereas the second proof is given in Appendix A and is based on direct computations.  In what follows $i,j,l\in\{1,2,3\}$, $a,b\in\{0,1,2,3\}$ and repeated indices will be summed.

Define the $4$-current as follows
\begin{equation} \label{96}
\overline{J}_{b}= \tilna^{a} \overline{F}_{ab},
\end{equation}
where $\tilna$ is the Levi-Civita connection for the metric $\tg= \bg-Y_{i}dx^{i}dt-\varphi dt^{2}$ on the spacetime $\overline{M}\times\mathbb{R}$. The relation between the $4$-current and the electric fields is given by
\begin{equation} \label{97}
div_{\bg} \be= \overline{J}(n), \quad \quad div_{g} \widetilde{E} = \overline{J}(N).
\end{equation}
Therefore
\begin{equation} \label{98}
div_{\bg}\be = \overline{J}(n)= \overline{J}\left(\sqrt{\volbarg}\text{ }N\right) - \overline{J}(u\barna f) = \sqrt{\volbarg}\text{ }div_{g} \widetilde{E} - \overline{J}(u\barna f).
\end{equation}

We will show that $\overline{J}(u \barna f)=0$ whenever $\be(\eta)=\bm(\eta)=0$, by directly computing $\overline{J}_{i}$. Observe that
\begin{align} \label{99}
\begin{split}
\overline{J}(\overline{e}_{i})=& \tilna^{a}\overline{F}_{ai}   \\
=& \overline{e}_{j}(\overline{F}_{ji})
+ \overline{F}(\tilna_{n} n, \overline{e}_{i})
+ \overline{F}(n, \tilna_{n}\overline{e}_{i})
- \overline{F}(\tilna_{\overline{e}_{j}}\overline{e}_{j}, \overline{e}_{i})
- \overline{F}(\overline{e}_{j}, \tilna_{\overline{e}_{j}}\overline{e}_{i})
 \\
=& \overline{e}_{j}(\overline{F}_{ji})
	- \bg(\barna_{\overline{e}_{j}}\overline{e}_{j}, \overline{e}_{l}) \overline{F}_{li}
	- \bg(\barna_{\overline{e}_{j}}\overline{e}_{i}, \overline{e}_{l}) \overline{F}_{jl}\\
&- \tg(\tilna_{\overline{e}_{j}}\overline{e}_{j}, n) \overline{F}_{0i}
	- \tg(\tilna_{\overline{e}_{j}}\overline{e}_{i}, n) \overline{F}_{j0}
	+ \tg(\tilna_{n}n, \overline{e}_{j}) \overline{F}_{ji}
	+ \tg(\tilna_{n}\overline{e}_{i}, \overline{e}_{j}) \overline{F}_{0j}.
\end{split}
\end{align}
Recall that $Tr_{\bg}\bk=0$, and that \eqref{29.1} the only nonzero components of $\bk$ are $\bk(\overline{e}_{3}, \overline{e}_{i})$, $i=1,2$. Thus
\begin{equation} \label{100}
\tg(\tilna_{\overline{e}_{j}}\overline{e}_{j}, n)
= - Tr_{\bg} \bk =0,
\end{equation}
\begin{equation} \label{101}
\tg(\tilna_{\overline{e}_{j}}\overline{e}_{i}, n)
= - \bk(\overline{e}_{j}, \overline{e}_{i})
= - \bk(\overline{e}_{3}, \overline{e}_{i})\delta_{j3},
\end{equation}
\begin{align} \label{102}
\begin{split}
 \tg(\tilna_{n}n, \overline{e}_{j})
&= -\tg(\tilna_{n}\overline{e}_{j}, n) \\
&= - \frac{1}{u} \tg(\tilna_{\partial_{t} + \by} \overline{e}_{j}, n) \\
&= - \frac{1}{u} \left(\tg(\tilna_{\overline{e}_{j}}(un - \by), n)
-\bk(\overline{e}_{j}, \by) \right) \\
&= u^{-1}\overline{e}_{j}(u),
\end{split}
\end{align}
and
\begin{align} \label{103}
\begin{split}
\tg(\tilna_{n}\overline{e}_{i}, \overline{e}_{j})
&= \frac{1}{u} \tg(\tilna_{\partial_{t} + \by} \overline{e}_{i}, \overline{e}_{j}) \\
&= \frac{1}{u} \tg(\tilna_{\overline{e}_{i}}(u n - \by), \overline{e}_{j})
+ \frac{1}{u}\tg(\tilna_{\by} \overline{e}_{i}, \overline{e}_{j}) \\
&= \bk(\overline{e_{i}}, \overline{e}_{3})\delta_{3j}
- u^{-1}|\eta| \overline{e}_{i}(\byp) \delta_{3j},
\end{split}
\end{align}
where in the last two equations the fact that $\partial_{t}$ is a Killing field was utilized.
Substituting these expressions into \eqref{99} produces
\begin{equation} \label{104}
\overline{J}(\overline{e}_{i})
= \overline{e}_{j}(\overline{F}_{ji})
	- \bg(\barna_{\overline{e}_{j}}\overline{e}_{j}, \overline{e}_{l}) \overline{F}_{li}
	- \bg(\barna_{\overline{e}_{j}}\overline{e}_{i}, \overline{e}_{l}) \overline{F}_{jl}
	+ u^{-1}\overline{e}_{j}(u) \overline{F}_{ji}
	- u^{-1}|\eta| \overline{e}_{i}(\byp)\overline{F}_{03}.
\end{equation}

Note that $\overline{e}_{3}=|\eta|^{-1}\eta$, and so $\bg(\barna_{ \overline{e}_{j}} \overline{e}_{j}, \overline{e}_{3})=0$ by the Killing equation. Hence
\begin{align} \label{105}
\begin{split}
\overline{J}(\overline{e}_{1})
=& \overline{e}_{2}(\overline{F}_{21})
	- \bg(\barna_{\overline{e}_{j}}\overline{e}_{j}, \overline{e}_{2}) \overline{F}_{21}
   - \left(\bg(\barna_{\overline{e}_{1}}\overline{e}_{1}, \overline{e}_{2}) \overline{F}_{12}
			+\bg(\barna_{\overline{e}_{3}}\overline{e}_{1}, \overline{e}_{2}) \overline{F}_{32}
			+\bg(\barna_{\overline{e}_{2}}\overline{e}_{1}, \overline{e}_{3}) \overline{F}_{23}\right)\\
&	+ u^{-1}\overline{e}_{2}(u) \overline{F}_{21}
	- u^{-1}|\eta| \overline{e}_{1}(\byp)\overline{F}_{03}\\
=& 	\overline{e}_{2}(\overline{F}_{21})
	- \bg(\barna_{\overline{e}_{j}}\overline{e}_{j}, \overline{e}_{2}) \overline{F}_{21}
    - \bg(\barna_{\overline{e}_{1}}\overline{e}_{1}, \overline{e}_{2}) \overline{F}_{12}
	+ u^{-1}\overline{e}_{2}(u) \overline{F}_{21}
	- u^{-1}|\eta| \overline{e}_{1}(\byp)\overline{F}_{03},
\end{split}
\end{align}
and
\begin{equation} \label{106}
\overline{J}(\overline{e}_{2})
= \overline{e}_{1}(\overline{F}_{12})
	- \bg(\barna_{\overline{e}_{j}}\overline{e}_{j}, \overline{e}_{1}) \overline{F}_{12}
	- \bg(\barna_{\overline{e}_{2}}\overline{e}_{2}, \overline{e}_{1}) \overline{F}_{21}
	+ u^{-1}\overline{e}_{1}(u) \overline{F}_{12}
	- u^{-1}|\eta| \overline{e}_{2}(\byp)\overline{F}_{03}.
\end{equation}
Therefore $\overline{J}(\overline{e}_{1})=\overline{J}(\overline{e}_{2})=0$ since $\be_{3}=\bm_{3}=0$, and in particular
\begin{equation}\label{107}
\overline{J}(\barna f)
= \overline{e}_{1}(f)\overline{J}(\overline{e}_{1})+\overline{e}_{2}(f)\overline{J}(\overline{e}_{2}) =0.
\end{equation}
This gives the desired result for the divergence of the electric fields.

An analogous procedure with $\overline{F}$ replaced by $\ast \overline{F}$, where $\ast$ is the Hodge star operator with respect to the metric $\widetilde{g}$, yields the desired result for the divergence of the magnetic fields.
\end{proof}

\section{The Reduction Argument and Case of Equality}
\label{sec4} \setcounter{equation}{0}
\setcounter{section}{4}

We will now follow the maximal case proof of the mass-angular momentum-charge inequality, within the setting of the deformed initial data $(\overline{M},\overline{g},\overline{k},\overline{E},\overline{B})$.
The main difficulty is a lack of the pointwise scalar curvature inequality as appearing in \eqref{21}. However a judicious choice of $u$ will overcome this problem. Before explaining this further, it is necessary to introduce the appropriate potentials. In electrovacuum the existence of potentials is well-known, and the proof relies on the full Maxwell equations, see for example \cite{Costa}, \cite{Weinstein}. In the current setting the initial data are not necessarily electrovacuum, and we do not have knowledge of the full Maxwell equations, but rather just Gauss's Law. Nevertheless, the desired potentials still exist under our assumptions.

\begin{lemma} \label{lemma5}
Let $\{\overline{\theta}^{0},\overline{\theta}^{i}\}$ be the dual co-frame to $\{n,\overline{e}_{i}\}$, and assume that $\overline{E}(\eta)=\overline{B}(\eta)=0$. Then
\begin{equation}\label{108}
d (\overline{F}(\eta, \cdot) )=|\eta|(div_{\overline{g}}\overline{B})
\overline{\theta}^{2}\wedge\overline{\theta}^{1},\text{ }\text{ }\text{ }\text{ }\text{ }
d(\ast \overline{F}(\eta, \cdot))=|\eta|(div_{\overline{g}}\overline{E})
\overline{\theta}^{2}\wedge\overline{\theta}^{1}.
\end{equation}
In particular, if $(\overline{E}, \overline{B})$ are divergence free, then there exist potentials for the electromagnetic field such that
\begin{equation}\label{109}
d\overline{\psi}=\overline{F}(\eta, \cdot),\text{ }\text{ }\text{ }\text{ }\text{ }d\overline{\chi}
=\ast \overline{F} (\eta, \cdot).
\end{equation}
Moreover
\begin{equation}\label{110}
d\left(\overline{k}(\eta)\times\eta-\overline{\chi}d\overline{\psi}
+\overline{\psi}d\overline{\chi}\right)=|\eta|\left(\overline{J}_{EM}(\eta)-\overline{\chi}div_{\overline{g}}\overline{B}
+\overline{\psi}div_{\overline{g}}\overline{E}\right)\overline{\theta}^{2}\wedge\overline{\theta}^{1},
\end{equation}
so that if in addition $\overline{J}_{EM}(\eta)=0$, then there exists a charged twist potential
\begin{equation}\label{111}
d\overline{\omega}=\overline{k}(\eta)\times\eta-\overline{\chi}d\overline{\psi}
+\overline{\psi}d\overline{\chi}.
\end{equation}
\end{lemma}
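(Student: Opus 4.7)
The plan is to derive all three identities from Cartan's magic formula $d\,i_{\eta}+i_{\eta}\,d=\mathcal{L}_{\eta}$, exploiting the fact that every tensor in sight is $\mathcal{L}_{\eta}$-invariant in order to convert exterior derivatives of interior products into interior products of exterior derivatives. The key translation step is to recognize each of the 1-forms $\overline{F}(\eta,\cdot)$, $\ast\overline{F}(\eta,\cdot)$, and $\overline{k}(\eta)\times\eta$ as an interior product involving the three-dimensional volume form $\overline{\epsilon}$ of $\overline{g}$.

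First I would compute $\overline{F}(\eta,\cdot)$ as a spatial 1-form on the slice. Using $\eta=|\eta|\overline{e}_{3}$, the field-strength matrix \eqref{61}, and the hypothesis $\overline{E}(\eta)=\overline{B}(\eta)=0$ to kill the $\overline{\theta}^{0}$- and $\overline{\theta}^{3}$-components, a direct computation gives
\[
\overline{F}(\eta,\cdot)|_{\text{slice}}=|\eta|\overline{B}_{2}\overline{\theta}^{1}-|\eta|\overline{B}_{1}\overline{\theta}^{2}=i_{\eta}i_{\overline{B}}\overline{\epsilon}.
\]
An analogous Hodge-dual calculation---using that in signature $(-,+,+,+)$ the spatial electric and magnetic parts of $\ast\overline{F}$ are $-\overline{B}$ and $\overline{E}$ respectively---yields $\ast\overline{F}(\eta,\cdot)|_{\text{slice}}=i_{\eta}i_{\overline{E}}\overline{\epsilon}$. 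Then Cartan's formula, combined with $\mathcal{L}_{\eta}(i_{\overline{B}}\overline{\epsilon})=0$ (from axisymmetry of $\overline{B}$ and the Killing condition), the standard three-dimensional identity $d(i_{X}\overline{\epsilon})=(div_{\overline{g}}X)\overline{\epsilon}$, and $i_{\eta}\overline{\epsilon}=|\eta|\overline{\theta}^{1}\wedge\overline{\theta}^{2}$, yields
\[
d(\overline{F}(\eta,\cdot))=-i_{\eta}\bigl((div_{\overline{g}}\overline{B})\overline{\epsilon}\bigr)=|\eta|(div_{\overline{g}}\overline{B})\,\overline{\theta}^{2}\wedge\overline{\theta}^{1},
\]
with the same argument producing the $\ast\overline{F}$-identity. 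The potentials $\overline{\psi},\overline{\chi}$ then exist by simple connectivity of $\overline{M}\cong\mathbb{R}^{3}\setminus\{i_{1},\ldots,i_{N}\}$, which is preserved after removing codimension-three points.

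For the twist-potential identity, the same Cartan strategy applies to $\overline{k}(\eta)\times\eta=i_{\eta}i_{\overline{k}(\eta)^{\sharp}}\overline{\epsilon}$. Axisymmetry kills the $\mathcal{L}_{\eta}$ term, and the Killing identity $\overline{\nabla}_{(i}\eta_{j)}=0$ combined with the symmetry of $\overline{k}$ gives $div_{\overline{g}}\overline{k}(\eta)^{\sharp}=div_{\overline{g}}\overline{k}(\eta)$. The momentum constraint with $Tr_{\overline{g}}\overline{k}=0$ then yields $div_{\overline{g}}\overline{k}(\eta)=8\pi\overline{J}_{EM}(\eta)-2\overline{E}\times\overline{B}(\eta)$, up to the paper's internal normalization. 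Meanwhile, interpreting $d\overline{\psi},d\overline{\chi}$ as the 1-forms $\overline{F}(\eta,\cdot),\ast\overline{F}(\eta,\cdot)$ even when not globally exact,
\[
d(-\overline{\chi}d\overline{\psi}+\overline{\psi}d\overline{\chi})=2\,d\overline{\psi}\wedge d\overline{\chi}-\overline{\chi}\,d(d\overline{\psi})+\overline{\psi}\,d(d\overline{\chi}),
\]
so that $d(d\overline{\psi}),d(d\overline{\chi})$ contribute the $\overline{\chi}\,div_{\overline{g}}\overline{B}$ and $\overline{\psi}\,div_{\overline{g}}\overline{E}$ obstruction terms via \eqref{108}. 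A direct wedge computation using the expressions from Step 1 produces $d\overline{\psi}\wedge d\overline{\chi}=|\eta|\,\overline{E}\times\overline{B}(\eta)\,\overline{\theta}^{2}\wedge\overline{\theta}^{1}$, which is precisely what is needed to cancel the $-2\overline{E}\times\overline{B}(\eta)$ contribution coming from the momentum constraint. Combining gives \eqref{110}, and the potential $\overline{\omega}$ of \eqref{111} then follows from one more appeal to simple connectivity, since under $\overline{J}_{EM}(\eta)=0$ and divergence-free $\overline{E},\overline{B}$ the right-hand side of \eqref{110} vanishes.

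The main bookkeeping difficulty will be tracking signs and conventions carefully enough to verify that (i) the Lorentzian Hodge star produces exactly the $(\overline{E},\overline{B})\mapsto(-\overline{B},\overline{E})$ swap needed for $\ast\overline{F}(\eta,\cdot)=i_{\eta}i_{\overline{E}}\overline{\epsilon}$, and (ii) the $\overline{E}\times\overline{B}(\eta)$ contribution from $d\overline{\psi}\wedge d\overline{\chi}$ enters with the correct sign to cancel rather than reinforce the analogous contribution from $div_{\overline{g}}\overline{k}(\eta)$. Once these sign checks are settled, the remainder is a routine application of Cartan's formula and the axial-symmetry hypothesis.
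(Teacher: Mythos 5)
Your proposal is correct, and it follows a genuinely different route from the paper's. The paper proves \eqref{108} and \eqref{110} by direct computation in the Brill co-frame: it writes out $d\overline{\theta}^{1},d\overline{\theta}^{2},d\overline{\theta}^{3}$ explicitly from the coordinate form \eqref{56} of the metric (equations \eqref{113}--\eqref{114}), expands $d(\ast\overline{F}(\eta,\cdot))$ term by term, and matches the result against the coordinate formula \eqref{116} for $div_{\overline{g}}\overline{E}$; it then cites the identity \eqref{117} from \cite{DainKhuriWeinsteinYamada} for $d(\overline{k}(\eta)\times\eta)$ and computes $d(-\overline{\chi}d\overline{\psi}+\overline{\psi}d\overline{\chi})$ directly from the component formulas \eqref{118}. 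Your Cartan-formula approach instead recognizes each 1-form as $i_{\eta}i_{X}\overline{\epsilon}$ for the appropriate vector field $X\in\{\overline{B},\overline{E},\overline{k}(\eta)^{\sharp}\}$, kills the Lie-derivative term by axisymmetry and the Killing property of $\eta$, and invokes $d(i_{X}\overline{\epsilon})=(div_{\overline{g}}X)\overline{\epsilon}$. I checked the key signs: $i_{\eta}i_{\overline{B}}\overline{\epsilon}=|\eta|(\overline{B}(\overline{e}_{2})\overline{\theta}^{1}-\overline{B}(\overline{e}_{1})\overline{\theta}^{2})$ matches \eqref{112}; $i_{\eta}\overline{\epsilon}=-|\eta|\overline{\theta}^{2}\wedge\overline{\theta}^{1}$ gives the right sign in \eqref{108}; $\overline{k}_{ij}\overline{\nabla}^{i}\eta^{j}=0$ by symmetry/antisymmetry, so $div_{\overline{g}}(\overline{k}(\eta)^{\sharp})=div_{\overline{g}}\overline{k}(\eta)$; and the wedge $d\overline{\psi}\wedge d\overline{\chi}=|\eta|\,\overline{E}\times\overline{B}(\eta)\,\overline{\theta}^{2}\wedge\overline{\theta}^{1}$ is correct, so the graded Leibniz expansion of $d(-\overline{\chi}d\overline{\psi}+\overline{\psi}d\overline{\chi})$ recovers exactly the paper's \eqref{119}. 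You also correctly flag (``up to the paper's internal normalization'') that the lemma statement suppresses a factor $8\pi$ in front of $\overline{J}_{EM}(\eta)$ relative to the constraint \eqref{142}, which does not affect the final vanishing conclusion. The trade-off between the two routes: your argument is coordinate-free and makes the role of axisymmetry transparent (it is exactly what kills $\mathcal{L}_{\eta}$), at the price of requiring the reader to trust the identifications $\overline{F}(\eta,\cdot)=i_{\eta}i_{\overline{B}}\overline{\epsilon}$, $\ast\overline{F}(\eta,\cdot)=i_{\eta}i_{\overline{E}}\overline{\epsilon}$ and the Lorentzian Hodge-star sign conventions; the paper's computation, while longer, is fully explicit in Brill coordinates and leaves nothing to convention.
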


\begin{proof}
In the given frame, the field strength and its Hodge dual take the form
\begin{align} \label{112}
\begin{split}
\overline{F}(\eta, \cdot)
&= |\eta | \left(\be(\overline{e}_{3})\overline{\theta}^{0} + \bm(\overline{e}_{2})\overline{\theta}^{1}
-  \bm(\overline{e}_{1})\overline{\theta}^{2}\right)=
 |\eta | \left(\bm(\overline{e}_{2})\overline{\theta}^{1}
-  \bm(\overline{e}_{1})\overline{\theta}^{2}\right),\\
\ast \overline{F}(\eta, \cdot)
&= |\eta | \left(\bm(\overline{e}_{3})\overline{\theta}^{0} + \be(\overline{e}_{2})\overline{\theta}^{1}
- \be(\overline{e}_{1})\overline{\theta}^{2}\right)
=|\eta | \left(\be(\overline{e}_{2})\overline{\theta}^{1}
- \be(\overline{e}_{1})\overline{\theta}^{2}\right),
\end{split}
\end{align}
since $\overline{E}(\eta)=\overline{B}(\eta)=0$. A basic calculation shows that
\begin{equation}\label{113}
d\overline{\theta}^{1}=\overline{e}_{2}\left(\log e^{-\overline{U}+\overline{\alpha}}\right)\overline{\theta}^{2}\wedge\overline{\theta}^{1},\text{ }\text{ }\text{ }\text{ }\text{ }d\overline{\theta}^{2}=\overline{e}_{1}\left(\log e^{-\overline{U}+\overline{\alpha}}\right)\overline{\theta}^{1}\wedge\overline{\theta}^{2},
\end{equation}
\begin{equation}\label{114}
d\overline{\theta}^{3}=\overline{e}_{1}\left(\log \overline{\rho}e^{-\overline{U}}\right)\overline{\theta}^{1}\wedge\overline{\theta}^{3}
+\overline{e}_{2}\left(\log \overline{\rho}e^{-\overline{U}}\right)\overline{\theta}^{2}\wedge\overline{\theta}^{3}
+ \overline{\rho}e^{\overline{U}-2\overline{\alpha}}(A_{\overline{\rho},\overline{z}}
-A_{\overline{z},\overline{\rho}})\overline{\theta}^{2}\wedge\overline{\theta}^{1}.
\end{equation}
Then
\begin{equation}\label{115}
d\left(\ast\overline{F}(\eta,\cdot)\right)
=\sum_{i=1}^{2}\left(\overline{e}_{i}\left(|\eta|\overline{E}(\overline{e}_{i})\right)
+\overline{e}_{i}\left(\log e^{-\overline{U}+\overline{\alpha}}\right)|\eta|\overline{E}(\overline{e}_{i})\right)
\overline{\theta}^{2}\wedge\overline{\theta}^{1}.
\end{equation}
Since $|\eta|=\overline{\rho}e^{-\overline{U}}$ and
\begin{align}\label{116}
\begin{split}
div_{\overline{g}}\overline{E}&=\sum_{i=1}^{2}\overline{e}_{i}\left(\overline{E}(\overline{e}_{i})\right)
-\sum_{i=1}^{3}\sum_{j=1}^{2}\overline{g}(\overline{\nabla}_{\overline{e}_{i}}\overline{e}_{i},\overline{e}_{j})
\overline{E}(\overline{e}_{j})\\
&=\sum_{i=1}^{2}\overline{e}_{i}\left(\overline{E}(\overline{e}_{i})\right)
+\sum_{j=1}^{2}\overline{e}_{j}\left(\log \overline{\rho}e^{-2\overline{U}+\overline{\alpha}}\right)
\overline{E}(\overline{e}_{j}),
\end{split}
\end{align}
the second equation in \eqref{108} follows. Similar computations yield the first equation in \eqref{108}.

Next recall \cite{DainKhuriWeinsteinYamada} that
\begin{equation}\label{117}
d(\overline{k}(\eta)\times\eta)
=|\eta|div_{\overline{g}}\overline{k}(\eta)\overline{\theta}^{2}\wedge\overline{\theta}^{1}.
\end{equation}
Furthermore since
\begin{equation}\label{118}
\overline{e}_{1}(\overline{\psi})=|\eta|\overline{B}(\overline{e}_{2}),\text{ }\text{ }\text{ }\text{ }\text{ }\overline{e}_{2}(\overline{\psi})=-|\eta|\overline{B}(\overline{e}_{1}),\text{ }\text{ }\text{ }\text{ }\text{ }\overline{e}_{1}(\overline{\chi})=|\eta|\overline{E}(\overline{e}_{2}),\text{ }\text{ }\text{ }\text{ }\text{ }\overline{e}_{2}(\overline{\chi})=-|\eta|\overline{E}(\overline{e}_{1}),
\end{equation}
we have
\begin{align}\label{119}
\begin{split}
&d\left([-\overline{\chi}\overline{e}_{1}(\overline{\psi})+\overline{\psi}\overline{e}_{1}(\overline{\chi})]
\overline{\theta}^{1}
+[-\overline{\chi}\overline{e}_{2}(\overline{\psi})+\overline{\psi}\overline{e}_{2}(\overline{\chi})]
\overline{\theta}^{2}\right)\\
=&\overline{e}_{2}[-\overline{\chi}\overline{e}_{1}(\overline{\psi})+\overline{\psi}\overline{e}_{1}(\overline{\chi})]
\overline{\theta}^{2}\wedge\overline{\theta}^{1}
+[-\overline{\chi}\overline{e}_{1}(\overline{\psi})+\overline{\psi}\overline{e}_{1}(\overline{\chi})]
\overline{e}_{2}\left(\log e^{-\overline{U}+\overline{\alpha}}\right)\overline{\theta}^{2}\wedge\overline{\theta}^{1}\\
&+\overline{e}_{1}[-\overline{\chi}\overline{e}_{2}(\overline{\psi})+\overline{\psi}\overline{e}_{2}(\overline{\chi})]
\overline{\theta}^{1}\wedge\overline{\theta}^{2}
+[-\overline{\chi}\overline{e}_{2}(\overline{\psi})+\overline{\psi}\overline{e}_{2}(\overline{\chi})]
\overline{e}_{1}\left(\log e^{-\overline{U}+\overline{\alpha}}\right)\overline{\theta}^{1}\wedge\overline{\theta}^{2}\\
=&2|\eta|^{2}\left(\overline{E}(\overline{e}_{1})\overline{B}(\overline{e}_{2})
-\overline{E}(\overline{e}_{2})\overline{B}(\overline{e}_{1})\right)\overline{\theta}^{2}\wedge\overline{\theta}^{1}\\
&-|\eta|\overline{\chi}\sum_{i=1}^{2}\left[\overline{e}_{i}\left(|\eta|\overline{B}(\overline{e}_{i})\right)
+|\eta|\overline{B}(\overline{e}_{i})\overline{e}_{i}\left(\log e^{-\overline{U}+\overline{\alpha}}\right)\right]\overline{\theta}^{2}\wedge\overline{\theta}^{1}\\
&+|\eta|\overline{\psi}\sum_{i=1}^{2}\left[\overline{e}_{i}\left(|\eta|\overline{E}(\overline{e}_{i})\right)
+|\eta|\overline{E}(\overline{e}_{i})\overline{e}_{i}\left(\log e^{-\overline{U}+\overline{\alpha}}\right)\right]\overline{\theta}^{2}\wedge\overline{\theta}^{1}\\
=&|\eta|\left(2\overline{E}\times\overline{B}(\eta)-\overline{\chi}div_{\overline{g}}\overline{B}
+\overline{\psi}div_{\overline{g}}\overline{E}\right)\overline{\theta}^{2}\wedge\overline{\theta}^{1}.
\end{split}
\end{align}
Combining \eqref{117} and \eqref{119} yields \eqref{110}.

Lastly, the potentials exist since $\overline{M}$ is simply connected.
\end{proof}

\begin{remark}\label{remark2}
In equation \eqref{10}, a seemingly different equation was used to define a charged twist potential for the given initial data $(M,g,k,E,B)$. Although this equation appears to be different from \eqref{111}, when
both are applied to the data $(\overline{M},\overline{g},\overline{k},\overline{E},\overline{B})$, they coincide and realize the same result. This will be shown in Appendix B.
\end{remark}

We now show how to choose the remaining unknown in the description of the deformed initial data, namely $u$. Recall that in Brill coordinates, the mass may be expressed in a simple formula (\cite{Brill}, \cite{Dain0})
\begin{equation}\label{120}
\overline{m}
=\frac{1}{32\pi}\int_{\mathbb{R}^{3}}\left(2e^{-2\overline{U}+2\overline{\alpha}}\!\text{ }\overline{R}
+\overline{\rho}^{2}e^{-2\overline{\alpha}}
(A_{\overline{\rho},\overline{z}}-A_{\overline{z},\overline{\rho}})^{2}
+4|\partial\overline{U}|^{2}\right)dx,
\end{equation}
where $|\partial\overline{U}|$ and $dx$ denote the Euclidean norm and volume element. Let
\begin{equation}\label{121}
\mathcal{M}(\overline{U},\overline{\omega},\overline{\chi},\overline{\psi})
=\frac{1}{8\pi}\int_{\mathbb{R}^{3}}\left(|\partial\overline{U}|^{2}
+\frac{e^{4\overline{U}}}{\overline{\rho}^{4}}|\partial\overline{\omega}
+\overline{\chi}\partial\overline{\psi}-\overline{\psi}\partial\overline{\chi}|^{2}
+\frac{e^{2\overline{U}}}{\overline{\rho}^{2}}\left(|\partial\overline{\chi}|^{2}
+|\partial\overline{\psi}|^{2}\right)\right)dx,
\end{equation}
then the mass formula may be reexpressed as
\begin{align}\label{122}
\begin{split}
\overline{m}-\mathcal{M}(\overline{U},\overline{\omega},\overline{\chi},\overline{\psi})
=&\frac{1}{32\pi}\int_{\mathbb{R}^{3}}\left(2e^{-2\overline{U}+2\overline{\alpha}}\!\text{ }\overline{R}
+\overline{\rho}^{2}e^{-2\overline{\alpha}}
(A_{\overline{\rho},\overline{z}}-A_{\overline{z},\overline{\rho}})^{2}\right)dx\\
&-\frac{1}{8\pi}\int_{\mathbb{R}^{3}}\left(\frac{e^{4\overline{U}}}{\overline{\rho}^{4}}|\partial\overline{\omega}
+\overline{\chi}\partial\overline{\psi}-\overline{\psi}\partial\overline{\chi}|^{2}
+\frac{e^{2\overline{U}}}{\overline{\rho}^{2}}\left(|\partial\overline{\chi}|^{2}
+|\partial\overline{\psi}|^{2}\right)\right)dx.
\end{split}
\end{align}
The goal is to show that the right-hand side is nonnegative, by using the lower bound for scalar curvature to dominate the terms on the second line. In this regard, we record the following observation.

\begin{lemma}\label{lemma6}
Under the assumptions of Lemma \ref{lemma5} guaranteeing the existence of potentials
\begin{equation}\label{123}
|\overline{E}|_{\overline{g}}^{2}+|\overline{B}|_{\overline{g}}^{2}
=\frac{e^{4\overline{U}-2\overline{\alpha}}}{\overline{\rho}^{2}}\left(|\partial\overline{\chi}|^{2}
+|\partial\overline{\psi}|^{2}\right),\text{ }\text{ }\text{ }\text{ }\text{ }
|\overline{k}|_{\overline{g}}^{2}
=2\frac{e^{6\overline{U}-2\overline{\alpha}}}{\overline{\rho}^{4}}|\partial\overline{\omega}
+\overline{\chi}\partial\overline{\psi}-\overline{\psi}\partial\overline{\chi}|^{2}.
\end{equation}
\end{lemma}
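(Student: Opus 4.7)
The proof breaks into two independent computations, one for each identity in \eqref{123}, and both are done by reading off components in the Brill orthonormal frame $\{\overline{e}_{i}\}$ and then converting frame derivatives of the $U(1)$-invariant potentials into Euclidean coordinate derivatives.

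For the first identity, the plan is to combine the hypothesis $\overline{E}(\eta)=\overline{B}(\eta)=0$ with the relations \eqref{118} proved in Lemma \ref{lemma5}. Since $\overline{e}_{3}=|\eta|^{-1}\eta$, the vanishing assumption forces $\overline{E}(\overline{e}_{3})=\overline{B}(\overline{e}_{3})=0$, so that in the orthonormal frame
\begin{equation*}
|\overline{E}|_{\overline{g}}^{2}=\overline{E}(\overline{e}_{1})^{2}+\overline{E}(\overline{e}_{2})^{2}
=\frac{1}{|\eta|^{2}}\left(\overline{e}_{2}(\overline{\chi})^{2}+\overline{e}_{1}(\overline{\chi})^{2}\right),
\end{equation*}
and similarly for $|\overline{B}|_{\overline{g}}^{2}$ using $\overline{\psi}$. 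Because $\overline{\chi}$ and $\overline{\psi}$ are axisymmetric (they are built from $\overline{F}(\eta,\cdot)$ and $\ast\overline{F}(\eta,\cdot)$, both $U(1)$-invariant), the $\partial_{\phi}$ terms in the frame vectors \eqref{57} drop out, and $\overline{e}_{1}(\overline{\chi})=e^{\overline{U}-\overline{\alpha}}\partial_{\overline{\rho}}\overline{\chi}$, $\overline{e}_{2}(\overline{\chi})=e^{\overline{U}-\overline{\alpha}}\partial_{\overline{z}}\overline{\chi}$. Inserting $|\eta|^{2}=\overline{\rho}^{2}e^{-2\overline{U}}$ and summing the contributions from $\overline{\chi}$ and $\overline{\psi}$ yields the claimed formula for $|\overline{E}|_{\overline{g}}^{2}+|\overline{B}|_{\overline{g}}^{2}$.

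For the second identity, I would first use \eqref{29.1} to observe that the only non-zero components of $\overline{k}$ in the frame are $\overline{k}(\overline{e}_{i},\overline{e}_{3})=\tfrac{|\eta|}{2u}\overline{e}_{i}(Y^{\phi})$ for $i=1,2$, so
\begin{equation*}
|\overline{k}|_{\overline{g}}^{2}=2\overline{k}(\overline{e}_{1},\overline{e}_{3})^{2}
+2\overline{k}(\overline{e}_{2},\overline{e}_{3})^{2}
=\frac{|\eta|^{2}}{2u^{2}}\left(\overline{e}_{1}(Y^{\phi})^{2}+\overline{e}_{2}(Y^{\phi})^{2}\right).
\end{equation*}
Next I would use the defining relation \eqref{111} for $\overline{\omega}$: evaluating both sides on $\overline{e}_{1}$ and $\overline{e}_{2}$ and writing out $(\overline{k}(\eta)\times\eta)(\overline{e}_{i})=\overline{\epsilon}_{ij3}\,\overline{k}(\eta)^{j}|\eta|$ with $\eta=|\eta|\overline{e}_{3}$ gives $(\overline{k}(\eta)\times\eta)(\overline{e}_{1})=\tfrac{|\eta|^{3}}{2u}\overline{e}_{2}(Y^{\phi})$ and $(\overline{k}(\eta)\times\eta)(\overline{e}_{2})=-\tfrac{|\eta|^{3}}{2u}\overline{e}_{1}(Y^{\phi})$. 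Therefore
\begin{equation*}
\overline{e}_{1}(Y^{\phi})^{2}+\overline{e}_{2}(Y^{\phi})^{2}
=\frac{4u^{2}}{|\eta|^{6}}\sum_{i=1}^{2}\left(\overline{e}_{i}(\overline{\omega})
+\overline{\chi}\,\overline{e}_{i}(\overline{\psi})-\overline{\psi}\,\overline{e}_{i}(\overline{\chi})\right)^{2}.
\end{equation*}
Axisymmetry of $\overline{\omega},\overline{\chi},\overline{\psi}$ again eliminates the $\partial_{\phi}$ terms in the frame vectors and introduces the factor $e^{2\overline{U}-2\overline{\alpha}}$, reducing the sum on the right to $e^{2\overline{U}-2\overline{\alpha}}|\partial\overline{\omega}+\overline{\chi}\partial\overline{\psi}-\overline{\psi}\partial\overline{\chi}|^{2}$. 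Substituting back into the expression for $|\overline{k}|_{\overline{g}}^{2}$ and using $|\eta|^{4}=\overline{\rho}^{4}e^{-4\overline{U}}$ gives the second identity.

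The main obstacle, such as it is, lies in the second identity: one has to correctly expand the cross product $\overline{k}(\eta)\times\eta$ in the frame, being careful with signs from $\overline{\epsilon}_{ijl}$ and with the factor $|\eta|^{2}$ arising from $\eta^{l}=|\eta|\delta^{l}_{3}$ together with lowering the free index on $\overline{k}(\eta)^{j}$. Once this is tracked honestly, the remaining steps are algebraic bookkeeping: collecting powers of $|\eta|$, $u$, and $e^{\overline{U}-\overline{\alpha}}$, and invoking axisymmetry to pass from frame derivatives to Euclidean partial derivatives.
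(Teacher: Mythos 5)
Your proof is correct and follows essentially the same route as the paper: both identities are obtained by reading off components of $\overline{E},\overline{B},\overline{k}$ in the Brill orthonormal frame, invoking the frame relations \eqref{118}, \eqref{29.1}, and \eqref{111}, and then converting frame derivatives of the axisymmetric potentials to Euclidean partials via $|\eta|=\overline{\rho}e^{-\overline{U}}$ and $\overline{e}_{i}=e^{\overline{U}-\overline{\alpha}}\partial_{i}$. The only minor departure is in the second identity: you pass through $Y^{\phi}$ as an intermediate bookkeeping variable before cancelling it out, whereas the paper solves \eqref{111} directly for $\overline{k}(\overline{e}_{i},\overline{e}_{3})$ (its \eqref{126}--\eqref{127}) and substitutes straight into $|\overline{k}|_{\overline{g}}^{2}=2(\overline{k}(\overline{e}_{1},\overline{e}_{3})^{2}+\overline{k}(\overline{e}_{2},\overline{e}_{3})^{2})$, which is a touch shorter; the sign and $|\eta|$-power tracking in your cross-product expansion is correct.
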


\begin{proof}
A direct computation using \eqref{118} produces
\begin{align}\label{124}
\begin{split}
|\overline{E}|_{\overline{g}}^{2}+|\overline{B}|_{\overline{g}}^{2}&=
\overline{E}(\overline{e}_{1})^{2}+\overline{E}(\overline{e}_{2})^{2}
+\overline{B}(\overline{e}_{1})^{2}+\overline{B}(\overline{e}_{2})^{2}\\
&=\frac{e^{2\overline{U}}}{\overline{\rho}^{2}}\left(\overline{e}_{1}(\overline{\chi})^{2}+
\overline{e}_{2}(\overline{\chi})^{2}+\overline{e}_{1}(\overline{\psi})^{2}
+\overline{e}_{2}(\overline{\psi})^{2}\right)\\
&=\frac{e^{4\overline{U}-2\overline{\alpha}}}{\overline{\rho}^{2}}
\left[(\partial_{\overline{\rho}}\overline{\chi})^{2}+
(\partial_{\overline{z}}\overline{\chi})^{2}+(\partial_{\overline{\rho}}\overline{\psi})^{2}+
(\partial_{\overline{z}}\overline{\psi})^{2}\right]\\
&=\frac{e^{4\overline{U}-2\overline{\alpha}}}{\overline{\rho}^{2}}
\left(|\partial\overline{\chi}|^{2}+|\partial\overline{\psi}|^{2}\right).
\end{split}
\end{align}
Moreover from \eqref{111} we have
\begin{equation}\label{125}
\overline{k}(\eta)\times\eta=d\overline{\omega}+\overline{\chi}d\overline{\psi}
-\overline{\psi}d\overline{\chi}.
\end{equation}
Hence
\begin{equation}\label{126}
\overline{k}(\overline{e}_{1},\overline{e}_{3})
=-|\eta|^{-2}\left(\overline{e}_{2}(\overline{\omega})
+\overline{\chi}\overline{e}_{2}(\overline{\psi})
-\overline{\psi}\overline{e}_{2}(\overline{\chi})\right)
=-|\eta|^{-2}e^{\overline{U}-\overline{\alpha}}
\left(\partial_{\overline{z}}\overline{\omega}
+\overline{\chi}\partial_{\overline{z}}\overline{\psi}
-\overline{\psi}\partial_{\overline{z}}\overline{\chi}\right)
\end{equation}
and
\begin{equation}\label{127}
\overline{k}(\overline{e}_{2},\overline{e}_{3})
=|\eta|^{-2}\left(\overline{e}_{1}(\overline{\omega})
+\overline{\chi}\overline{e}_{1}(\overline{\psi})
-\overline{\psi}\overline{e}_{1}(\overline{\chi})\right)
=|\eta|^{-2}e^{\overline{U}-\overline{\alpha}}
\left(\partial_{\overline{\rho}}\overline{\omega}
+\overline{\chi}\partial_{\overline{\rho}}\overline{\psi}
-\overline{\psi}\partial_{\overline{\rho}}\overline{\chi}\right).
\end{equation}
Now use \eqref{29.1} to find
\begin{equation}\label{128}
|\overline{k}|_{\overline{g}}^{2}=2\left(\overline{k}(\overline{e}_{1},\overline{e}_{3})^{2}
+\overline{k}(\overline{e}_{2},\overline{e}_{3})^{2}\right).
\end{equation}
The desired conclusion follows by combining \eqref{126}, \eqref{127}, and \eqref{128}.
\end{proof}

Continuing with the expression \eqref{122}, an application of Corollary \ref{cor1} and the charged dominant energy condition \eqref{2} produces
\begin{align}\label{129}
\begin{split}
\overline{m}-\mathcal{M}(\overline{U},\overline{\omega},\overline{\chi},\overline{\psi})
&\geq\frac{1}{32\pi}\int_{\mathbb{R}^{3}}2e^{-2\overline{U}+2\overline{\alpha}}\left[\overline{R}
-|\overline{k}|_{\overline{g}}^{2}-2(|\overline{E}|_{\overline{g}}^{2}+|\overline{B}|_{\overline{g}}^{2})\right]dx\\
&\geq\frac{1}{8\pi}\int_{\mathbb{R}^{3}}\frac{e^{-2\overline{U}+2\overline{\alpha}}}{u}div_{\overline{g}}(uQ)dx\\
&\geq\frac{1}{8\pi}\int_{\overline{M}}\frac{e^{\overline{U}}}{u}div_{\overline{g}}(uQ)dx_{\overline{g}},
\end{split}
\end{align}
where the volume element for $\overline{g}$ is given by $dx_{\overline{g}}=e^{-3\overline{U}+2\overline{\alpha}}dx$. This strongly suggests that we choose
\begin{equation}\label{130}
u=e^{\overline{U}}=\frac{\overline{\rho}}{\sqrt{\overline{g}_{\phi\phi}}}=\frac{\overline{\rho}}{\sqrt{g_{\phi\phi}}}.
\end{equation}
If $\overline{g}$ preserves the asymptotic geometry of $g$, then in light of \eqref{17}, \eqref{18}, \eqref{19}
\begin{equation}\label{131}
u=1+o_{l-3}(r^{-\frac{1}{2}})\text{ }\text{ }\text{ as }\text{ }\text{ }r\rightarrow\infty\text{ }\text{ }
\text{ in }\text{ }\text{ }M_{end}^{0},
\end{equation}
\begin{equation}\label{132}
u=r_{n}^{2}+o_{l-4}(r_{n}^{\frac{5}{2}})\text{ }\text{ }\text{ as }\text{ }\text{ }r_{n}\rightarrow 0\text{ }\text{ }\text{ in asymptotically flat}\text{ }\text{ }M_{end}^{n},
\end{equation}
\begin{equation}\label{133}
u=r_{n}+o_{l-4}(r_{n}^{\frac{3}{2}})\text{ }\text{ }\text{ as }\text{ }\text{ }r_{n}\rightarrow 0\text{ }\text{ }\text{ in asymptotically cylindrical}\text{ }\text{ }M_{end}^{n},
\end{equation}
where $r_{n}$ is the Euclidean distance to the point $i_{n}$ defining the end. Thus, with the aid of
the asymptotics for $f$ \eqref{41}, \eqref{42} and $Y^{\phi}$ \eqref{31}, as well as the following bounds which are implied by \eqref{3}, \eqref{19.1}, \eqref{19.2}
\begin{equation} \label{134}
|k|_{g}+|k(\partial_{\phi},\cdot)|_{g}+|k(\partial_{\phi},\partial_{\phi})|\leq c\text{ }\text{ }\text{ on }
\text{ }\text{ }M,
\end{equation}
the asymptotic boundary integrals arising from the right-hand side of \eqref{129} all vanish as long as $\mathcal{J}=\overline{\mathcal{J}}$ (see Appendix C in \cite{ChaKhuri}). Therefore
\begin{equation}\label{135}
\overline{m}\geq\mathcal{M}(\overline{U},\overline{\omega},\overline{\chi},\overline{\psi}).
\end{equation}

\begin{theorem}\label{thm2}
Let $(M,g,k,E,B)$ be a smooth, simply connected, axially symmetric initial data set satisfying the charged dominant energy condition \eqref{2} and condition \eqref{9}, and with two ends, one designated asymptotically flat and the other either asymptotically flat or asymptotically cylindrical. If the system of equations \eqref{30}, \eqref{38}, \eqref{130} admits a smooth solution $(u,Y^{\phi},f)$ satisfying the asymptotics \eqref{31}, \eqref{40}-\eqref{42}, \eqref{131}-\eqref{133}, then
\begin{equation}\label{136}
m^2\geq\frac{q^{2}+\sqrt{q^{4}+4\mathcal{J}^{2}}}{2},
\end{equation}
and equality holds if and only if $(M,g,k,E,B)$ arises from an embedding into the extreme Kerr-Newman spacetime.
\end{theorem}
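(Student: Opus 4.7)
The plan is to assemble the machinery developed in Sections \ref{sec2} and \ref{sec3} to reduce the theorem to the already-established maximal case. Given a solution $(u,Y^{\phi},f)$ of the system \eqref{30}, \eqref{38}, \eqref{130} with the prescribed asymptotics, the deformation $(M,g,k,E,B)\rightarrow(\overline{M},\overline{g},\overline{k},\overline{E},\overline{B})$ from \eqref{23}--\eqref{29} together with the two-step electromagnetic deformation of Section \ref{sec3} produces a new initial data set that is diffeomorphic to the original, is simply connected and axisymmetric, has two ends with the same asymptotic types, and satisfies all conditions listed in \eqref{21}--\eqref{22}. In particular, Lemma \ref{lemma2}, Remark \ref{rem1}, and the asymptotic analysis leading to \eqref{31} yield $\overline{m}=m$, $\overline{\mathcal{J}}=\mathcal{J}$, $\overline{q}_{e}=q_{e}$, $\overline{q}_{b}=q_{b}$, $Tr_{\overline{g}}\overline{k}=0$, $\overline{J}_{EM}(\eta)=0$, and the weak scalar curvature bound from Corollary \ref{cor1}.

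Next I would invoke Lemma \ref{lemma5} to obtain the electromagnetic and twist potentials $\overline{\chi}$, $\overline{\psi}$, $\overline{\omega}$ on the deformed data, then express the deformed mass through the Brill formula \eqref{120} and its renormalization \eqref{122}. Using Lemma \ref{lemma6} to rewrite the electromagnetic and extrinsic-curvature terms in $\mathcal{M}(\overline{U},\overline{\omega},\overline{\chi},\overline{\psi})$ via the potentials, applying Corollary \ref{cor1} together with the charged dominant energy condition, and using the choice \eqref{130} for $u$ to convert the $A$-gradient squared terms and the scalar curvature into a divergence plus a nonnegative remainder, one arrives at \eqref{129}. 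The prescribed asymptotics \eqref{131}--\eqref{133} together with \eqref{31}, \eqref{40}--\eqref{42}, and \eqref{134} make the boundary integrals vanish (the boundary computation at each end is the calculation carried out in Appendix C of \cite{ChaKhuri}, with the extra electromagnetic terms contributing lower-order corrections), yielding $m=\overline{m}\geq\mathcal{M}(\overline{U},\overline{\omega},\overline{\chi},\overline{\psi})$.

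At this stage the maximal case is in hand: the deformed data satisfies every hypothesis of \cite{ChruscielCosta, Costa, SchoenZhou}. Applying the variational lower bound for the renormalized harmonic map energy $\mathcal{M}$ against the extreme Kerr-Newman harmonic map gives $\mathcal{M}(\overline{U},\overline{\omega},\overline{\chi},\overline{\psi})^{2}\geq\tfrac{1}{2}(q^{2}+\sqrt{q^{4}+4\mathcal{J}^{2}})$, and chaining this with the previous inequality produces \eqref{136}. For the rigidity statement, equality forces equality at every step: the maximal-case rigidity of \cite{SchoenZhou} identifies $(\overline{M},\overline{g},\overline{k},\overline{E},\overline{B})$ as the $t=0$ slice of extreme Kerr-Newman, while equality in \eqref{129} combined with \eqref{94} forces $\mu_{EM}=J_{EM}(v)$ and $\pi=k$ pointwise and $E(e_{3})=\widetilde{E}(e_{3})$, $B(e_{3})=\widetilde{B}(e_{3})$. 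From $\pi=k$ one concludes, as in \cite{ChaKhuri}, that the graph $\{t=f(x)\}$ sits isometrically in the extreme Kerr-Newman spacetime $(\overline{M}\times\mathbb{R},\overline{g}-2Y_{i}dx^{i}dt-\varphi dt^{2})$ with the original $(g,k)$ as its induced metric and second fundamental form, while the Maxwell matching $E(e_{3})=\widetilde{E}(e_{3})$, $B(e_{3})=\widetilde{B}(e_{3})$ together with Lemma \ref{lemma2} and the tensorial definition \eqref{63} of $\overline{F}$ shows $(E,B)$ is the induced electromagnetic field on that graph.

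The main obstacle I anticipate is not the inequality itself, which flows essentially formally from the deformation machinery, but rather the verification that the asymptotic boundary contributions on the right-hand side of \eqref{129} genuinely vanish in every end. At the asymptotically cylindrical ends the decay rates of $u$, $Y^{\phi}$, $f$, $k$, $E$, $B$ are delicate and one has to confirm that the new divergence term $2u^{-1}\mathrm{div}_{\overline{g}}(uQ)$ together with the purely electromagnetic contributions from Lemma \ref{lemma4} produce integrable boundary fluxes vanishing in the limit; this is the technical heart that extends the argument of \cite{ChaKhuri} to include charge. The rigidity argument also requires care to ensure that the deformation, under equality, is trivial in the precise sense needed to identify the original $(M,g,k,E,B)$ with a slice of extreme Kerr-Newman rather than just its deformation.
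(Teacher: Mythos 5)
Your overall structure coincides with the paper's proof: deform to a maximal data set, apply the Brill mass formula and the harmonic-map lower bound from \cite{ChruscielCosta}, \cite{Costa}, \cite{SchoenZhou}, chain $m=\overline{m}\geq\mathcal{M}$ with the functional lower bound to get \eqref{136}, then argue rigidity from the vanishing of the omitted nonnegative terms. The inequality half is sound and essentially identical to what appears in the paper.

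The rigidity half has a genuine gap. You invoke "the maximal-case rigidity of \cite{SchoenZhou}" to identify $(\overline{M},\overline{g},\overline{k},\overline{E},\overline{B})$ with the $t=0$ slice of extreme Kerr--Newman, but that rigidity statement only applies to \emph{electrovacuum} data, and the deformed data set is not automatically electrovacuum. One must first prove $\overline{\mu}_{EM}=0$ and $\overline{J}_{EM}=0$. In the paper this requires all of the vanishings in \eqref{139} working in concert: $|A_{\overline{\rho},\overline{z}}-A_{\overline{z},\overline{\rho}}|=0$ together with \eqref{30} and the algebraic structure of $\overline{k}$ in \eqref{29.1} gives $\overline{J}_{EM}=0$; then $|k-\pi|_g=0$ combined with the scalar curvature identity \eqref{144}, Lemma \ref{lemma3}, and $\widetilde{E}=E$, $\widetilde{B}=B$ gives $\overline{\mu}_{EM}=0$. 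Your proposal lists $\pi=k$, the Maxwell matchings and $\mu_{EM}=J_{EM}(v)$ among the equality consequences, but uses them only to show the original data embeds into the spacetime, not to establish electrovacuum for the deformed data — so the appeal to \cite{SchoenZhou} is not yet justified. A related omission appears at the end: to conclude that $g$ equals the induced metric on the graph in $\mathbb{EKN}^{4}$ you cannot get away with $\pi=k$ alone; you also need $Y^{\phi}=Y^{\phi}_{EKN}$ (via uniqueness of solutions to \eqref{30}, \eqref{31}) and $u=e^{\overline{U}}=e^{U_{EKN}}=u_{EKN}$ (a direct check), since the graph's induced metric is built from $(u,Y,f)$ through \eqref{25}, \eqref{26}. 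Finally, a minor caution: the boundary-integral vanishing you single out as the "technical heart" is actually no harder than in \cite{ChaKhuri}, because the only term in \eqref{129} that contributes a boundary flux is the same $u^{-1}\mathrm{div}_{\overline{g}}(uQ)$ as before — the electromagnetic contributions have already been absorbed into $\mathcal{M}$ and into the pointwise nonnegative remainder of \eqref{94}.
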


\begin{proof}
The existence of a solution $(u,Y^{\phi},f)$ allows an application of the maximal case proof to the deformed
initial data $(\overline{M},\overline{g},\overline{k},\overline{E},\overline{B})$ as above, arriving at the inequality \eqref{135}. Previous results of \cite{ChruscielCosta}, \cite{Costa}, \cite{SchoenZhou} then show that
\begin{equation}\label{137}
\mathcal{M}(\overline{U},\overline{\omega},\overline{\chi},\overline{\psi})
\geq\sqrt{\frac{\overline{\mathcal{J}}^{2}}{\overline{m}^{2}}+\overline{q}^{2}}.
\end{equation}
Furthermore according to \eqref{21} and \eqref{22} $\overline{m}=m$, $\overline{\mathcal{J}}=\mathcal{J}$, and $\overline{q}=q$, so that \eqref{135} gives
\begin{equation}\label{138}
m\geq\sqrt{\frac{\mathcal{J}^{2}}{m^{2}}+q^{2}},
\end{equation}
which is equivalent to \eqref{136}.

Consider now the case of equality in \eqref{136}. Through the process of deriving \eqref{135}, several nonnegative terms were omitted from the right-hand side. These terms arise from Corollary \ref{cor1} and \eqref{122}. Under the current assumption, they must all vanish
\begin{equation}\label{139}
|\mu_{EM}-J_{EM}(v)|=|k-\pi|_{g}=|A_{\overline{\rho},\overline{z}}-A_{\overline{z},\overline{\rho}}|
=|E(e_{3})- \widetilde{E}(e_{3})|=|B(e_{3})- \widetilde{B}(e_{3})|=0.
\end{equation}
As a result of the charged dominant energy condition, the fact that $|v|_{g}<1$, and the identity
\begin{equation}\label{140}
\mu_{EM}-J_{EM}(v)=(\mu_{EM}-|J_{EM}|_{g})+(1-|v|_{g})|J_{EM}|_{g}+(|J_{EM}|_{g}|v|_{g}-J_{EM}(v)),
\end{equation}
we have
\begin{equation}\label{141}
\mu_{EM}=|J_{EM}|_{g}=0.
\end{equation}

It will now be shown that $(\overline{M},\overline{g},\overline{k},\overline{E},\overline{B})$ is an electrovacuum initial data set. Since
$Tr_{\overline{g}}\overline{k}=0$
\begin{equation}\label{142}
8\pi \overline{J}_{EM}=div_{\overline{g}}\overline{k}+2\overline{E}\times\overline{B}.
\end{equation}
From \eqref{30} we know that $\overline{J}_{EM}(\overline{e}_{3})=0$, and with the help of $|A_{\overline{\rho},\overline{z}}-A_{\overline{z},\overline{\rho}}|=0$, it is shown in \cite{ChaKhuri} that $div_{\overline{g}}\overline{k}(\overline{e}_{i})=0$, $i=1,2$. Moreover since $\overline{E}(\eta)=\overline{B}(\eta)=0$, it holds that $\overline{E}\times\overline{B}(\overline{e}_{i})=0$, $i=1,2$. Hence $|\overline{J}_{EM}|_{\overline{g}}=0$.

Consider now the energy density after the contribution from the electromagnetic field has been removed
\begin{equation}\label{143}
16\pi \overline{\mu}_{EM}=\overline{R}+(Tr_{\overline{g}}\overline{k})^{2}
-|\overline{k}|_{\overline{g}}^{2}-2(|\overline{E}|_{\overline{g}}^{2}
+|\overline{B}|_{\overline{g}}^{2})
=\overline{R}-|\overline{k}|_{\overline{g}}^{2}-2(|\overline{E}|_{\overline{g}}^{2}
+|\overline{B}|_{\overline{g}}^{2}).
\end{equation}
The computation (7.11) in Appendix A of \cite{ChaKhuri} yields
\begin{equation}\label{144}
\overline{R}-|\overline{k}|_{\overline{g}}^{2}
=-2(div_{\overline{g}}\overline{k})(u\overline{\nabla} f)
+16\pi(\mu-J(v))+|k|_{g}^{2}-|\pi|_{g}^{2}+2(div_{g}k)(v)-2(div_{g}\pi)(v)
\end{equation}
when equation \eqref{38} is satisfied, where $\mu$ and $J$ are the full matter and momentum density of the matter fields. However, $|div_{\overline{g}}\overline{k}|_{\overline{g}}=0$ and $|k-\pi|_{g}=0$ imply that
\begin{equation}\label{145}
\overline{R}-|\overline{k}|_{\overline{g}}^{2}
=16\pi(\mu-J(v))=2(|E|_{g}^{2}+|B|_{g}^{2})+4E\times B(v)=2(|\overline{E}|_{\overline{g}}^{2}
+|\overline{B}|_{\overline{g}}^{2}),
\end{equation}
where Lemma \ref{lemma3} together with $\widetilde{E}=E$ and $\widetilde{B}=B$ (which follows from the definition of $\widetilde{E}$, $\widetilde{B}$ and \eqref{139}) were used in the last equality. Therefore $\overline{\mu}_{EM}=0$, and $(\overline{M},\overline{g},\overline{k},\overline{E},\overline{B})$
is an electrovacuum initial data set.

Next, since the deformed initial data are electrovacuum and
\begin{equation}\label{146}
\overline{m}\geq\sqrt{\frac{\overline{\mathcal{J}}^{2}}{\overline{m}^{2}}+\overline{q}^{2}},
\end{equation}
the results of \cite{SchoenZhou} apply to show that $(\overline{M},\overline{g},\overline{k},\overline{E},\overline{B})$ is isometric to the data set on the $t=0$ slice
$(\mathbb{R}^{3}-\{0\},g_{EKN},k_{EKN},E_{EKN},B_{EKN})$ of the extreme Kerr-Newman spacetime $\mathbb{EKN}^{4}$. Consider the map $M\rightarrow\mathbb{EKN}^{4}$ given by $x\mapsto(x,f(x))$. The induced metric on the graph is given by
\begin{equation}\label{147}
(g_{EKN})_{ij}-f_{i}(Y_{EKN})_{j}-f_{j}(Y_{EKN})_{i}-(u_{EKN}^{2}-|Y_{EKN}|_{g_{EKN}}^{2}) f_{i}f_{j},
\end{equation}
where
\begin{equation}\label{148}
(k_{EKN})_{ij}=\frac{1}{2u_{EKN}}\left(\nabla^{EKN}_{i}(Y_{EKN})_{j}+\nabla^{EKN}_{j}(Y_{EKN})_{i}\right),
\end{equation}
and $(u_{EKN},-Y_{EKN})$ are the lapse and shift. If $\partial_{\phi}$ denotes the spacelike Killing field
in this spacetime, then $g_{EKN}^{ij}(Y_{EKN})_{j}\partial_{i}=Y_{EKN}^{\phi}\partial_{\phi}$, and $Y_{EKN}^{\phi}$
satisfies equation \eqref{30} with $(\overline{g},\overline{E},\overline{B})$ replaced by
$(g_{EKN},E_{EKN},B_{EKN})$, as well as boundary condition \eqref{31}. Since there is a unique solution to \eqref{30}, \eqref{31}, and $\overline{g}\cong g_{EKN}$, we have that $Y=Y_{EKN}$. Moreover a direct calculation shows that $u_{EKN}=e^{U_{EKN}}=e^{\overline{U}}=u$, where
$U_{EKN}$ arises from the Brill coordinate expression for $g_{EKN}$. It now follows from \eqref{25} and \eqref{26}
that $g$ agrees with the induced metric \eqref{147}. Note also that since \eqref{139} implies $\pi=k$, the
second fundamental form of the embedding $(M,g)\hookrightarrow\mathbb{EKN}^{4}$ is given by $k$. Furthermore, in light of the fact that $\widetilde{E}=E$, $\widetilde{B}=B$, the construction of $\overline{E}$, $\overline{B}$ guarantees that $E$, $B$ arise as the induced electric and magnetic field on the graph. Thus
the initial data $(M,g,k,E,B)$ arise from the extreme Kerr-Newman spacetime.

Lastly, if $(M,g,k,E,B)$ arises from extreme Kerr-Newman, then by the properties of this spacetime, equality in \eqref{136} holds.
\end{proof}

Theorem \ref{thm2} reduces the proof of the mass-angular momentum-charge inequality, in the general non-maximal case, to  the existence of a solution $(u,Y^{\phi},f)$ to the system of equations \eqref{30}, \eqref{38}, and \eqref{130}. Observe that this is in fact a coupled system, as the definition of $u$ depends on $\overline{g}$. In \cite{ChaKhuri}, a complete analysis of the equation \eqref{38} was performed assuming that $u$ is known a priori. It was shown that a smooth solution always exists (without blow-up) which possesses the desired asymptotics \eqref{40}-\eqref{42}. Moreover \eqref{30} turns out to be an inhomogeneous linear elliptic equation for $Y^{\phi}$ \cite{ChaKhuri} (if $u$ is independent of $Y^{\phi}$), and it was shown that the corresponding homogeneous equation has a unique bounded solution satisfying the asymptotic boundary condition \eqref{31}; the same techniques may be used to obtain the same result for the inhomogeneous equation here.

To end this section, we record the reduction statement for multiple black holes. We will denote the angular momentum and charges for each black hole by $\mathcal{J}_{i}$, $(q_{e})_{i}$, $(q_{b})_{i}$. Let
\begin{equation}\label{149}
\mathcal{F}(m,\mathcal{J}_{1},\ldots,\mathcal{J}_{N},(q_{e})_{1},\ldots,(q_{e})_{N},
(q_{b})_{1},\ldots,(q_{b})_{N})
\end{equation}
denote the numerical value of the action functional \eqref{121} evaluated at the harmonic map, from
$\mathbb{R}^{3}-\{\overline{\rho}=0\}$ to the complex two-dimensional hyperbolic space, which is expected to exist in analogy with the uncharged case \cite{ChruscielLiWeinstein}. Whether this value agrees with, or is bounded below by
\begin{equation}\label{150}
\sqrt{\frac{\mathcal{J}^{2}}{m^{2}}+q^{2}},
\end{equation}
where $q^{2}=q_{e}^{2}+q_{b}^{2}$ and
\begin{equation}\label{151}
\mathcal{J}=\sum_{n=1}^{N}\mathcal{J}_{n},\text{ }\text{ }\text{ }\text{ }\text{ }
q_{e}=\sum_{n=1}^{N}(q_{e})_{n},\text{ }\text{ }\text{ }\text{ }\text{ }
q_{b}=\sum_{n=1}^{N}(q_{b})_{n},
\end{equation}
is an important open problem. The proof of the following theorem is similar to that of Theorem \ref{thm2}.

\begin{theorem}\label{thm3}
Let $(M,g,k,E,B)$ be a smooth, simply connected, axially symmetric initial data set satisfying the charged dominant energy condition \eqref{2} and conditions \eqref{9}, and with $N+1$ ends, one designated asymptotically flat and the others either asymptotically flat or asymptotically cylindrical. If the system of equations \eqref{30}, \eqref{38}, \eqref{130} admits a smooth solution $(u,Y^{\phi},f)$ satisfying the asymptotics \eqref{31}, \eqref{40}-\eqref{42}, \eqref{131}-\eqref{133}, then
\begin{equation}\label{152}
m\geq\mathcal{F}(m,\mathcal{J}_{1},\ldots,\mathcal{J}_{N},(q_{e})_{1},\ldots,(q_{e})_{N},
(q_{b})_{1},\ldots,(q_{b})_{N}).
\end{equation}
\end{theorem}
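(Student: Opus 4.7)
The plan is to follow the single black hole proof of Theorem \ref{thm2} nearly verbatim, replacing only the final step in which one compares the reduced mass functional against its minimizer.

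First, given the assumed smooth solution $(u,Y^{\phi},f)$ of the coupled system \eqref{30}, \eqref{38}, \eqref{130} with the asymptotics \eqref{31}, \eqref{40}--\eqref{42}, \eqref{131}--\eqref{133} at each end, I would carry out the two-step electromagnetic deformation $(E,B)\to(\widetilde{E},\widetilde{B})\to(\overline{E},\overline{B})$ of Section \ref{sec3} and the metric/second-fundamental-form deformation of Section \ref{sec2}. The resulting data $(\overline{M},\overline{g},\overline{k},\overline{E},\overline{B})$ is maximal by \eqref{29.1}, satisfies $\mathrm{div}_{\overline{g}}\overline{E}=\mathrm{div}_{\overline{g}}\overline{B}=0$ by Lemma \ref{lemma4}, and satisfies $\overline{J}_{EM}(\eta)=0$ by the choice \eqref{30}. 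The asymptotic conditions on $(u,Y^{\phi},f)$ preserve the geometry of each of the $N+1$ ends, and Lemma 2.2 of \cite{ChaKhuri} together with Remark \ref{rem1} yield $\overline{m}=m$, $\overline{\mathcal{J}}_{n}=\mathcal{J}_{n}$, $(\overline{q}_{e})_{n}=(q_{e})_{n}$, and $(\overline{q}_{b})_{n}=(q_{b})_{n}$ for each $n=1,\ldots,N$, since these latter quantities are computed by flux integrals about each $i_{n}$.

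Second, by Lemma \ref{lemma5} the deformed data admit global potentials $\overline{\psi}$, $\overline{\chi}$, $\overline{\omega}$, whose jumps across the rods $I_{n}$ encode the per-black-hole angular momenta and charges. I would then invoke the Brill mass formula \eqref{120} and the identities of Lemma \ref{lemma6} to reexpress $\overline{m}-\mathcal{M}(\overline{U},\overline{\omega},\overline{\chi},\overline{\psi})$ as in \eqref{122}. Applying Corollary \ref{cor1} and the charged dominant energy condition, the calculation \eqref{129} reduces this to a divergence term plus manifestly nonnegative contributions. The verification that the resulting boundary integrals vanish at each of the $N+1$ ends proceeds exactly as in \cite{ChaKhuri}, using \eqref{31}, \eqref{40}--\eqref{42}, \eqref{131}--\eqref{133} and \eqref{134}, and yields the clean bound $\overline{m}\geq\mathcal{M}(\overline{U},\overline{\omega},\overline{\chi},\overline{\psi})$ precisely as in \eqref{135}.

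Third, and this is where the argument diverges from Theorem \ref{thm2}, I would invoke the definition of $\mathcal{F}$ as the value of the action \eqref{121} at the (conjecturally existing) harmonic map from $\mathbb{R}^{3}\setminus\{\overline{\rho}=0\}$ into complex hyperbolic space, with boundary data on the rods $I_{n}$ prescribed by the per-black-hole angular momenta $\mathcal{J}_{n}$ and charges $(q_{e})_{n}$, $(q_{b})_{n}$ (via the jumps of $\overline{\omega},\overline{\psi},\overline{\chi}$). Because the target is a negatively curved symmetric space, the functional $\mathcal{M}$ is convex along geodesic homotopies of maps with fixed axis data, so the harmonic map realizes the infimum among admissible competitors. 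The four-tuple $(\overline{U},\overline{\omega},\overline{\chi},\overline{\psi})$ associated with the deformed data is one such competitor, giving $\mathcal{M}(\overline{U},\overline{\omega},\overline{\chi},\overline{\psi})\geq\mathcal{F}$, and combining with the previous step yields \eqref{152}.

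The principal obstacle is the last step: strictly speaking, the comparison $\mathcal{M}\geq\mathcal{F}$ presupposes both the existence of the minimizing harmonic map with the prescribed singular boundary behavior and its minimization property within the admissible class. Following \cite{ChruscielLiWeinstein}, these should extend from the uncharged setting to the charged one, but as emphasized in the paragraph preceding the theorem, this remains an open problem in full generality. A secondary technical point is to verify, for each of the $N$ additional ends (asymptotically flat or asymptotically cylindrical), that the decay rates of the electromagnetic and gravitational contributions suffice to kill the boundary integrals arising in \eqref{129}; this parallels Appendix C of \cite{ChaKhuri} and does not require new ideas beyond the ones already developed there.
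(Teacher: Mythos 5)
Your proposal is correct and follows essentially the same route the paper has in mind; the paper gives no detailed proof of Theorem \ref{thm3}, remarking only that it is ``similar to that of Theorem \ref{thm2},'' and your elaboration -- deform, establish $\overline{m}\geq\mathcal{M}(\overline{U},\overline{\omega},\overline{\chi},\overline{\psi})$ exactly as in \eqref{129}--\eqref{135}, then compare $\mathcal{M}$ to $\mathcal{F}$ by the (conjectural) minimizing property of the harmonic map -- matches the intended argument. You also correctly flag the same caveats the paper does: the existence and minimality of the harmonic map with prescribed rod data is open (this is precisely why $\mathcal{F}$ is introduced as a placeholder rather than the explicit bound \eqref{150}), and the preservation of the per-end angular momenta and charges should be phrased via the jumps of $\overline{\omega},\overline{\psi},\overline{\chi}$ along the rods $I_n$ (cf.\ \eqref{13} and Appendix B) rather than purely via flux integrals, though these amount to the same thing under the stated asymptotics.
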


\section{Further Applications}
\label{sec5} \setcounter{equation}{0}
\setcounter{section}{5}

Here the deformation of initial data for the Einstein-Maxwell system, given in previous sections, will be applied to two more problems. The first concerns the reduction argument associated with another geometric inequality for axisymmetric black holes, namely a lower bound for area in terms of mass, angular momentum, and charge, and the second involves improving upon the hypotheses required for the mass-angular momentum-charge inequality in the maximal case.

Assume that the initial data possess only two ends denoted $M_{end}^{\pm}$, such that $M_{end}^{+}$ is asymptotically flat and $M_{end}^{-}$ is either asymptotically flat or asymptotically cylindrical. Heuristic physical arguments \cite{DainKhuriWeinsteinYamada} relying on the cosmic censorship conjecture, lead to the following upper and lower bounds
\begin{equation}\label{153}
m^2-\frac{q^2}{2}-\sqrt{\left(m^2-\frac{q^2}{2}\right)^2-\frac{q^4}{4}-\mathcal{J}^2}\leq\frac{A_{min}}{8\pi}
\leq m^2-\frac{q^2}{2}+\sqrt{\left(m^2-\frac{q^2}{2}\right)^2-\frac{q^4}{4}-\mathcal{J}^2},
\end{equation}
where $A_{min}$ is the minimum area required to enclose $M_{end}^{-}$. In \cite{DainKhuriWeinsteinYamada} the lower bound is established in the maximal case, and it is also shown that equality occurs if and only if the initial data set is isometric to the $t=0$ slice of the extreme Kerr-Newman spacetime. The proof follows directly from the mass-angular momentum-charge inequality and the area-angular momentum-charge inequality \cite{ClementJaramilloReiris}. In the non-maximal case, the area-angular momentum-charge inequality has been established when $A_{min}$ is replaced by the area of a stable, axisymmetric, marginally outer trapped surface \cite{ClementJaramilloReiris}. Therefore, since we have shown how to reduce the non-maximal case of the mass-angular momentum-charge inequality to the problem of solving a coupled system of elliptic equations, an analogous lower bound for area may also be reduced to the same problem. That is, combining Theorem \ref{thm2} above with the proof of a Theorem 2.5 in \cite{DainKhuriWeinsteinYamada}, produces the following result.

\begin{theorem}\label{thm4}
Let $(M,g,k,E,B)$ be a smooth, simply connected, axially symmetric initial data set satisfying the charged dominant energy condition \eqref{2} and conditions \eqref{9}, and with two ends, one designated asymptotically flat and the other either asymptotically flat or asymptotically cylindrical. If the data possess a stable axisymmetric marginally outer trapped surface with area $\mathcal{A}$, and the system of equations \eqref{30}, \eqref{38}, \eqref{130} admits a smooth solution $(u,Y^{\phi},f)$ satisfying the asymptotics \eqref{31}, \eqref{40}-\eqref{42}, \eqref{131}-\eqref{133}, then
\begin{equation}\label{154}
\frac{\mathcal{A}}{8\pi}\geq m^2-\frac{q^2}{2}-\sqrt{\left(m^2-\frac{q^2}{2}\right)^2-\frac{q^4}{4}-\mathcal{J}^2},
\end{equation}
and equality holds if and only if $(M,g,k,E,B)$ arises from an embedding into the extreme Kerr-Newman spacetime.
\end{theorem}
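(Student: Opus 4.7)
The plan is to deduce Theorem \ref{thm4} by combining two independently available ingredients: the mass-angular momentum-charge inequality proved in Theorem \ref{thm2}, and the area-angular momentum-charge inequality for stable axisymmetric MOTS due to Cl\'ement-Jaramillo-Reiris. The hypothesis that the coupled system \eqref{30}, \eqref{38}, \eqref{130} admits a solution with the prescribed asymptotics is used only to invoke Theorem \ref{thm2}; no further deformation is needed because the MOTS area bound has already been established in the general (non-maximal) setting in the cited reference. Thus the argument is essentially algebraic, reducing a quadratic comparison to two scalar inequalities.

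First, Theorem \ref{thm2} yields $m^{2}\geq\tfrac{1}{2}(q^{2}+\sqrt{q^{4}+4\mathcal{J}^{2}})$, which rearranges to
\begin{equation}
\Bigl(m^{2}-\tfrac{q^{2}}{2}\Bigr)^{2}\geq\tfrac{q^{4}}{4}+\mathcal{J}^{2},
\end{equation}
so in particular $m^{2}-\tfrac{q^{2}}{2}\geq\tfrac{1}{2}\sqrt{q^{4}+4\mathcal{J}^{2}}\geq 0$, and the square root on the right-hand side of \eqref{154} is real. Second, the area-angular momentum-charge inequality of Cl\'ement-Jaramillo-Reiris applied to the stable axisymmetric MOTS gives
\begin{equation}
\mathcal{A}\geq 4\pi\sqrt{q^{4}+4\mathcal{J}^{2}}.
\end{equation}

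Setting $x=\mathcal{A}/(8\pi)$ and denoting by $x_{\pm}=m^{2}-\tfrac{q^{2}}{2}\pm\sqrt{(m^{2}-\tfrac{q^{2}}{2})^{2}-\tfrac{q^{4}}{4}-\mathcal{J}^{2}}$ the two real roots of $X^{2}-(2m^{2}-q^{2})X+\tfrac{q^{4}}{4}+\mathcal{J}^{2}=0$, the desired bound \eqref{154} is simply $x\geq x_{-}$. Since $x_{+}x_{-}=\tfrac{q^{4}}{4}+\mathcal{J}^{2}$ and $x_{+}\geq m^{2}-\tfrac{q^{2}}{2}\geq\tfrac{1}{2}\sqrt{q^{4}+4\mathcal{J}^{2}}$ by the rearrangement above, one obtains
\begin{equation}
x_{-}=\frac{\tfrac{q^{4}}{4}+\mathcal{J}^{2}}{x_{+}}\leq\frac{\tfrac{1}{4}(q^{4}+4\mathcal{J}^{2})}{\tfrac{1}{2}\sqrt{q^{4}+4\mathcal{J}^{2}}}=\tfrac{1}{2}\sqrt{q^{4}+4\mathcal{J}^{2}}\leq x,
\end{equation}
which is exactly \eqref{154}.

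For rigidity, suppose equality holds in \eqref{154}, so that $x=x_{-}$. Tracing back through the chain of inequalities forces $x_{+}=\tfrac{1}{2}\sqrt{q^{4}+4\mathcal{J}^{2}}=m^{2}-\tfrac{q^{2}}{2}$, which is precisely equality in the mass-angular momentum-charge inequality of Theorem \ref{thm2}. The rigidity clause of that theorem then identifies $(M,g,k,E,B)$ with an embedded slice of the extreme Kerr-Newman spacetime, and conversely any such slice attains equality by a direct computation in Kerr-Newman. The only delicate point in the plan is to verify that the area-angular momentum-charge inequality of Cl\'ement-Jaramillo-Reiris is indeed applicable under the charged dominant energy condition \eqref{2} without a maximal assumption; this is the main thing to confirm before the algebraic step, but it is already handled in their work, since their proof uses only stability of the MOTS and an axisymmetric energy-momentum inequality along the surface, both of which are available here.
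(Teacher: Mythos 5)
Your proposal is correct and follows the same route the paper intends: Theorem \ref{thm2} supplies $m^{2}\geq\tfrac12(q^{2}+\sqrt{q^{4}+4\mathcal{J}^{2}})$ for the deformed data, and the Cl\'{e}ment--Jaramillo--Reiris bound $\mathcal{A}\geq 4\pi\sqrt{q^{4}+4\mathcal{J}^{2}}$ (valid for stable axisymmetric MOTS without maximality) is then combined with it via exactly the algebraic manipulation of the quadratic roots that appears in the proof of Theorem 2.5 of \cite{DainKhuriWeinsteinYamada}, which is what the paper cites in lieu of writing out the computation. You have merely spelled out the algebra and the rigidity chain explicitly, which is consistent with the paper's one-line proof-by-reference.
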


Now let us return to the mass-angular momentum-charge inequality \eqref{15}. In the maximal case, the results of \cite{ChruscielCosta}, \cite{Costa}, \cite{SchoenZhou} assume that the matter density is nonnegative $\mu_{EM}\geq 0$, the current density vanishes $|J_{EM}|_{g}=0$, and that the 4-currents for the electric and magnetic fields (sources for the Maxwell equations) vanish; all of these conditions are satisfied in electrovacuum (the absence of matter fields other than the electromagnetic field). The later assumption concerning the 4-currents is imposed in order to secure the existence of potentials for the Maxwell field, and $|J_{EM}|_{g}=0$ is used to obtain a charged twist potential, as in Lemma \ref{lemma5}. The new observation that we will make here, is that these hypotheses can be weakened to $div_{g}E=div_{g}B=J_{EM}(\eta)=0$. This will be accomplished by deforming the electric and magnetic fields as in Section \ref{sec3}, while keeping the original metric and second fundamental form, that is $(M,g,k,E,B)\rightarrow (M,g,k,\overline{E},\overline{B})$. More precisely,  $(\overline{E},\overline{B})$ are given by the definitions in Section \ref{sec3} with $f\equiv 0$. In particular
\begin{equation}\label{155}
\overline{E}(e_{i})=E(e_{i}), \text{ }\text{ }\text{ }\text{ } \overline{B}(e_{i})=B(e_{i}),\text{ }\text{ }\text{ }\text{ }i=1,2,\text{ }\text{ }\text{ }\text{ }\overline{E}(e_{3})=\overline{B}(e_{3})=0,
\end{equation}
and according to Lemmas \ref{lemma1}-\ref{lemma4}
\begin{equation}\label{156}
\overline{q}_{e}=q_{e},\text{ }\text{ }\text{ }\text{ }\overline{q}_{b}=q_{b},\text{ }\text{ }\text{ }\text{ }\overline{J}_{EM}(\eta)=J_{EM}(\eta),\text{ }\text{ }\text{ }\text{ }div_{g}\overline{E}=div_{g}E,\text{ }\text{ }\text{ }\text{ }div_{g}\overline{B}=div_{g}B,
\end{equation}
\begin{equation}\label{157}
|\widetilde{E}|_{g}^{2}+|\widetilde{B}|_{g}^{2}
=|\overline{E}|_{g}^{2}+|\overline{B}|_{g}^{2},\text{ }\text{ }\text{ }\text{ }\text{ }
8\pi\left(\widetilde{\mu}_{EM}-\mu_{EM}\right)=E(e_{3})^{2}+B(e_{3})^{2}.
\end{equation}
Thus, if $div_{g}E=div_{g}B=J_{EM}(\eta)=0$ then Lemma \ref{lemma5} yields potentials such that
\begin{equation}\label{158}
d\chi=|\eta | \left(\be(e_{2})\theta^{1}
- \be(e_{1})\theta^{2}\right)=|\eta | \left(E(e_{2})\theta^{1}
- E(e_{1})\theta^{2}\right),
\end{equation}
\begin{equation}\label{159}
d\psi=|\eta | \left(\bm(e_{2})\theta^{1}
-  \bm(e_{1})\theta^{2}\right)=|\eta | \left(B(e_{2})\theta^{1}
-  B(e_{1})\theta^{2}\right),
\end{equation}
and
\begin{equation}\label{160}
d\omega=k(\eta)\times\eta-\chi d\psi
+\psi d\chi.
\end{equation}
With the aid of Lemma \ref{lemma6}, the scalar curvature may be suitably bounded from below
\begin{align}\label{161}
\begin{split}
R=&|k|^{2}+2(|\widetilde{E}|_{g}^{2}+|\widetilde{B}|_{g}^{2})+16\pi\widetilde{\mu}_{EM}\\
=&|k|^{2}+2(|\overline{E}|_{g}^{2}+|\overline{B}|_{g}^{2})+16\pi\mu_{EM}
+2\left(E(e_{3})^{2}+B(e_{3})^{2}\right)\\
\geq&2\frac{e^{6U-2\alpha}}{\rho^{4}}|\partial\omega
+\chi\partial\psi-\psi\partial\chi|^{2}
+2\frac{e^{4U-2\alpha}}{\rho^{2}}\left(|\partial\chi|^{2}
+|\partial\psi|^{2}\right)\\
&+16\pi\mu_{EM}
+2\left(E(e_{3})^{2}+B(e_{3})^{2}\right).
\end{split}
\end{align}
The inequality in the third line arises from the fact that unlike $\overline{k}$, $k$ may have more than just two nontrivial components. Therefore, the methods of \cite{ChruscielCosta}, \cite{Costa}, \cite{SchoenZhou} may be used to obtain the following result.

\begin{theorem}\label{thm5}
Let $(M,g,k,E,B)$ be a smooth, simply connected, axially symmetric, maximal initial data set satisfying $\mu_{EM}\geq 0$ and $J_{EM}(\eta)=0$, and with two ends, one designated asymptotically flat and the other either asymptotically flat or asymptotically cylindrical. Then
\begin{equation}\label{162}
m^2\geq\frac{q^{2}+\sqrt{q^{4}+4\mathcal{J}^{2}}}{2},
\end{equation}
and equality holds if and only if $(M,g,k,E,B)$ is isometric to the $t=0$ slice of the extreme Kerr-Newman spacetime.
\end{theorem}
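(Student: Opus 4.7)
The plan is to exploit the electromagnetic-field-only deformation $(E,B) \mapsto (\overline{E},\overline{B})$ introduced in \eqref{155}—the $f \equiv 0$ specialization of the construction of Section~\ref{sec3}—which leaves $(g,k)$ untouched (so $\overline{g} = g$ and $v \equiv 0$) but installs the key condition $\overline{E}(\eta) = \overline{B}(\eta) = 0$. From \eqref{156}, the hypotheses $\mathrm{div}_g E = \mathrm{div}_g B = 0$ (from the constraints \eqref{1}) and $J_{EM}(\eta) = 0$ (by assumption) transfer verbatim to the deformed fields. Therefore Lemma~\ref{lemma5} applies to $(M,g,k,\overline{E},\overline{B})$, producing globally defined potentials $\chi$, $\psi$, and $\omega$ on the simply connected $M$ satisfying \eqref{158}--\eqref{160}.

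Next, I would establish the lower bound \eqref{161} for the scalar curvature. Starting from the maximal constraint $R = |k|_g^2 + 2(|E|_g^2 + |B|_g^2) + 16\pi\mu_{EM}$, I decompose $|E|_g^2 + |B|_g^2 = (|\overline{E}|_g^2 + |\overline{B}|_g^2) + (E(e_3)^2 + B(e_3)^2)$ using \eqref{155}, and rewrite $|\overline{E}|_g^2 + |\overline{B}|_g^2$ via the first identity of Lemma~\ref{lemma6} in terms of $(\partial\chi, \partial\psi)$. For $|k|_g^2$, the formulas \eqref{126}--\eqref{127} continue to hold for $k$ itself since they rely only on the twist identity \eqref{160}; however the possible presence of additional nontrivial components of $k$ (absent from the $\overline{k}$ in Lemma~\ref{lemma6}) forces only the inequality
\begin{equation*}
|k|_g^2 \;\geq\; 2\bigl(k(e_1,e_3)^2 + k(e_2,e_3)^2\bigr) \;=\; 2\frac{e^{6U-2\alpha}}{\rho^4}\bigl|\partial\omega + \chi\partial\psi - \psi\partial\chi\bigr|^2.
\end{equation*}
Inserting this into the Brill mass formula \eqref{120} (applied to the unchanged metric $g$) and discarding the manifestly nonnegative terms $16\pi\mu_{EM}$, $2(E(e_3)^2 + B(e_3)^2)$, and $\rho^2 e^{-2\alpha}(A_{\rho,z} - A_{z,\rho})^2$ gives $m \geq \mathcal{M}(U,\omega,\chi,\psi)$. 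Invoking the variational lower bound $\mathcal{M}(U,\omega,\chi,\psi) \geq \sqrt{\mathcal{J}^2/m^2 + q^2}$ of \cite{ChruscielCosta,Costa,SchoenZhou}—which depends only on the asymptotics of $(U,\omega,\chi,\psi)$ along the axis and at infinity, values encoding $m$, $\mathcal{J}$, $q_e$, $q_b$ via \eqref{7}, \eqref{8}, \eqref{13}--\eqref{14}—and rearranging yields \eqref{162}.

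For the rigidity statement, equality in \eqref{162} forces every discarded nonnegative contribution to vanish: in particular $\mu_{EM} = 0$, $E(e_3) = B(e_3) = 0$ (so $\overline{E} = E$ and $\overline{B} = B$), $A_{\rho,z} = A_{z,\rho}$, and $|k|_g^2 = 2(k(e_1,e_3)^2 + k(e_2,e_3)^2)$, forcing the reduced Kerr--Newman-type off-diagonal structure on $k$. Concurrently, the Schoen--Zhou rigidity principle pins $(U,\omega,\chi,\psi)$ to the harmonic map associated to extreme Kerr--Newman; combining this with the Brill coordinate formulas \eqref{118}, \eqref{126}--\eqref{127} reconstructs $(g,k,E,B)$ as the $t = 0$ slice of that spacetime. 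The converse is a direct verification on extreme Kerr--Newman.

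The principal technical point is controlling $|k|_g^2$ in the absence of Lemma~\ref{lemma6}'s structural equality: one must confirm that the $|k|_g^2$ inequality still saturates in the rigidity case, so that the Kerr--Newman off-diagonal pattern is actually forced by equality in \eqref{162}. I expect this to follow from the fact that Schoen--Zhou rigidity uniquely fixes $\omega$, which in turn determines $k(e_1,e_3)$ and $k(e_2,e_3)$; the vanishing of the remaining components then becomes a consequence of the $\mu_{EM}=0$, $J_{EM}(\eta)=0$ conditions together with the constraint equations, but this step deserves careful verification.
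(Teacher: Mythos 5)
Your proposal is correct and follows essentially the same route as the paper: apply the $f\equiv 0$ specialization of the Maxwell field deformation to install $\overline{E}(\eta)=\overline{B}(\eta)=0$ while leaving $(g,k)$ fixed, invoke Lemma~\ref{lemma5} to obtain potentials, derive the scalar-curvature lower bound \eqref{161} (with inequality rather than equality because $k$ may have extra nontrivial components), and feed the result into the Brill mass formula and the maximal-case variational/rigidity machinery of \cite{ChruscielCosta}, \cite{Costa}, \cite{SchoenZhou}. The caveat you flag about rigidity is fair, but the paper likewise defers those details to the cited references.
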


\section{Appendix A: Alternate Proof of Lemma \ref{lemma4}}
\label{sec6} \setcounter{equation}{0}
\setcounter{section}{6}

Here we will verify the relation between the divergences of $(\widetilde{E},\widetilde{B})$ and $(\overline{E},\overline{B})$ by direct computation. The notation in the proof of Lemma \ref{lemma4}
will be utilized throughout this section.

Observe that
\begin{equation} \label{00020}
div_{g} \widetilde{E}=\nabla_{i}\widetilde{E}^{i} = X_{i}(\widetilde{E}^{i}) + g^{ij} \tg(\tilna_{X_{i}} X_{l}, X_{j} )\widetilde{E}^{l}.
\end{equation}
Each term of \eqref{00020} will now be computed separately. First note that with the help of \eqref{68} and \eqref{69},
\begin{align} \label{00021}
\begin{split}
X_{i}(\widetilde{E}^{i})
&= \overline{e}_{1} \left( \frac{\be^{1}+ u \overline{e}_{2}(f)\bm_{3}}{\sqrt{\volbarg}}   \right)
+ \overline{e}_{2} \left( \frac{\be^{2} - u \overline{e}_{1}(f)\bm_{3}}{\sqrt{\volbarg}}   \right)
\\
&= \frac{\overline{e}_{1}(\be^{1}+ u \overline{e}_{2}(f)\bm_{3})+\overline{e}_{2}(\be^{2} - u \overline{e}_{1}(f)\bm_{3})}{\sqrt{\volbarg}} +  {\sqrt{\volbarg}}\text{ }\widetilde{E}^{i}\overline{e}_{i}\left(\frac{1}{\sqrt{\volbarg}}\right).
\end{split}
\end{align}
Next consider the second term on the right-hand side of \eqref{00020} and compute
\begin{align} \label{00022}
\begin{split}
\tg(\tilna_{X_{i}} X_{l}, X_{j})
=& \tg\left(\tilna_{\overline{e}_{i}+ \overline{e}_{i}(f) \partial_{t}} (\overline{e}_{l}+ \overline{e}_{l}(f) \partial_{t}), \overline{e}_{j}+ \overline{e}_{j}(f) \partial_{t}\right) \\
=& \bg(\barna_{\overline{e}_{i}} \overline{e}_{l}, \overline{e}_{j})
	+ \overline{e}_{i}\overline{e}_{l}(f) \tg(\partial_{t}, \overline{e}_{j}+ \overline{e}_{j}(f) 			 \partial_{t} )
+ \overline{e}_{l}(f) \tg(\tilna_{\overline{e}_{i}} \partial_{t}, \overline{e}_{j}+ \overline{e}_{j}(f) 					 \partial_{t} )\\
  &  + \overline{e}_{j}(f) \tg(\tilna_{\overline{e}_{i}} \overline{e}_{l}, \partial_{t})
+ \overline{e}_{i}(f) \left(\tg(\tilna_{\partial_{t}} \overline{e}_{l}, \overline{e}_{j}+ \overline{e}_{j}(f)\partial_{t}) + \overline{e}_{l}(f) \tg(\tilna_{\partial_{t}} \partial_{t}, \overline{e}_{j}) \right),
\end{split}
\end{align}
where
\begin{equation} \label{00022.1}
\tg(\tilna_{\overline{e}_{i}} \partial_{t}, \overline{e}_{j})
= \tg\left(\tilna_{\overline{e}_{i}} (u n-\by), \overline{e}_{j}\right)
= u \bk(\overline{e}_{i}, \overline{e}_{j}) - \bg(\barna_{\overline{e}_{i}} \by, \overline{e}_{j}),
\end{equation}
\begin{equation} \label{00022.2}
\tg(\tilna_{\overline{e}_{i}} \overline{e}_{l}, \partial_{t})
= \tg(\tilna_{\overline{e}_{i}} \overline{e}_{l}, u n-\by)
=- u \bk(\overline{e}_{i}, \overline{e}_{l}) - \bg(\barna_{\overline{e}_{i}} \overline{e}_{l}, \by),
\end{equation}
\begin{equation} \label{00022.3}
\tg(\tilna_{\partial_{t}} \overline{e}_{l}, \overline{e}_{j})
= \tg(\tilna_{\overline{e}_{l}}\partial_{t} , \overline{e}_{j})
= u \bk(\overline{e}_{l}, \overline{e}_{j}) - \bg(\barna_{\overline{e}_{l}} \by, \overline{e}_{j}),
\end{equation}
\begin{equation} \label{00022.4}
\tg(\tilna_{ \partial_{t}} \overline{e}_{l} , \partial_{t})
= \tg(\tilna_{\overline{e}_{l} } \partial_{t} , \partial_{t})
= \frac{1}{2} \overline{e}_{l}(\tg_{tt}),
\end{equation}
and
\begin{equation} \label{00022.5}
\tg(\tilna_{ \partial_{t}} \partial_{t}, \overline{e}_{j} )
= -\tg(\tilna_{\overline{e}_{j} } \partial_{t} , \partial_{t})
= -\frac{1}{2} \overline{e}_{j}(\tg_{tt}).
\end{equation}
Substituting \eqref{00022.1}-\eqref{00022.5} into \eqref{00022} produces
\begin{align}\label{00022.6}
\begin{split}
\tg(\tilna_{X_{i}} X_{l}, X_{j})
=& \bg(\barna_{\overline{e}_{i}} \overline{e}_{l}, \overline{e}_{j})
	+ \overline{e}_{i}\overline{e}_{l}(f) \left( -\by(\overline{e}_{j}) + \overline{e}_{j}			 (f)\tg_{tt}\right) \\
&+ \overline{e}_{l}(f) \left( u \bk(\overline{e}_{i}, \overline{e}_{j})
	- \bg(\barna_{\overline{e}_{i}} \by , \overline{e}_{j})+ \frac{1}{2} \overline{e}_{j}(f)\overline{e}_{i}(\tg_{tt}) \right) \\
&+ \overline{e}_{j}(f) \left(-u \bk(\overline{e}_{i}, \overline{e}_{l})
	- \bg(\barna_{\overline{e}_{i}} \overline{e}_{l} , \by)  \right)  \\
&+ \overline{e}_{i}(f) \left(u \bk(\overline{e}_{j}, \overline{e}_{l})
	- \bg(\barna_{\overline{e}_{l}} \by ,\overline{e}_{j})
	+ \frac{1}{2} \overline{e}_{j}(f)\overline{e}_{l}(\tg_{tt})
	- \frac{1}{2}\overline{e}_{l}(f)\overline{e}_{j}(\tg_{tt}) \right).
\end{split}
\end{align}

The following simple identities will aid in taking the trace of \eqref{00022.6}, namely
\begin{equation}\label{00022.6.1}
u tr_{\bg} \bk = \delta^{ij}\bg(\barna_{\overline{e}_{i}} \by, \overline{e}_{j}) = 0, \text{ }\text{ }\text{ }\text{ }\text{ }u\bk(w, w)= \bg(\barna_{w} \by, w).
\end{equation}
We then find that
\begin{align}\label{00022.7}
\begin{split}
 g^{ij}\tg(\tilna_{X_{i}} X_{l}, X_{j})
=& \left(\delta^{ij} - \frac{|\by|_{\bg}^{2}\delta^{3i}\delta^{3j}}{u^{2}} + w(\overline{\theta}^{i})w(\overline{\theta}^{j}) \right)\tg(\tilna_{X_{i}} X_{l}, X_{j}) \\
=& \bg(\barna_{\overline{e}_{i}} \overline{e}_{l}, \overline{e}_{i})
   - \frac{1}{u^{2}} \bg(\barna_{\by} \overline{e}_{l}, \by)
   + \bg(\barna_{w} \overline{e}_{l}, w) \\
&+\tg_{tt}\left[ \barna f\left( \overline{e}_{l}(f) \right)
	+ w \left( \overline{e}_{l}(f) \right) w(f) \right] - \bg(\by,w)w \left( \overline{e}_{l}(f) \right)  \\
&+ \left(- \bg(\barna_{\barna f} \overline{e}_{l} , \by)
							- \bg(\barna_{\overline{e}_{l}} \by , \barna f)
	+ \frac{1}{2}|\barna f|_{\bg}^{2} \overline{e}_{l}(\tg_{tt})\right) \\
&+  w(f) \left(- \bg(\barna_{w} \overline{e}_{l} , \by)
							- \bg(\barna_{\overline{e}_{l}} \by ,w)
	+ \frac{1}{2} w(f)\overline{e}_{l}(\tg_{tt})\right).
\end{split}
\end{align}
Applying the expression
\begin{equation}\label{00022.7.1}
 w = \frac{u^{2} \barna f + \by}{u \sqrt{\volbarg}}
\end{equation}
yields
\begin{align} \label{00022.8}
\begin{split}
&\tg_{tt}\left[ \barna f\left( \overline{e}_{l}(f) \right)
	+ w \left( \overline{e}_{l}(f) \right) w(f) \right] - \bg(\by,w)w \left( \overline{e}_{l}(f) \right)\\
=& \barna f(\overline{e}_{l}(f)) \left(\widetilde{g}_{tt} + \frac{u^{2}|\barna f|_{\bg}^{2}\tg_{tt}}{\volbarg}
- \frac{|\by|_{\bg}^{2}}{\volbarg} \right)\\
=& -\frac{u^{2}\barna f(\overline{e}_{l}(f))}{\volbarg},
\end{split}
\end{align}
\begin{equation} \label{00022.9}
|\barna f|_{\bg}^{2} \overline{e}_{l}(\tg_{tt})+ w(f)^{2}\overline{e}_{l}(\tg_{tt})
=\frac{|\barna f|_{\bg}^{2}\overline{e}_{l}(\tg_{tt})}{\volbarg }  ,
\end{equation}
\begin{align} \label{00022.10}
\begin{split}
&- \frac{1}{u^{2}} \bg(\barna_{\by} \overline{e}_{l}, \by) + \bg(\barna_{w} \overline{e}_{l}, w)
- \bg(\barna_{\barna f} \overline{e}_{l} , \by)
-w(f)\bg(\barna_{w} \overline{e}_{l} , \by) \\
=& \bg\left(\barna_{w} \overline{e}_{l}, \frac{u \barna f}{\sqrt{\volbarg}}\right),
\end{split}
\end{align}
and
\begin{align} \label{00022.11}
\begin{split}
 \bg(\barna_{\overline{e}_{l}} \by , \barna f)+w(f) \bg(\barna_{\overline{e}_{l}} \by ,w)
=&  \frac{\bg(\barna_{\overline{e}_{l}} \by , \barna f)}{\volbarg}
+ \frac{|\barna f|_{\bg}^{2}\bg(\barna_{\overline{e}_{l}} \by, \by)}{\volbarg}  \\
=& \frac{\bg(\barna_{ \by} \overline{e}_{l}, \barna f)}{\volbarg}
+ \frac{|\barna f|_{\bg}^{2}\overline{e}_{l}(|\by|_{\bg}^{2})}{2\left(\volbarg\right)} ,
\end{split}
\end{align}
where we have used $\by=\byp \eta$ and $\bg(\barna f, \eta)=0$ in the last line of \eqref{00022.11}.
By substituting \eqref{00022.8}-\eqref{00022.11} into \eqref{00022.7} we obtain
\begin{align} \label{00023}
\begin{split}
g^{ij}\tg(\tilna_{X_{i}} X_{l}, X_{j})
&=  \bg(\barna_{\overline{e}_{i}} \overline{e}_{l}, \overline{e}_{i})
  -\frac{u|\barna f|_{\bg}^{2} \overline{e}_{l}(u)}{\volbarg }
  -\frac{u^{2}\barna f(\overline{e}_{l}(f))}{\volbarg}
+ \frac{u^{2}}{\volbarg}\bg(\barna_{\barna f} \overline{e}_{l}, \barna f) \\
&= \bg(\barna_{\overline{e}_{i}} \overline{e}_{l}, \overline{e}_{i})
  -\frac{u|\barna f|_{\bg}^{2} \overline{e}_{l}(u)}{\volbarg }
  -\frac{u^{2}}{\volbarg} (Hess_{\bg}f)(\barna f, \overline{e}_{l}) \\
&=  \bg(\barna_{\overline{e}_{i}} \overline{e}_{l}, \overline{e}_{i})
   - \overline{e}_{l}\left(\frac{1}{\sqrt{\volbarg}}\right)\sqrt{\volbarg}.
\end{split}
\end{align}
In particular, when $l=3$,  \eqref{00023} becomes
\begin{equation} \label{00024}
 g^{ij} \tg(\tilna_{X_{i}} X_{3}, X_{j} )
= \bg(\barna_{\overline{e}_{i}} \overline{e}_{3}, \overline{e}_{i})
=\frac{1}{|\eta|} \bg(\barna_{\overline{e}_{i}} \eta, \overline{e}_{i})=0.
\end{equation}

Now employ \eqref{00021}, \eqref{00023}, \eqref{00024} to evaluate \eqref{00020} as follows,
\begin{align} \label{00025}
\begin{split}
div_{g} \widetilde{E}
=& X_{i}(\widetilde{E}^{i}) + g^{ij} \tg(\tilna_{X_{i}} X_{1}, X_{j} )\widetilde{E}^{1} + g^{ij} \tg(\tilna_{X_{i}} X_{2}, X_{j} )\widetilde{E}^{2}  \\
=& \frac{\overline{e}_{1}(\be^{1}+ u \overline{e}_{2}(f)\bm_{3})+\overline{e}_{2}(\be^{2} - u \overline{e}_{1}(f)\bm_{3})}{\sqrt{\volbarg}}  \\
&+ \frac{\be^{1}+ u \overline{e}_{2}(f)\bm_{3}}{\sqrt{\volbarg}} \bg(\barna_{\overline{e}_{i}} \overline{e}_{1}, \overline{e}_{i})
+\frac{\be^{2}- u \overline{e}_{2}(f)\bm_{3}}{\sqrt{\volbarg}} \bg(\barna_{\overline{e}_{i}} \overline{e}_{2}, \overline{e}_{i}).
\end{split}
\end{align}
Similarly
\begin{align} \label{00026}
\begin{split}
div_{g} \widetilde{B}
=& \frac{\overline{e}_{1}(\bm^{1}- u \overline{e}_{2}(f)\be_{3})+\overline{e}_{2}(\bm^{2} + u \overline{e}_{1}(f)\be_{3})}{\sqrt{\volbarg}}  \\
&+ \frac{\bm^{1}- u \overline{e}_{2}(f)\be_{3}}{\sqrt{\volbarg}} \bg(\barna_{\overline{e}_{i}} \overline{e}_{1}, \overline{e}_{i})
+\frac{\bm^{2}+u \overline{e}_{2}(f)\be_{3}}{\sqrt{\volbarg}} \bg(\barna_{\overline{e}_{i}} \overline{e}_{2}, \overline{e}_{i}).
\end{split}
\end{align}
Therefore since $\be_{3}=\bm_{3}=0$, we have
\begin{equation} \label{00027}
div_{g} \widetilde{E} = \frac{div_{\bg}\be}{\sqrt{\volbarg}}, \text{ }\text{ }\text{ }\text{ }\text{ }
div_{g} \widetilde{B} = \frac{div_{\bg} \bm}{\sqrt{\volbarg}}.
\end{equation}
The desired result now follows from \eqref{79}.

\section{Appendix B: Two Versions of the Charged Twist Potential}
\label{sec7} \setcounter{equation}{0}
\setcounter{section}{7}

In Lemma \ref{lemma5}, under the assumption $\overline{J}_{EM}(\eta)=0$, the existence of a charged twist potential was obtained with the aid of
the potentials $(\overline{\chi},\overline{\psi})$ for the electromagnetic field, namely
\begin{equation}\label{b.1}
d\overline{\omega}=\overline{k}(\eta)\times\eta-\overline{\chi}d\overline{\psi}
+\overline{\psi}d\overline{\chi}.
\end{equation}
Here we will show that the charged twist potential may be constructed from the vector potential instead. That is
\begin{equation}\label{b.2}
d\overline{\omega}=\overline{k}(\eta)\times\eta+2[\overline{A}\times(\eta\times\overline{E})]\times\eta,
\end{equation}
where $\overline{B}=\overline{\nabla}\times\overline{A}$. Note that due to topological considerations, one must remove a Dirac string in order to construct $\overline{A}$, as described in Section \ref{sec1}.

We begin by deriving the relationship between $\overline{\psi}$ and $\overline{A}$. Observe that by \eqref{109},
\begin{equation}\label{b.3}
\overline{e}_{1}(\overline{\psi})=|\eta|\overline{B}(\overline{e}_{2}),\text{ }\text{ }\text{ }\text{ }\text{ }
\overline{e}_{2}(\overline{\psi})=-|\eta|\overline{B}(\overline{e}_{1}).
\end{equation}
Moreover if
\begin{equation}\label{b.4}
\overline{A}=\overline{A}(\overline{e}_{1})\overline{\theta}^{1}+\overline{A}(\overline{e}_{2})\overline{\theta}^{2}
+\overline{A}(\overline{e}_{3})\overline{\theta}^{3},
\end{equation}
then \eqref{113}, \eqref{114} imply that
\begin{align}\label{b.5}
\begin{split}
\overline{B}
=\ast d\overline{A}=&\left[\overline{e}_{2}(\overline{A}(\overline{e}_{3}))+
\overline{e}_{2}(\log|\eta|)\overline{A}(\overline{e}_{3})\right]
\overline{\theta}^{1}
-\left[\overline{e}_{1}(\overline{A}(\overline{e}_{3}))+
\overline{e}_{1}(\log|\eta|)\overline{A}(\overline{e}_{3})\right]
\overline{\theta}^{2}\\
&+\left[\overline{e}_{1}(\overline{A}(\overline{e}_{2}))+
\overline{e}_{1}(\log e^{-\overline{U}+\overline{\alpha}})\overline{A}(\overline{e}_{2})
-\overline{e}_{2}(\overline{A}(\overline{e}_{1}))-
\overline{e}_{2}(\log e^{-\overline{U}+\overline{\alpha}})\overline{A}(\overline{e}_{1})\right]
\overline{\theta}^{3}\\
&-|\eta|e^{2\overline{U}-2\overline{\alpha}}(A_{\overline{\rho},\overline{z}}
-A_{\overline{z},\overline{\rho}})\overline{A}(\overline{e}_{3})\overline{\theta}^{3}.
\end{split}
\end{align}
Combining this with \eqref{b.3} produces
\begin{equation}\label{b.6}
\overline{e}_{1}(\overline{\psi})
=-|\eta|\left[\overline{e}_{1}(\overline{A}(\overline{e}_{3}))+
\overline{e}_{1}(\log|\eta|)\overline{A}(\overline{e}_{3})\right],\text{ }\text{ }\text{ }\text{ }
\overline{e}_{2}(\overline{\psi})=-|\eta|\left[\overline{e}_{2}(\overline{A}(\overline{e}_{3}))+
\overline{e}_{2}(\log|\eta|)\overline{A}(\overline{e}_{3})\right].
\end{equation}
Hence
\begin{equation}\label{b.7}
\overline{\psi}=-|\eta|\overline{A}(\overline{e}_{3})+C
\end{equation}
for some constant $C$, so that the third component of $\overline{A}$ is determined by $\overline{\psi}$. One more component of $\overline{A}$ may be determined from the equation $\overline{B}(\overline{e}_{3})=0$ and \eqref{b.5}, and
the remaining component of $\overline{A}$ remains undetermined due to gauge invariance.

In order to show that the potentials defined by \eqref{b.1} and \eqref{b.2} are equivalent, we must show that the right-hand sides of these equations differ by an exact 1-form. In fact, this conclusion arises from \eqref{109} and \eqref{b.7} as follows
\begin{align}\label{b.8}
\begin{split}
-\overline{\chi}d\overline{\psi}+\overline{\psi}d\overline{\chi}+d(\overline{\chi}\overline{\psi}-2C\overline{\chi})
&=2\overline{\psi}d\overline{\chi}-2Cd\overline{\chi}\\
&=-2|\eta|\overline{A}(\overline{e}_{3})d\overline{\chi}\\
&=-2|\eta|^{2}\overline{A}(\overline{e}_{3})\overline{E}(\overline{e}_{2})\overline{\theta}^{1}
+2|\eta|^{2}\overline{A}(\overline{e}_{3})\overline{E}(\overline{e}_{1})\overline{\theta}^{2}\\
&=2\overline{\epsilon}_{ijl}\overline{\epsilon}_{cba}\overline{E}^{c}
\overline{\epsilon}^{jpb}\overline{A}_{p}\eta^{l}\eta^{a}\overline{\theta}^{i}\\
&=2[\overline{A}\times(\eta\times\overline{E})]\times\eta.
\end{split}
\end{align}

We now record how the charged twist potential \eqref{10} encodes angular momentum \eqref{13}. Consider the polar coordinate form $(r,\theta,\phi)$ of Brill coordinates, where $\rho=r\sin\theta$ and $z=r\cos\theta$. The metric may then be expressed by
\begin{equation}\label{b1}
g=e^{-2U+2\alpha}(dr^{2}+r^{2}d\theta^{2})+e^{-2U}r^{2}\sin^{2}\theta(d\phi+A_{r}dr+A_{\theta}d\theta)^{2},
\end{equation}
and an orthonormal basis is given by
\begin{equation}\label{b2}
e_{r}=e^{U-\alpha}(\partial_{r}-A_{r}\partial_{\phi}),\text{ }\text{ }\text{ }\text{ }e_{\theta}=\frac{e^{U-\alpha}}{r}(\partial_{\theta}-A_{\theta}\partial_{\phi}),
\text{ }\text{ }\text{ }\text{ }e_{\phi}=\frac{e^{U}}{r\sin\theta}\partial_{\phi}.
\end{equation}
From \eqref{10} it follows that
\begin{align}\label{b3}
\begin{split}
\frac{e^{U-\alpha}}{r}\partial_{\theta}\omega &=e_{\theta}(\omega)\\
&=-|\eta|^{2}\epsilon(e_{r},e_{\theta},e_{\phi})\left[k(e_{r},e_{\phi})
+2\epsilon(e_{r},e_{\theta},e_{\phi})E(e_{r})\epsilon(e_{r},e_{\phi},e_{\theta})\vec{A}(e_{\phi})\right]\\
&=-e^{-U}r\sin\theta\left[k(e_{r},\partial_{\phi})-2E(e_{r})\vec{A}(\partial_{\phi})\right],
\end{split}
\end{align}
or rather
\begin{equation}\label{b4}
k(e_{r},\partial_{\phi})-2E(e_{r})\vec{A}(\partial_{\phi})=-\frac{e^{2U-\alpha}}{r^{2}\sin\theta}\partial_{\theta}\omega.
\end{equation}
Hence, if there are only two ends, we may apply Lemma 2.1 in \cite{DainKhuriWeinsteinYamada} to obtain
\begin{align}\label{b5}
\begin{split}
\mathcal{J}&=\frac{1}{8\pi}\int_{S_{\infty}}(k_{ij}-(Tr k)g_{ij})\nu^{i}\eta^{j}
-\frac{1}{4\pi}\int_{S_{\infty}}(E_{i}\nu^{i})(\vec{A}_{j}\eta^{j})\\
&=\lim_{r\rightarrow0}\frac{1}{8\pi}\int_{\partial B(r)}\left[k(\partial_{\phi},e_{r})
-2E(e_{r})\vec{A}(\partial_{\phi})\right]dA\\
&=\lim_{r\rightarrow0}\frac{1}{8\pi}\int_{\partial B(1)}\left[k(\partial_{\phi},e_{r})
-2E(e_{r})\vec{A}(\partial_{\phi})\right]e^{-2U+\alpha}r^{2}\sin\theta d\theta d\phi\\
&=-\lim_{r\rightarrow0}\frac{1}{8\pi}\int_{\partial B(1)}\partial_{\theta}\omega d\theta d\phi\\
&=\frac{1}{4}(\omega|_{I_{+}}-\omega|_{I_{-}}).
\end{split}
\end{align}
Observe that the expression for angular momentum here appears to differ from that of \eqref{7}, in that
an extra contribution from the electromagnetic field, representing the total angular momentum of the electromagnetic field, is present. It turns out that a particular cancelation \cite{DainKhuriWeinsteinYamada} forces this term to vanish, and hence \eqref{7} and \eqref{b5} yield the same value.

\section{Appendix C: Comparing Asymptotics with the Kerr-Newman Example}
\label{sec8} \setcounter{equation}{0}
\setcounter{section}{8}

In this section the prescribed asymptotics for $Y^{\phi}$ and $(E,B)$ will be compared with the corresponding
quatities in the Kerr-Newman spacetime. Recall that in Boyer-Lindquist coordinates the Kerr-Newmann metric takes the form
\begin{equation} \label{c1}
 -\frac{\Delta - a^{2}\sin^{2}\theta}{\Sigma} dt^{2}
 + \frac{2a\sin^{2}\theta}{\Sigma}\left( \widetilde{r}^{2}+a^{2}-\Delta \right)dtd\phi
+  \frac{(\widetilde{r}^{2}+a^{2})^{2} - \Delta a^{2}\sin^{2}\theta}{\Sigma} \sin^{2}\theta d\phi^{2}
+ \frac{\Sigma}{\Delta} d\widetilde{r}^{2} + \Sigma d\theta^{2}
\end{equation}
where
\begin{equation}\label{c2}
\Delta = \widetilde{r}^{2} + a^{2} +q^{2} -2m\widetilde{r},\text{ }\text{ }\text{ }\text{ }\text{ } \Sigma = \widetilde{r}^{2} + a^{2}\cos^{2}\theta,
\end{equation}
and the electromagnetic 4-potential is given by
\begin{equation} \label{c3}
\mathbf{A} = -\frac{q_{e}\widetilde{r}}{\Sigma} \left( dt+ a\sin^{2}\theta d\phi \right)
- \frac{q_{b}\cos\theta}{\Sigma} \left( a dt + (\widetilde{r}^{2}+a^{2}) d \phi \right),
\end{equation}
The event horizon is located at the larger of the two solutions to the quadratic equation
$\Delta=0$, namely $\widetilde{r}_{+}=m+\sqrt{m^{2}-a^{2}-q^{2}}$, where the angular momentum is given by $\mathcal{J} = ma$. For $\widetilde{r}>\widetilde{r}_{+}$ it holds that $\Delta>0$, so that a new radial coordinate may be defined by
\begin{equation}\label{c4}
r=\frac{1}{2}(\widetilde{r}-m+\sqrt{\Delta}),
\end{equation}
or rather
\begin{align} \label{c5}
\begin{split}
\widetilde{r} &= r + m + \frac{m^{2}-a^{2}-q^{2}}{4r},\text{ }\text{ }\text{ }\text{ }\text{ } m^{2} \neq a^{2}+q^{2}  \\
\widetilde{r} &= r + m, \text{ }\text{ }\text{ }\text{ }\text{ } m^{2}=a^{2}+q^{2}.
\end{split}
\end{align}
Note that the new coordinate is defined for $r>0$, and a critical point for the right-hand side of \eqref{c5} ($m^{2}\neq a^{2} + q^{2}$) occurs at the horizon, so that two isometric copies of the outer region are encoded on this interval. The coordinates $(r,\theta,\phi)$ then form a Brill coordinate system.

Observe that
\begin{equation} \label{c6}
\byp= g^{\phi \phi} Y_{\phi} = -\frac{a( \widetilde{r}^{2}+a^{2}-\Delta)}{(\widetilde{r}^{2}+a^{2})^{2} -\Delta a^{2}\sin^{2}\theta}.
\end{equation}
Therefore at spatial infinity
\begin{equation} \label{318}
\byp \sim -\frac{2ma}{r^{3}} \quad \textrm{as} \quad r \to \infty,
\end{equation}
which is consistent with \eqref{31}.
Furthermore,
\begin{equation} \label{c7}
\begin{split}
\byp &= O(r^{3}),\text{ }\text{ }\text{ }\text{ }  m^{2} \neq a^{2}+q^{2}, \text{ }\text{ }\text{ as }\text{ }\text{ }r\rightarrow 0,    \\
\byp &= -\frac{a(2m^{2}-q^{2})}{(m^{2}+a^{2})^{2}}+O(r), \text{ }\text{ }\text{ }\text{ }  m^{2} = a^{2} + q^{2}, \text{ }\text{ }\text{ as }\text{ }\text{ }r\rightarrow 0.
\end{split}
\end{equation}
Note that the solution of \eqref{30} is expected to asymptotically converge to a constant as $r\rightarrow 0$ \cite{ChaKhuri}. This is consistent with the second equation in \eqref{c7}, but not the first. The reason for the inconsistency is that the lapse function for the Kerr-Newman spacetime does not satisfy the required asymptotics \eqref{132}, whereas the lapse function
for the extreme Kerr-Newman spacetime does satisfy the desired asymptotics \eqref{133}; the lapse function is
\begin{align} \label{c8}
\begin{split}
u^{2} =& -\widetilde{g}_{tt} + |Y|^{2} \\
=& \frac{\Delta -a^{2}\sin^{2}\theta}{\Sigma}
+ \frac{a^{2}\sin^{2}\theta (\widetilde{r}^{2}+a^{2}-\Delta)^{2}}{\Sigma \left( (\widetilde{r}^{2}+a^{2})^{2} - \Delta a^{2} \sin^{2} \theta \right)} \\
=& \frac{\Delta \Sigma}{(\widetilde{r}^{2}+a^{2})^{2} - \Delta a^{2} \sin^{2} \theta}.
\end{split}
\end{align}

We now compute the induced electric and magnetic field on the $t=0$ slice.
The field strength is given by $F_{ij} = \partial_{i} \mathbf{A}_{j} -\partial_{j} \mathbf{A}_{i}$. Since $\partial_{t}$ and $\partial_{\phi}$ are Killing fields, it is straightforward to check that $F_{\widetilde{r} \theta} = F_{\phi t} =0$, which is equivalent to $E(\partial_{\phi})= B(\partial_{\phi})=0$. Next observe that
\begin{equation} \label{c9}
F_{\widetilde{r} t} = \partial_{\widetilde{r}} \mathbf{A}_{t}
= \frac{q_{e}}{\Sigma} \left( -1 + \frac{2 \widetilde{r}^{2}}{\Sigma}\right)
 + \frac{2 q_{b} \widetilde{r}a \cos \theta}{\Sigma^{2}},
\end{equation}
\begin{equation} \label{c10}
F_{\theta t} =\partial_{\theta}\mathbf{A}_{t}
= -\frac{a \sin\theta}{\Sigma} \left( \frac{2 q_{e} \widetilde{r} a \cos \theta}{\Sigma} + q_{b} \left( -1 + \frac{2 \widetilde{r}^{2}}{\Sigma}\right) \right),
\end{equation}
\begin{equation} \label{c11}
F_{\widetilde{r} \phi} = \partial_{\widetilde{r}} \mathbf{A}_{\phi}
= \frac{a \sin^{2}\theta}{\Sigma} \left( q_{e} \left( -1 + \frac{2 \widetilde{r}^{2}}{\Sigma}\right) + \frac{2 q_{b} \widetilde{r} a \cos \theta}{\Sigma} \right),
\end{equation}
\begin{equation} \label{c12}
F_{\theta \phi} = \partial_{\theta} \mathbf{A}_{\phi}
= \frac{(\widetilde{r}^{2}+a^{2})\sin\theta}{\Sigma} \left(-\frac{2 q_{e} \widetilde{r} a \cos \theta}{\Sigma} + q_{b} \left( -1 + \frac{2 \widetilde{r}^{2}}{\Sigma}\right) \right).
\end{equation}
Since $E(\cdot)= F(\cdot, n) = F(\cdot, u^{-1}(\partial_{t}+ \overline{Y}))$ we have
\begin{align} \label{c13}
\begin{split}
 E_{\widetilde{r}}&= \frac{1}{u}\overline{F}_{\widetilde{r} t} + \frac{\byp}{u} \overline{F}_{\widetilde{r} \phi} \\
&= \frac{1+ a \sin^{2} \theta \byp}{u} \left( \frac{q_{e}(\widetilde{r}^{2}-a^{2}\cos^{2}\theta)}{\Sigma^{2}}
+ \frac{2q_{b} \widetilde{r}a\cos \theta }{\Sigma^{2}}\right) \\
&=\frac{(\widetilde{r}^{2}+a^{2})\sin\theta}{\sqrt{\Delta}|\partial_{\phi}|} \left( \frac{q_{e}(\widetilde{r}^{2}-a^{2}\cos^{2}\theta)}{\Sigma^{2}} + \frac{2q_{b}\widetilde{r}a\cos \theta }{\Sigma^{2}}\right),
\end{split}
\end{align}
and
\begin{align} \label{c14}
\begin{split}
E_{\theta} &=\frac{1}{u}\overline{F}_{\theta t} + \frac{\byp}{u} \overline{F}_{\theta \phi} \\
&= \frac{\sqrt{\Delta} a \sin^{2}\theta}{|\partial_{\phi}|} \left( -\frac{2q_{e}\widetilde{r}a\cos \theta}{\Sigma^{2}} + \frac{q_{b}(\widetilde{r}^{2}-a^{2}\cos^{2}\theta) }{\Sigma^{2}}\right).
\end{split}
\end{align}
Also since $B(|\partial_{\widetilde{r}}|^{-1}\partial_{\widetilde{r}})= F(|\partial_{\theta}|^{-1}\partial_{\theta},|\partial_{\phi}|^{-1}\partial_{\phi})$ and $B(|\partial_{\theta}|^{-1}\partial_{\theta})= -F(|\partial_{\widetilde{r}}|^{-1}\partial_{\widetilde{r}},
|\partial_{\phi}|^{-1}\partial_{\phi})$, applying \eqref{c11} and \eqref{c12} yields
\begin{equation} \label{c15}
B_{\widetilde{r}} = \frac{(\widetilde{r}^{2}+a^{2})\sin\theta}{\sqrt{\Delta}|\partial_{\phi}|} \left( -\frac{2q_{e}\widetilde{r}a\cos \theta }{\Sigma^{2}} + \frac{q_{b}(\widetilde{r}^{2}-a^{2}\cos^{2}\theta)}{\Sigma^{2}}\right),
\end{equation}
\begin{equation} \label{c16}
B_{\theta} = -\frac{\sqrt{\Delta} a \sin^{2}\theta}{|\partial_{\phi}|} \left( \frac{q_{e}(\widetilde{r}^{2}-a^{2}\cos^{2}\theta)}{\Sigma^{2}} + \frac{2 q_{b}\widetilde{r}a\cos \theta }{\Sigma^{2}}\right).
\end{equation}

First consider the case when $m^{2} \neq q^{2} + a^{2}$, in which there are two asymptotically flat ends. In Brill coordinates
\begin{equation}\label{c17}
E_{r} = \frac{\partial \widetilde{r}}{\partial r} E_{\widetilde{r}}
= \frac{(\widetilde{r}^{2}+a^{2})\sin\theta}{\sqrt{\Delta}|\partial_{\phi}|}
\left( 1 - \frac{m^{2}-a^{2}-q^{2}}{4r^{2}} \right)
 \left( \frac{q_{e}(\widetilde{r}^{2}-a^{2}\cos^{2}\theta)}{\Sigma^{2}} + \frac{2q_{b}\widetilde{r}a\cos \theta }{\Sigma^{2}}\right),
\end{equation}
and similarly
\begin{equation}\label{c18}
B_{r} = \frac{(\widetilde{r}^{2}+a^{2})\sin\theta}{\sqrt{\Delta}|\partial_{\phi}|}
\left( 1 - \frac{m^{2}-a^{2}-q^{2}}{4r^{2}} \right)
 \left( -\frac{2q_{e}\widetilde{r}a\cos \theta }{\Sigma^{2}} + \frac{q_{b}(\widetilde{r}^{2}-a^{2}\cos^{2}\theta)}{\Sigma^{2}}\right).
\end{equation}
Therefore
\begin{equation}\label{c19}
E_{r} \sim \frac{q_{e}}{r^{2}},\text{ }\text{ }\text{ }\text{ }
E_{\theta} \sim \frac{q_{b}a \sin \theta }{r^{2}},  \text{ }\text{ }\text{ }\text{ }
B_{r} \sim \frac{q_{b}}{r^{2}},\text{ }\text{ }\text{ }\text{ }
B_{\theta} \sim  -\frac{q_{e}a \sin \theta}{r^{2}},\text{ }\text{ }\text{ as }\text{ }\text{ }
r\rightarrow\infty,
\end{equation}
and since $\widetilde{r} = O(r^{-1})$ for small $r$
\begin{equation} \label{c20}
E_{r} = O(1),\text{ }\text{ }\text{ }\text{ }
E_{\theta} = O(r^{2}),  \text{ }\text{ }\text{ }\text{ }
B_{r} = O(1),\text{ }\text{ }\text{ }\text{ }
B_{\theta} = O(r^{2}),\text{ }\text{ }\text{ as }\text{ }\text{ }
r\rightarrow 0.
\end{equation}
This is consistent with \eqref{4} and \eqref{18.1}. Now let us consider the extreme case  $m^{2}= a^{2}+ q^{2}$, in which there is one asymptotically flat end and one asymptotically cylindrical end. Here the radial coordinate change is merely a translation $\widetilde{r}= r+m$, so that
\begin{equation} \label{c21}
E_{r} = O(r^{-1}),\text{ }\text{ }\text{ }\text{ }
E_{\theta} = O(\rho)  ,  \text{ }\text{ }\text{ }\text{ }
B_{r} = O(r^{-1}),\text{ }\text{ }\text{ }\text{ }
B_{\theta} = O(\rho),\text{ }\text{ }\text{ as }\text{ }\text{ }
r\rightarrow 0,
\end{equation}
where $\rho = r \sin \theta$. This is consistent with \eqref{18.2}.

\end{document}